\newcolumntype{C}{>{\centering\arraybackslash}X}
\newcommand{\1}[1]{\mathds{1}\!\left(#1\right)}          
\newcommand{\abs}[1]{\left\lvert #1 \right\rvert}        
\DeclarePairedDelimiterX{\norm}[1]{\lVert}{\rVert}{#1}   
\DeclareMathOperator{\X}{\mathbf{X}}        
\DeclareMathOperator{\x}{\mathbf{x}}        
\DeclareMathOperator{\N}{\mathbb{N}}        
\DeclareMathOperator{\R}{\mathbb{R}}        
\DeclareMathOperator{\Rt}{\mathbb{R}^2}     
\DeclareMathOperator{\So}{\mathbb{S}^1}        
\DeclareMathOperator{\borel}{\mathcal{B}}   
\newtheorem{thm}{Theorem}
\theoremstyle{definition}
\newtheorem{defn}{Definition}
\title{Nonparametric isotropy test for spatial point processes using random rotations}
\author{Chiara~Fend\\
	Department of Mathematics\\
	University of Kaiserslautern-Landau (RPTU)\\
	Kaiserslautern, Germany \\
	\texttt{chiara.fend@rptu.de} \\
	\And Claudia~Redenbach \\
	Department of Mathematics\\
	University of Kaiserslautern-Landau (RPTU)\\
	Kaiserslautern, Germany \\
	\texttt{claudia.redenbach@rptu.de}
}
\date{}
\begin{document}
\maketitle

\begin{abstract}
In spatial statistics, point processes are often assumed to be isotropic meaning that their distribution is invariant under rotations. Statistical tests for the null hypothesis of isotropy found in the literature are based either on asymptotics or on Monte Carlo simulation of a parametric null model. Here, we present a nonparametric test based on resampling the Fry points of the observed point pattern. Empirical levels and powers of the test are investigated in a simulation study for four point process models with anisotropy induced by different mechanisms. Finally, a real data set is tested for isotropy.

\end{abstract}

\keywords{Fry plot \and second-order characteristics \and Monte Carlo test \and resampling \and spatial point processes \and nonparametric tests}

\section{Introduction}
Most established models for spatial point processes assume that the point process is stationary and isotropic. While non-stationarity is often considered, formal testing of the null hypothesis of isotropy has rarely been discussed. Recently, a review of summary statistics for the directional analysis of point pattern has been published \citep{rajala_review_2018}. In particular, second-order methods were shown to be powerful tools for detecting and quantifying the anisotropy of a point pattern, see, e.g., \cite{rajala_estimating_2016,sormani_second_2020,wong_isotropy_2016,moller_cylindrical_2016}.

Functional summary statistics such as directed versions of Ripley's K-function or the nearest neighbour distance distribution function can be turned into real valued test statistics in numerous ways \citep{rajala_tests_2022}. The constructions include test statistics that remind of the classical $\chi^2$, Kolmogorov-Smirnov or Cramér-von-Mises test statistics.
However, the distribution of such test statistics under isotropy is typically not available. Hence, critical values have to be selected based on asymptotic theory \citep{guan_assessing_2006}, replicated data \citep{redenbach_anisotropy_2009} or Monte Carlo simulation of a parametric null model \citep{rajala_tests_2022}. For a general discussion of Monte Carlo tests in spatial statistics we refer to \citet[Section 10.6]{spatstat}. The rationale of this approach is based on the assumption that the observed test statistic and test statistics obtained from simulations of the null model are exchangeable under the null hypothesis \citep{barnard_monte_carlo_1963}. This means that the joint distribution of all the test statistics is invariant under permutation of its arguments \citep{davison_bootstrap_1997} which is often achieved by using i.i.d. samples from a parametric (isotropic) null model.

In our setting, the choice of a parametric null model from which we can generate isotropic point patterns is not straightforward. Even if a candidate model is available, the estimation of its parameters may be heavily affected by anisotropy of the observed point pattern. Thus, we will be interested in nonparametric resampling techniques to generate samples which can be used in the Monte Carlo setting.

This idea was already followed by \citet{wong_isotropy_2016}, who propose a resampling scheme via the stochastic reconstruction approach of \citet{tscheschel_statistical_2006}. A new point pattern is generated by iteratively adapting an arbitrary initial point pattern until some selected empirical functional summary statistics approximate the ones of the observed point pattern. 
Their isotropy test statistic 
uses the reduced second-order moment measure as functional summary statistic.  Hence, the authors propose to use the $k$th nearest-neighbor distance distribution function together with a morphological summary function called the planar convexity number for the reconstruction. Neither statistic takes directional information into account. Therefore, the reconstructed point pattern can be interpreted as an isotropic counterpart of the observed data. \citet{wong_isotropy_2016} show in a simulation study that the resulting nonparametric isotropy test is comparable in power to the asymptotic test of \citet{guan_assessing_2006} with the best parameter combination while being less sensitive with respect to the choice of tuning parameters. 

Another nonparametric approach is proposed in \citet{sormani_second_2020} under the name \emph{projection method}. The main idea is to project the Fry points with norm in a certain range to $\So$ and test the resulting sample of unit vectors for uniformity on $\So$. A selection of uniformity tests can be found in \citet{mardia_directional_1999}. The projection method works well in case of regularity but \citet{sormani_second_2020} state that the theoretical level of the test when using the Bingham test for uniformity is not met in case of clustering. 

We propose an alternative to the work by \citet{wong_isotropy_2016}, which is based on resampling the Fry points rather than the point pattern itself. 
The remainder of the paper is organized as follows.
The notation is introduced in Section~\ref{sec:notation}. In Section~\ref{sec:directional} we present methods for the directional analysis of point processes. In Section~\ref{sec:test} we define the new nonparametric isotropy tests based on random rotations. A simulation study comparing the new tests in presented in Section~\ref{sec:study}. We conclude with the application of the proposed tests to a real data set of amacrine cells (Section~\ref{sec:real-data}) and a discussion of the results (Section~\ref{sec:discussion}).

\section{Preliminaries}\label{sec:notation}

In this section we introduce the necessary notation which is needed for the rest of the paper. A spatial point process $\X$ on $\R^d$, $d >1$, is a random locally finite counting measure on $(\R^d, \borel)$ where $\borel$ denotes the Borel $\sigma$-algebra on $\mathbb{R}^d$. 
We assume that $\X$ is a simple point process which means that $\X(\{x\}) \leq 1$ almost surely for all $x\in \R^d$. In this case, the process $\X$ is often identified with its support $\{x \in \R^d \mid \X(\{x\}) > 0\} = \{X_1, \dots X_n\}$ with $n \in \N \cup \{\infty\}$ which we call a random point pattern. Throughout the paper we use the two representations interchangeably. 

The point process $\X$ is said to be stationary if its distribution is invariant under translations in $\R^d$.
If the distribution of $\X$ is invariant under rotations around the origin, then we call $\X$ isotropic otherwise anisotropic.

In this paper, we assume stationarity of the point process models.
In this case, the expected number of points in a Borel set $B$ is proportional to the Lebesgue measure of the set, i.e. \begin{equation}
\mathbb{E}\left[\X(B)\right] = \lambda \abs{B} \quad \text{for all } B \in \borel.
\end{equation} The constant $\lambda > 0$ is called the intensity of the stationary point process.

We focus on the case $d=2$, i.e. planar point processes. 
Since we are interested in directional properties it is convenient to consider polar coordinates. The $1$-sphere in $\Rt$ is denoted by $\So = \{ (x,y)^T \in \Rt \mid x^2 + y^2 = 1\}$.  A unit vector $u \in \So$ can be represented as $u = u(\alpha) = (\cos(\alpha), \sin(\alpha))^T$ with $\alpha \in [0, 2\pi)$ being the angle from the $x$-axis in counterclockwise direction. We use the notation $\So$ and its representation as interval $[0, 2\pi)$ interchangeably.

Let $b_r(x) \subset \Rt$ be the closed disk with radius $r \geq 0$ centered in $x\in \Rt$. 
A sector of the disk with radius $r$ and bounded by the two angles $\alpha-\varepsilon$ and $\alpha+\varepsilon$ with $0 \leq 2\varepsilon \leq \pi$ is denoted by $S(\alpha, \varepsilon, r)$. The infinite double spherical cone $DC(\alpha, \varepsilon)$ is defined through the central axis spanned by the unit vector $u(\alpha)$ and the half-opening angle $0 < \varepsilon \leq \nicefrac{\pi}{2}$.
We set $DS(\alpha, \varepsilon, r) = DC(\alpha, \varepsilon) \cap b_r(0)$ as the restricted double cone. Furthermore the $2$D cylinder with main axis spanned by the unit vector $u(\alpha) \in \So$,  
radius $w \geq 0$ and half-length $r$ is denoted by $Cyl(\alpha, w, r)$. Figure~\ref{fig:geometry} shows the construction of the sets.

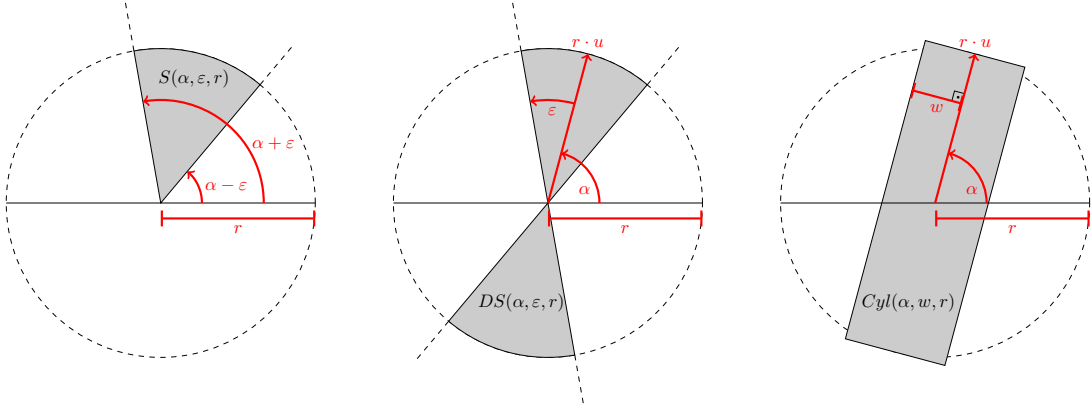
\begin{figure}
	\hspace{1cm}
 \subfloat{\resizebox{0.25\linewidth}{!}{\begin{tikzpicture}
\draw [dashed] circle(3);
\draw [dashed](50:3)-- (50:4);
\draw [dashed](100:3)-- (100:4);
\draw [white](280:3)-- (280:4);
\draw [fill=gray!40](50:3)--(0,0)--(100:3) arc (100:50:3)--cycle;
\draw (-3,0) -- (3,0);
\draw [very thick, ->, red]  (0:0.8)   arc (0:50:0.8)   node[right, pos=.5]{$\alpha-\varepsilon$};
\draw [very thick, ->, red]  (0:2)   arc (0:100:2)   node[right, pos=0.35]{$\alpha +\varepsilon$};
\draw [red, very thick,|-|](0,-0.3) -- (3,-0.3) node[below, pos=.5]{$r$};
\node[very thick]  at (75:2.5) {$S(\alpha, \varepsilon, r)$};
	\end{tikzpicture}}}\hfill
 \subfloat{\resizebox{0.25\linewidth}{!}{\begin{tikzpicture}
\draw [dashed] circle(3);
\draw [dashed](50:3)-- (50:4);
\draw [dashed](100:3)-- (100:4);
\draw [dashed](230:3)-- (230:4);
\draw [dashed](280:3)-- (280:4);
\draw [fill=gray!40](50:3)--(0,0)--(100:3) arc (100:50:3)--cycle;
\draw [fill=gray!40](230:3)--(0,0)--(280:3) arc (280:230:3)--cycle;
\draw (-3,0) -- (3,0);
\draw [very thick, ->, red](0,0)-- (75:3) node[above]{$r\cdot u$};
\draw [very thick, ->, red]  (0:1)   arc (0:75:1)   node[left, pos=.2]{$\alpha$};
\draw [very thick, ->, red]  (75:2)   arc (75:100:2)   node[below, pos=0.5]{$\varepsilon$};
\draw [red, very thick,|-|](0,-0.3) -- (3,-0.3) node[below, pos=.5]{$r$};
\node[very thick]  at (255:2) {$DS(\alpha, \varepsilon, r)$};
	\end{tikzpicture}}}\hfill
\subfloat{\resizebox{0.25\linewidth}{!}{\begin{tikzpicture}
\draw [white] (100:3)-- (100:4);
\draw [white](280:3)-- (280:4);
	\draw [dashed] circle(3);
    \draw [fill=gray!40] ({3*cos(75)-sin(75)}, {3*sin(75)+cos(75)}) -- ({3*cos(75)+sin(75)}, {3*sin(75)-cos(75)}) -- ({-3*cos(75)+sin(75)}, {-3*sin(75)-cos(75)}) -- ({-3*cos(75)-sin(75)}, {-3*sin(75)+cos(75)}) -- cycle;
    \draw (-3,0) -- (3,0);
    \draw ({2*cos(75)-0.2*sin(75)}, {2*sin(75)+0.2*cos(75)}) -- ({2.2*cos(75)-0.2*sin(75)},{2.2*sin(75)+0.2*cos(75)}) -- ({2.2*cos(75)},{2.2*sin(75)});
    \fill ({2.1*cos(75)-0.1*sin(75)}, {2.1*sin(75)+0.1*cos(75)}) circle(0.75pt);
    \draw [very thick, ->, red](0,0)-- (75:3) node[above]{$r\cdot u$};
    \draw [very thick, ->, red]  (0:1)   arc (0:75:1)   node[left, pos=.2]{$\alpha$};
    \draw [very thick, |-|, red]  (75:2)  -- ({acos(2/sqrt(5))+75}:{sqrt(5)})  node[below, pos=.5]{$w$};
    \draw [red, very thick,|-|](0,-0.3) -- (3,-0.3) node[below, pos=.5]{$r$};
    \node[very thick]  at (255:2) {$Cyl(\alpha, w, r)$};
\end{tikzpicture}}}\hspace{1cm}
\caption{Representation of the directed sets used in the directional analysis of planar point patterns. From left to right: Sector $S(\alpha, \varepsilon, r)$, restricted double cone $DS(\alpha, \varepsilon, r)$, 2D cylinder $Cyl(\alpha, w, r)$.}
\label{fig:geometry}
\end{figure}

The translation of a set $A \in \borel$ by a vector $z \in \Rt$ is defined as $A_z = \{ a+z \mid a \in A\}$ and the indicator function of the set is denoted as $\1{ \cdot \in A}$.

\section{Directional methods for point processes}\label{sec:directional}

We refer to \citet{rajala_review_2018} for a detailed review on available methods for the directional analysis of point process. We focus on using the second-order structure of the process since a previous study \citep{rajala_tests_2022} has shown, that parametric isotropy tests based on second-order characteristics generally perform well for many types of isotropic null models. Other approaches investigated there were not as robust. For instance, directional nearest neighbour methods showed a poor performance when used for clustered point patterns.

In the following sections we let $\x = \X \cap W = \{x_1, \dots, x_n\}$ with $2 \leq n < \infty$ denote the observed point pattern in the observation window $W$ with $0 < \abs{W} < \infty$.

\subsection{Fry points and Fry plot}
Anisotropy can be hard to detect by pure visual inspection of a point pattern $\x$. In some cases, it is easier to see when inspecting the Fry points $F_{\x}$ of $\x$ as proposed in \citet{fry_random_1979}. 
The multiset $F_{\x}$ contains the pairwise difference vectors between all individual points in $\x$, that is, \begin{equation}
    F_{\x} = \{ x_i - x_j \mid x_i, x_j \in \x, \, i \neq j\}.
\end{equation}
By construction, $F_{\x}$ is point symmetric around the origin. The Fry plot displays all Fry points as a scatter plot, see Figure~\ref{fig:fry}. For regular point processes, we generally observe a void around the origin. If the process is isotropic, then the void resembles a disc. Hence, deviations of the void shape from a disc may indicate anisotropy. 
For clustered point processes, many Fry points are found close to the origin and a possible anisotropy becomes visible through the shape of this cluster.

\begin{figure}[th]
\captionsetup[subfloat]{aboveskip=0.25pt}
	\subfloat[isotropic and regular]{\includegraphics[width=0.245\textwidth]{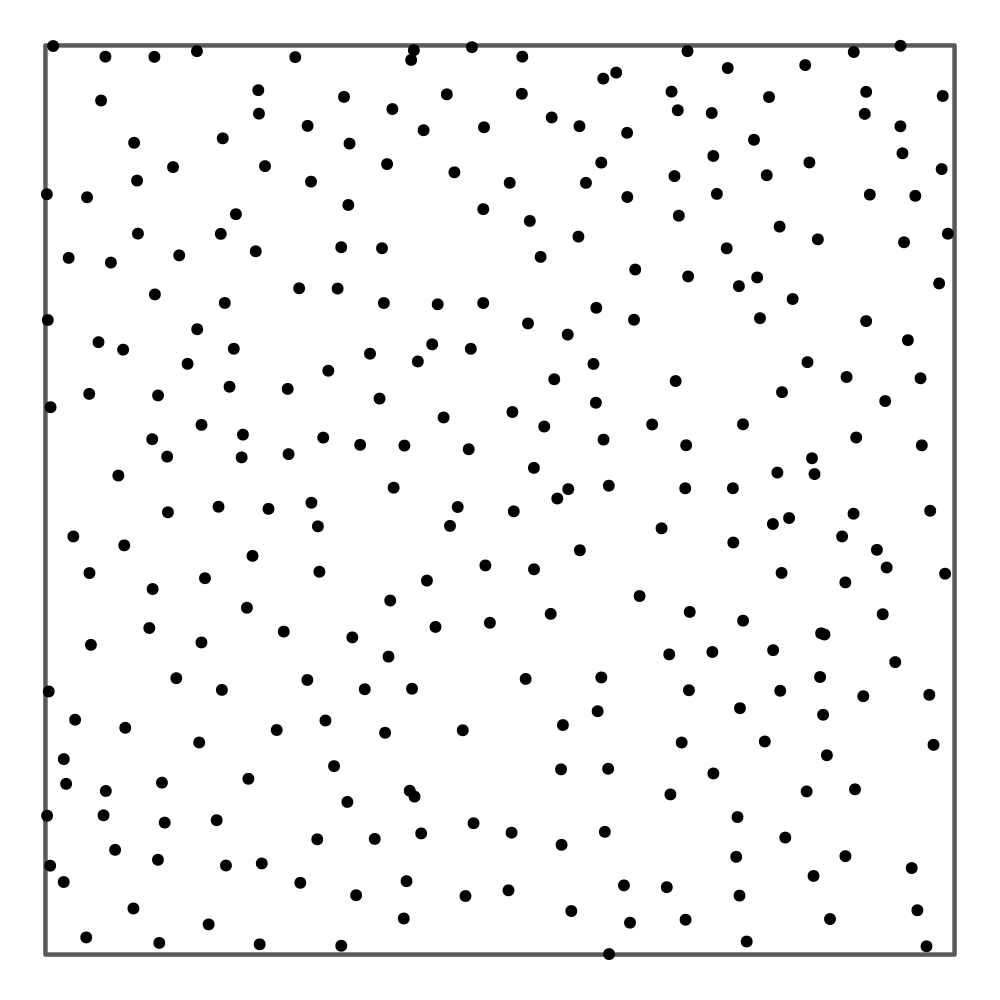}\hfill
 \includegraphics[width=0.245\textwidth]{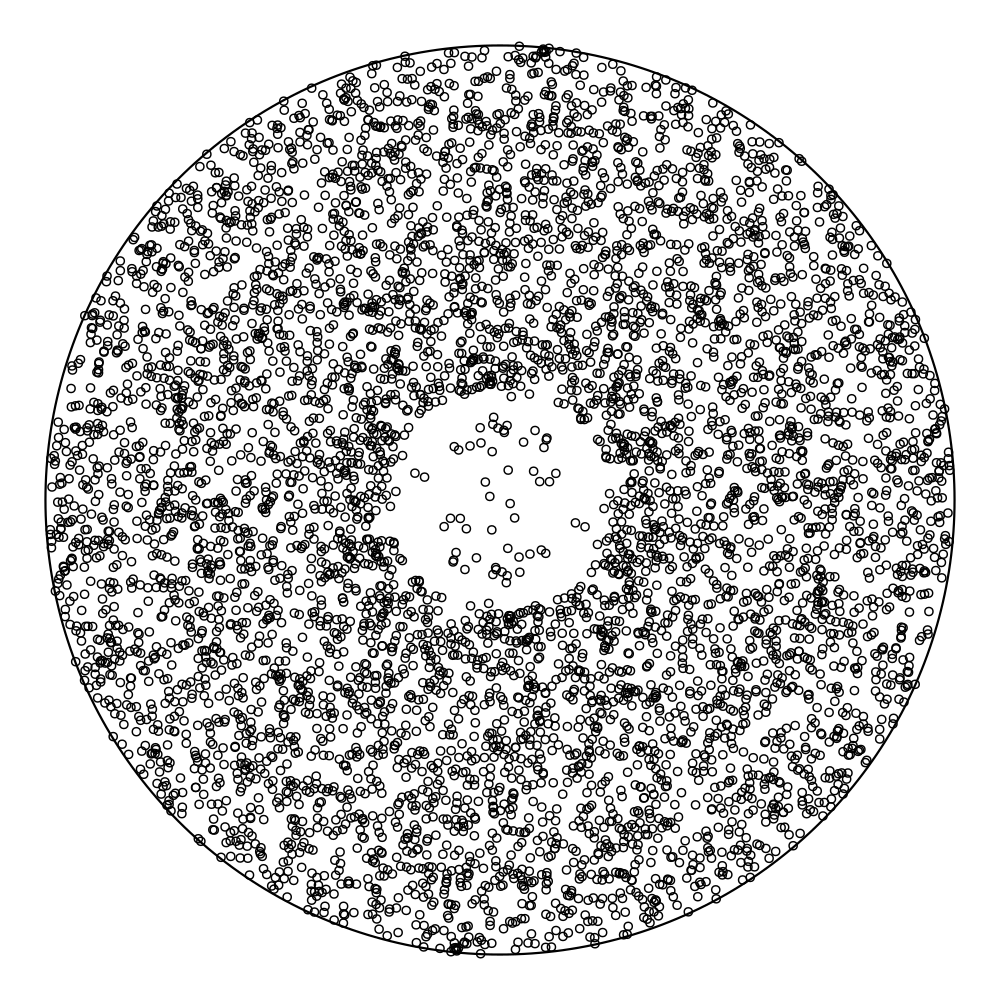}} \hfill
	\subfloat[isotropic and clustered]{\includegraphics[width=0.245\textwidth]{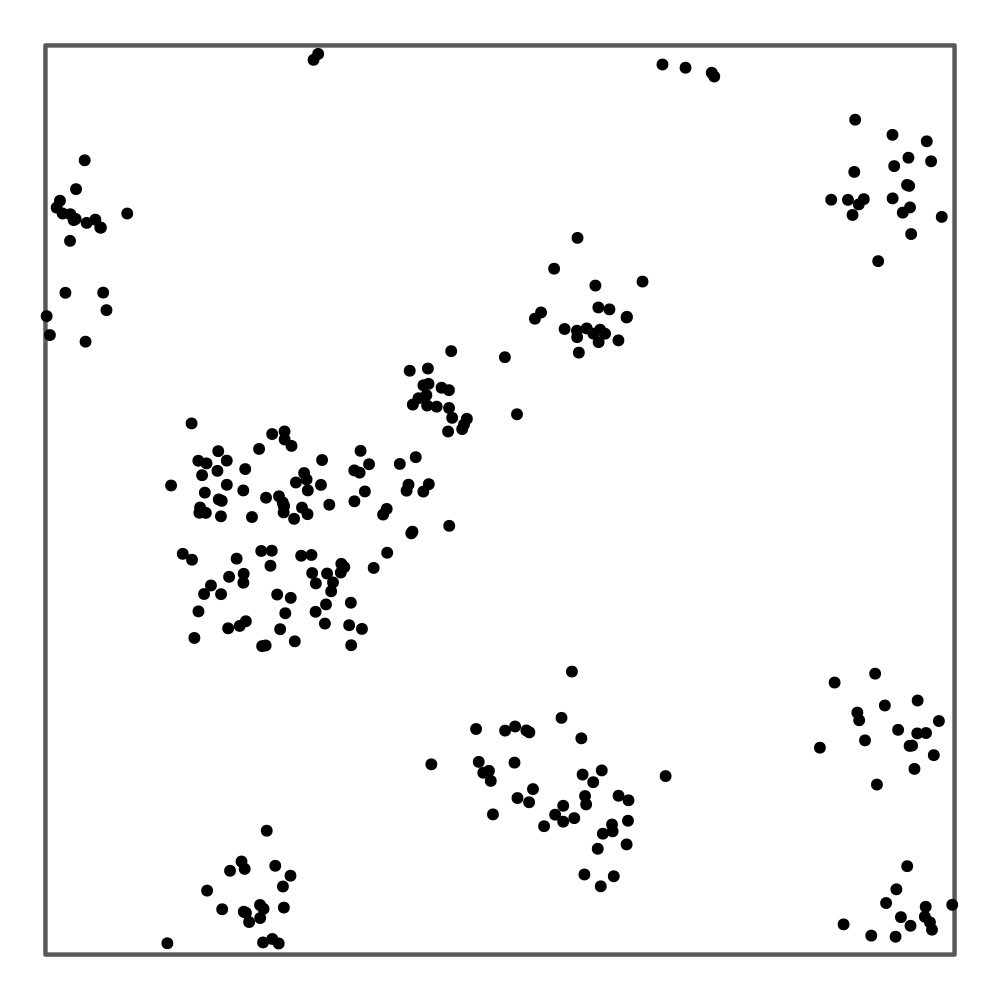}\hfill\includegraphics[width=0.245\textwidth]{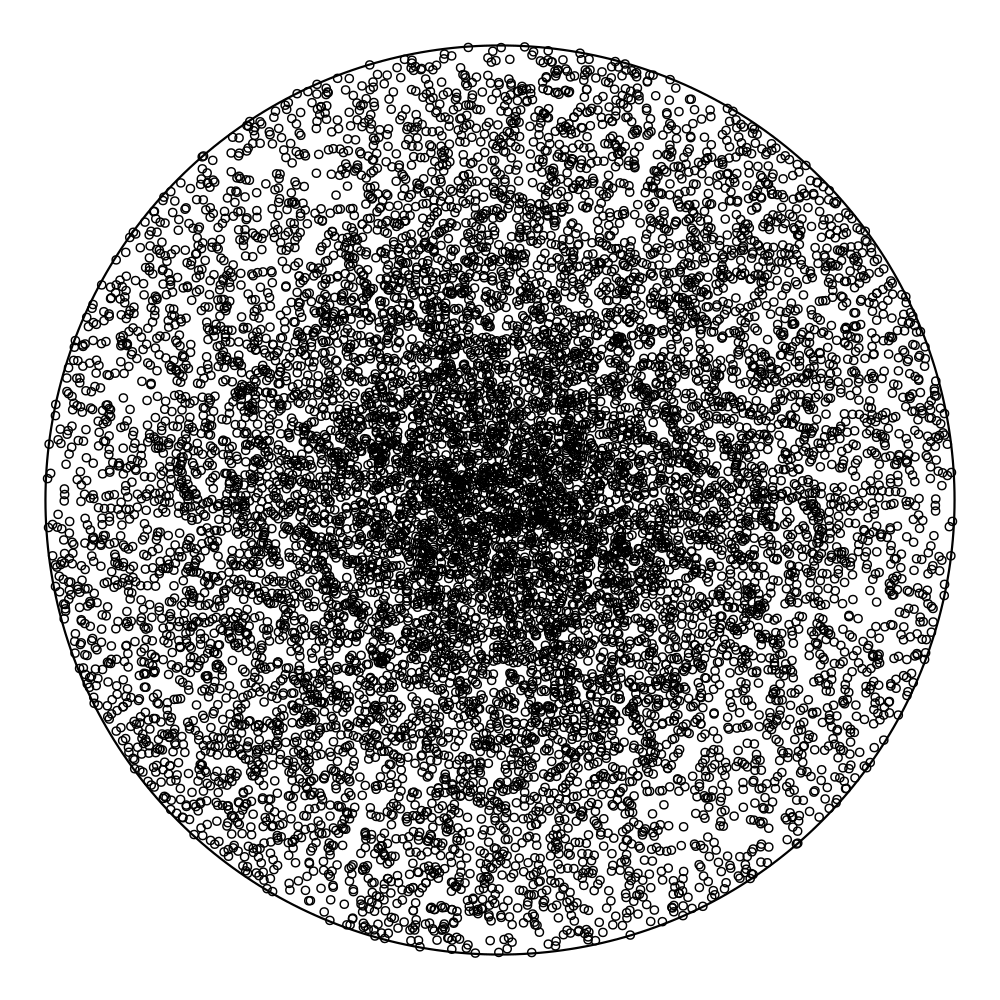}}\\
    \subfloat[anisotropic and regular]{\includegraphics[width=0.245\textwidth]{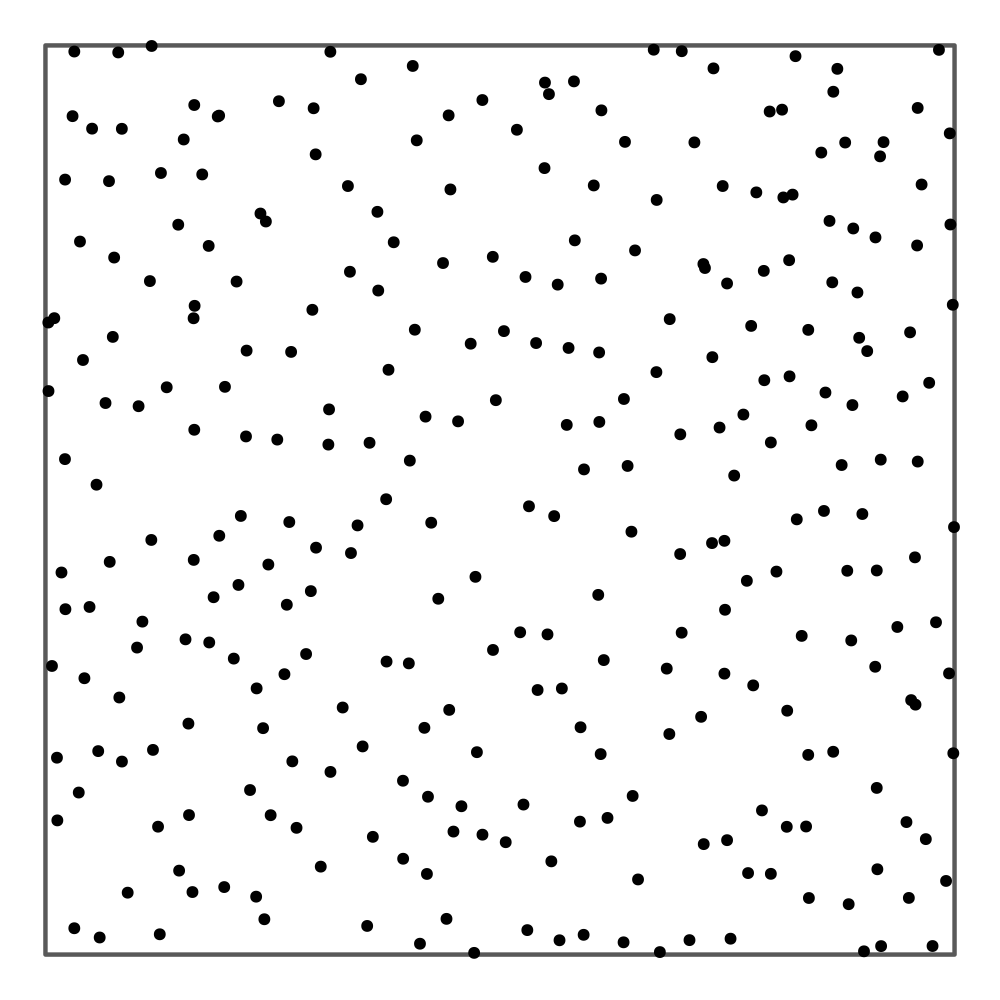}\hfill\includegraphics[width=0.245\textwidth]{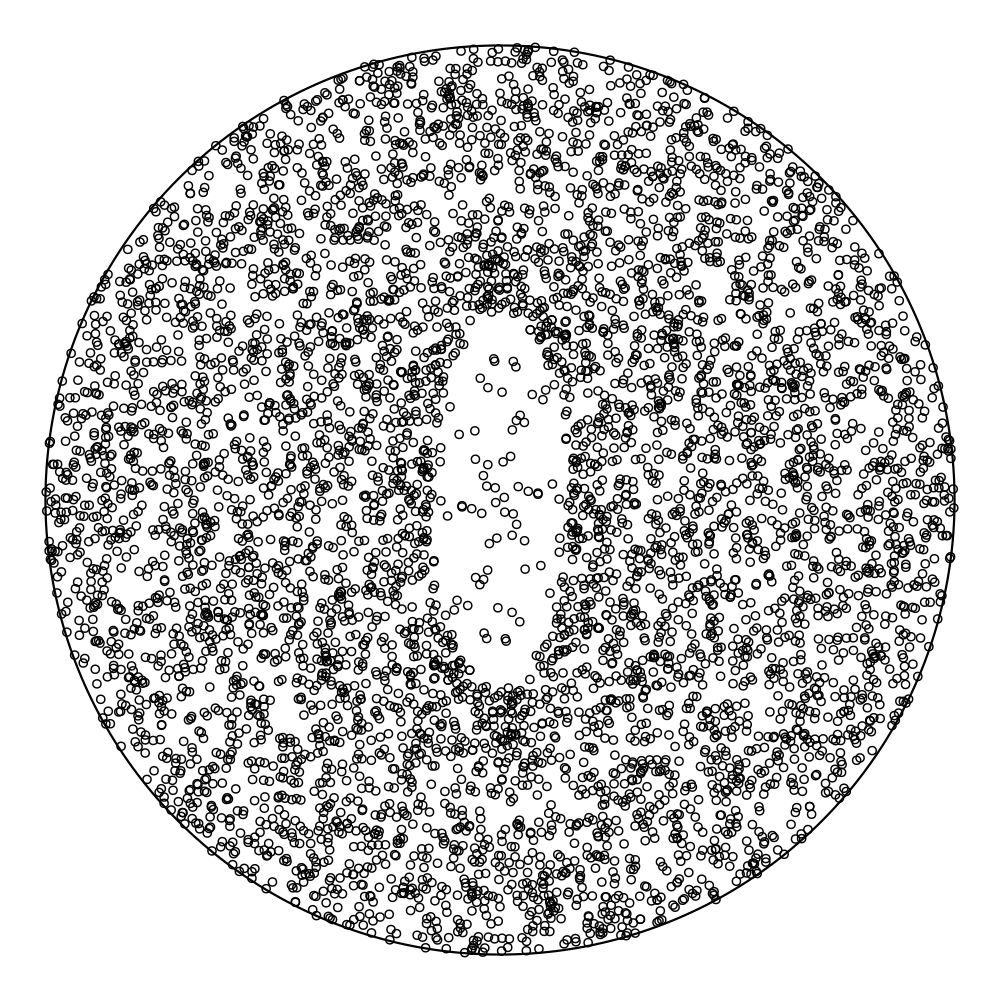}} \hfill
	\subfloat[anisotropic and clustered]{\includegraphics[width=0.245\textwidth]{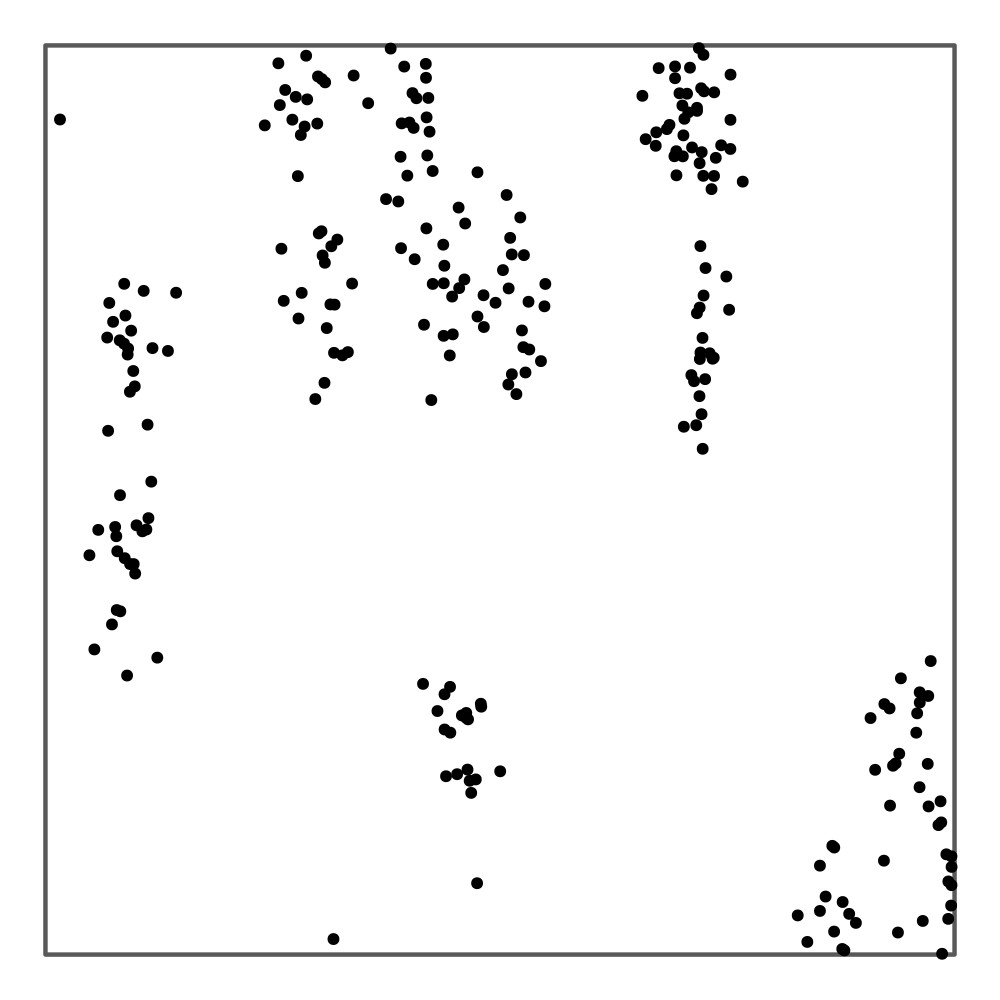}\hfill\includegraphics[width=0.245\textwidth]{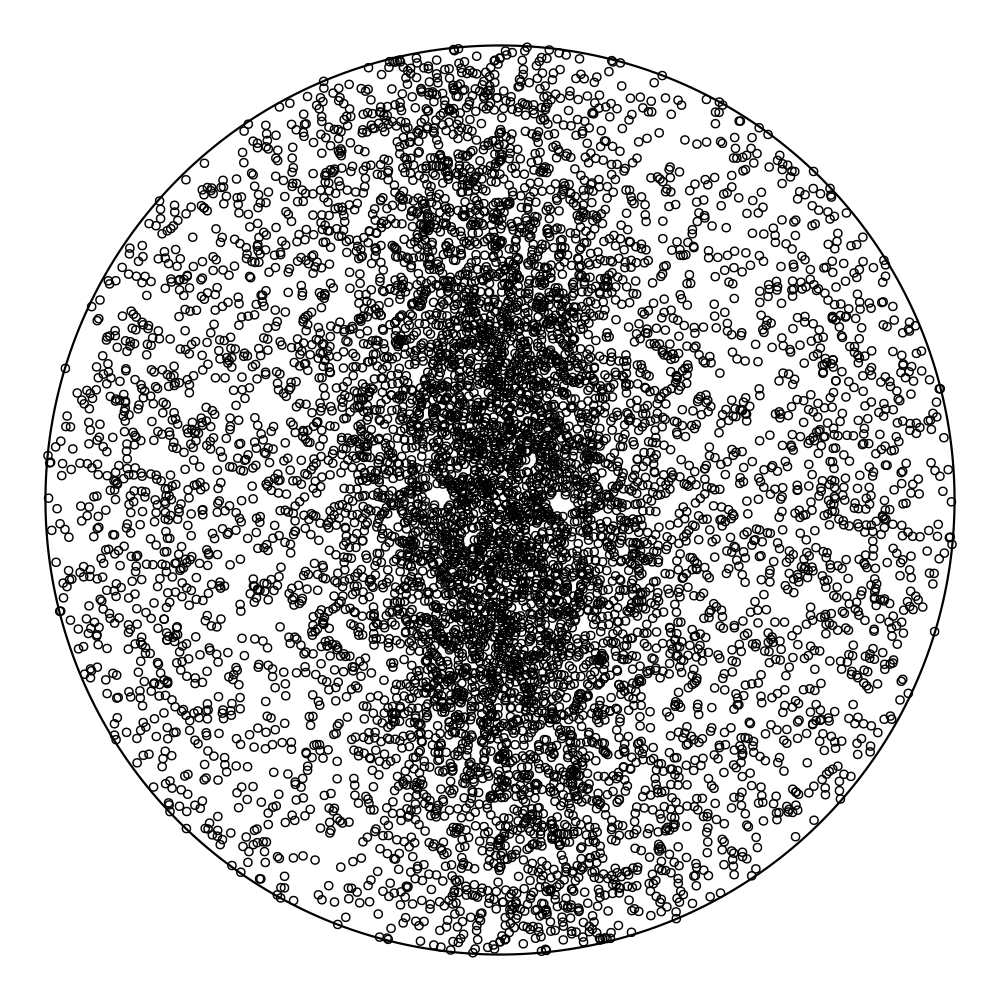}}
	\caption{Examples of point patterns and the corresponding Fry plots. The stationary processes have intensity $\lambda = 50$ and are shown in the observation window $[-\sqrt{6}/2, \sqrt{6}/2]^2$ with each pattern containing approx. $300$ points. The Fry plots show all Fry points with norm up to $0.4$.}
	\label{fig:fry}
\end{figure}

The Fry points are also directly related to the second-order characteristics of a stationary point process which describe the interactions of pairs of points.

\subsection{Directional K-functions}

The reduced second-order moment measure $\mathcal{K}$ of a stationary point process $\X$ with intensity $\lambda$ is defined as
\begin{equation}
\mathcal{K}(B) = \frac{1}{\abs{A}}\mathbb{E}\left[ \sum_{x,y \in \X}^{\neq} \frac{\1{x\in A, x-y \in B}}{\lambda^2}\right] \qquad \text{for} \qquad B \in \borel.
\end{equation}
It does not depend on the choice of the test set $A \in \borel$ with $0 < \abs{A} < \infty$. The $\neq$ sign above the summation indicates that only pairs with $x\neq y$ are considered.
If the point process $\X$ is stationary, we can interpret $\lambda \mathcal{K}(B)$ as the expected number of points in the Borel set $B$ conditional on $\X$ having a point at the origin, not counting the origin if it falls in $B$.

The measure $\mathcal{K}$ is used to define many functional summary statistics. Choosing the Borel set as the ball with radius $r \geq 0$ around the origin, i.e., $B = b_r(o)$, yields Ripley's $K$-function. In the directional analysis of point processes, $B$ is replaced by some directed set. In particular, \citet{redenbach_anisotropy_2009} choose the double cone $DS(\alpha, \varepsilon, r)$ while \citet{moller_cylindrical_2016} use
a directed cylinder $Cyl(\alpha, w, r)$ and \citet{wong_isotropy_2016} the sector $S(\alpha, \varepsilon, r)$.

In this paper we use $B=S(\alpha, \varepsilon, r)$ which defines the sector $K$-function $K_{\textrm{sect}, \alpha, \varepsilon}$ as \begin{equation}
    K_{\textrm{sect}, \alpha, \varepsilon}(r) = \mathcal{K}(S(\alpha, \varepsilon, r)).
\end{equation}

The different types of directional $K$-functions can be estimated from an observed pattern $\x$ in the observation window $W$, $0 < \abs{W} < \infty$, using nonparametric estimators. These estimators account for edge-effects due to the bounded observation window. One such estimator of $\mathcal{K}(B)$ for a stationary process is given through translational edge-correction as 
\begin{equation}\label{eq:K-estimator}
\widehat{\mathcal{K}}(B) = \frac{1}{\widehat{\lambda^2}} \sum_{x,y \in \x}^{\neq} \frac{\1{x-y \in B}}{\abs{W \cap W_{x-y}}} \qquad \text{with} \qquad \widehat{\lambda^2} = \frac{n(n -1)}{\abs{W}^2}.
\end{equation}

The squared intensity estimate $\widehat{\lambda^2}$ is unbiased in case of a Poisson process \citep{moller_statistical_2003}. The combined estimate $\widehat{\mathcal{K}}(B)$ is then ratio-unbiased.

The estimator \eqref{eq:K-estimator} only depends on the difference vectors $x-y$. Hence, it can also be formulated in terms of the Fry points, namely
\begin{equation}
    \widehat{\mathcal{K}}(B) = \frac{1}{\widehat{\lambda^2}} \sum_{z \in F_{\x}} \frac{\1{z \in B}}{\abs{W \cap W_{z}}}.
\end{equation}
The number of Fry points 
in the set $B$  is consequently an unnormalized and uncorrected estimate of the reduced second-order moment measure. 

\subsection{Contrast summary statistics}\label{sec:sumstat}

Under isotropy, estimates of directional summary statistics for different directions should be similar. Hence, anisotropy can be detected by contrasting estimates for two (mostly orthogonal) directions. 
Contrasts can either be in form of pointwise differences or in form of pointwise ratios. 

In the first case of pointwise differences we consider the same shape of the directed set but with two different central axes. Under isotropy, we expect that the differences are small. Thus, large, in absolute terms, contrast values consequently indicate anisotropy. In case of the sector $K$-function, we will denote the functional contrast statistic by \begin{equation}
    K^c_{\mathrm{sect, \alpha_1, \alpha_2, \varepsilon}}(r) = K_{\textrm{sect}, \alpha_1, \varepsilon}(r) - K_{\textrm{sect}, \alpha_2, \varepsilon}(r).
\end{equation}

Alternatively, \citet{wong_isotropy_2016} fix a radius $r \geq 0$ and a main direction given by $\alpha \in [0, \pi]$ and compute the ratio \begin{equation}
    F_{r, \alpha}(\varepsilon) = \frac{\mathcal{K}(S(\alpha, \varepsilon, r))}{\mathcal{K}(S(\frac{\pi}{2}, \frac{\pi}{2}, r))} 
    = \frac{K_{\textrm{sect}, \alpha, \varepsilon}(r)}{K_{\textrm{sect}, \frac{\pi}{2}, \frac{\pi}{2}}(r)}, \quad \varepsilon \in [0, \frac{\pi}{2}],
\end{equation}
which results in comparing the value of the reduced second-order moment measure for a sector with the one of the entire ball, which is Ripley's $K$-function. Under isotropy, $F_{r, \alpha}$ is the distribution function of the uniform distribution which we will show in Theorem~\ref{thm:fraction-cdf}. 

\subsection{Second-order properties under isotropy and stationarity}

The reduced second-order moment measure $\mathcal{K}$ of the stationary point process $\X$ was defined through a normalization by the squared intensity $\lambda^2$. Alternatively, one could also investigate the reduced measure $\bar{\mu}^{(2)}$ on the difference vectors defined as 
\begin{equation}
    \bar{\mu}^{(2)}(B) = \lambda^2 \mathcal{K}(B) = \mathbb{E}\left[ \sum_{x,y \in \X}^{\neq} \1{x \in [0,1]^2, \,y-x \in B} \right] \quad \text{for } B \in \borel.
\end{equation}

Note, that also any other test set with unit volume can be chosen instead of $[0,1]^2$ in the expectation.

If we parameterize the difference vector $z = y-x \in \R^d$ using its polar coordinates $z = r (\sin(\theta), \cos(\theta))$, we obtain the reduced measure $\bar{\mu}^{(2)}$ as a measure on the Borel $\sigma$-algebra of $\R_{\geq 0} \times \So$. \citet{daley_intro_2007} use a normalization called the second-order directional rose to investigate if the process is isotropic or not.

\begin{defn}
Let $0 \leq r < \infty$. The second-order directional rose $\Gamma_2(\cdot \mid r)$ of a stationary planar point process $\X$ is a probability measure on $\So$ which is defined as \begin{equation}
    \Gamma_2(\mathrm{d}\theta \mid r) = \frac{\bar{\mu}^{(2)}(\mathrm{d}r \times \mathrm{d}\theta)}{\int_0^{2\pi} \bar{\mu}^{(2)}(\mathrm{d}r \times \mathrm{d}\theta)}.
\end{equation}
It can be interpreted as the probability that the difference vector of two points of $\X$ at distance $r$ points in a direction in the interval $(\theta, \theta + \mathrm{d}\theta)$.
\end{defn}

In case of isotropy one can show the following result.

\begin{thm}{\citep[Prop.~15.2.II]{daley_intro_2007}}\label{thm:uniform}
If the planar point process $\X$ is stationary and isotropic, then for all $0 \leq r < \infty$ and $\theta \in [0, 2\pi]$\begin{equation}
    \Gamma_2(\mathrm{d}\theta \mid r) = (2\pi)^{-1} \mathrm{d}\theta.
\end{equation}
In other words, the second-order directional rose reduces to the uniform distribution on $\So$.
\end{thm}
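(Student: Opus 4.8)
The plan is to show that stationarity and isotropy force the reduced measure $\bar{\mu}^{(2)}$, viewed as a measure on the Borel sets of $\R_{\geq 0}\times\So$, to factorize into a radial part times the uniform distribution on the angular part; once this is established, the normalization in the definition of $\Gamma_2(\cdot\mid r)$ cancels the radial factor and leaves $(2\pi)^{-1}\mathrm{d}\theta$.

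\textbf{Step 1 (pick a rotation-invariant test set).} I would first exploit the remark that any unit-volume test set may replace $[0,1]^2$ in the defining expectation of $\bar{\mu}^{(2)}$. Choosing the centred disk $A = b_{\rho_0}(o)$ with $\rho_0 = \pi^{-1/2}$, so that $\abs{A}=1$ and $\vartheta A = A$ for every rotation $\vartheta$ about the origin, we may write
\[
\bar{\mu}^{(2)}(B) = \mathbb{E}\!\left[ \sum_{x,y \in \X}^{\neq} \1{x \in A,\ y-x \in B} \right], \qquad B \in \borel .
\]

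\textbf{Step 2 (rotation invariance of $\bar{\mu}^{(2)}$).} Next I would show $\bar{\mu}^{(2)}(\vartheta B) = \bar{\mu}^{(2)}(B)$ for every rotation $\vartheta$ about the origin. By isotropy the rotated pattern $\vartheta^{-1}\X$ has the same distribution as $\X$, so $\bar{\mu}^{(2)}(B)$ is equally obtained by summing over pairs of points of $\vartheta^{-1}\X$. Substituting $x = \vartheta^{-1}u$, $y = \vartheta^{-1}v$ with $u,v$ ranging over the points of $\X$, and using $\vartheta^{-1}u \in A \iff u \in \vartheta A = A$ together with $y-x = \vartheta^{-1}(v-u) \in B \iff v-u \in \vartheta B$, the sum becomes $\sum_{u,v\in\X}^{\neq} \1{u \in A,\ v-u \in \vartheta B}$, whose expectation is $\bar{\mu}^{(2)}(\vartheta B)$. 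This change of variables inside the random sum (formally justified by the mapping/second-order Campbell argument for the second factorial moment measure) is the step I expect to require the most care; choosing $A$ rotation invariant is precisely what makes the test set drop out without any further appeal to its independence from the choice of $A$.

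\textbf{Step 3 (factorization in polar coordinates and conclusion).} Finally I would transfer rotation invariance to polar coordinates. A rotation about the origin acts on $\R_{\geq 0}\times\So$ only through the angular coordinate, so for each fixed Borel set $C \subseteq \R_{\geq 0}$ the map $D \mapsto \bar{\mu}^{(2)}(C\times D)$ is a finite, rotation-invariant measure on the compact group $\So \cong SO(2)$, hence a constant multiple of normalized Haar measure by uniqueness of Haar measure. Equivalently, $\bar{\mu}^{(2)}(\mathrm{d}r\times\mathrm{d}\theta) = \rho(\mathrm{d}r)\,(2\pi)^{-1}\mathrm{d}\theta$ for some radial measure $\rho$ on $\R_{\geq 0}$. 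Substituting this into the definition of $\Gamma_2$, the factor $\rho(\mathrm{d}r)$ appears in both numerator and denominator and cancels (wherever the denominator is positive, which is exactly where the directional rose is defined), so $\Gamma_2(\mathrm{d}\theta\mid r) = (2\pi)^{-1}\mathrm{d}\theta$ for all admissible $r$, as claimed.
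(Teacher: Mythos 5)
Your argument is correct, but note that the paper does not prove this statement at all: it imports it verbatim from \citet[Prop.~15.2.II]{daley_intro_2007}, so there is no in-paper proof to compare against. What you have written is essentially the standard argument underlying that cited proposition: (i) the reduced second factorial moment measure is rotation invariant under isotropy, and (ii) a rotation-invariant measure on $\So$ is a multiple of Haar measure, so $\bar{\mu}^{(2)}$ factorizes as a radial measure times $(2\pi)^{-1}\mathrm{d}\theta$ and the radial part cancels in the disintegration defining $\Gamma_2(\cdot\mid r)$. Your choice of a rotation-invariant unit-volume test set in Step 1 is a clean way to avoid re-proving independence of the test set, and your change of variables in Step 2 is exactly right. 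Two small points you should make explicit: the Haar-uniqueness step needs $D\mapsto\bar{\mu}^{(2)}(C\times D)$ to be a \emph{finite} measure, which holds for bounded $C$ by the (standing) local finiteness of the second moment measure but should be stated, since an infinite rotation-invariant measure on $\So$ need not be a multiple of Lebesgue measure; and the conclusion holds for the canonical version of the disintegration, i.e.\ for $\rho$-almost every $r$ (equivalently wherever the normalizing denominator is positive), which you already acknowledge and which matches how the cited result is meant.
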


It is possible to express the directional rose in terms of the measure $\mathcal{K}$, again in polar coordinates, as \begin{equation}
    \Gamma_2(\mathrm{d}\theta \mid r) = \frac{\mathcal{K}(\mathrm{d}r \times \mathrm{d}\theta)}{\int_0^{2\pi} \mathcal{K}(\mathrm{d}r \times \mathrm{d}\theta)}. \label{eq:K-rose}
\end{equation}
Under isotropy, this implies that the angle of a Fry point $z = y-x$ with norm $r > 0$ is uniformly distributed on $[0, 2\pi)$.
This property will be the main ingredient of our resampling approach.

Additionally, these results can be used to prove that under isotropy the functional summary statistic $F_{r, \alpha}$ that appears in the test statistic of \citet{wong_isotropy_2016} indeed coincides with the distribution function of the uniform distribution on $[0, \frac{\pi}{2}]$.

\begin{thm}\label{thm:fraction-cdf}
    Let $\X$ be a stationary and isotropic planar point process, then for all $r \geq 0$, $\alpha \in [0, 2\pi)$ and $\varepsilon \in [0, \frac{\pi}{2}]$ it holds that \begin{equation}
        F_{r, \alpha}(\varepsilon) = \frac{\mathcal{K}(S(\alpha, \varepsilon, r))}{\mathcal{K}(S(\frac{\pi}{2}, \frac{\pi}{2}, r))} = \frac{\varepsilon}{\nicefrac{\pi}{2}}.
    \end{equation}
\end{thm}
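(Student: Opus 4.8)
The plan is to reduce both numerator and denominator to the radial marginal of $\mathcal{K}$, using the polar‑coordinate disintegration of $\mathcal{K}$ together with the isotropic form of the directional rose in \eqref{eq:K-rose}.

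First I would write the sector $K$-function in polar coordinates. Identifying $\mathcal{K}$ with a measure on $\R_{\geq 0}\times\So$ exactly as in the paragraph preceding \eqref{eq:K-rose}, the sector $S(\alpha,\varepsilon,r)$ is precisely the set of difference vectors with norm in $[0,r]$ and angle in $[\alpha-\varepsilon,\alpha+\varepsilon]$, so
\[
\mathcal{K}(S(\alpha,\varepsilon,r)) = \int_0^r\!\!\int_{\alpha-\varepsilon}^{\alpha+\varepsilon}\mathcal{K}(\mathrm{d}s\times\mathrm{d}\theta).
\]
Next I would invoke isotropy: by \eqref{eq:K-rose} and the $\mathcal{K}$-version of Theorem~\ref{thm:uniform}, under stationarity and isotropy the conditional angular distribution is uniform, $\Gamma_2(\mathrm{d}\theta\mid s) = (2\pi)^{-1}\,\mathrm{d}\theta$ for every $s$. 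Writing $\nu(\mathrm{d}s) = \int_0^{2\pi}\mathcal{K}(\mathrm{d}s\times\mathrm{d}\theta)$ for the radial marginal measure, the disintegration becomes $\mathcal{K}(\mathrm{d}s\times\mathrm{d}\theta) = (2\pi)^{-1}\,\mathrm{d}\theta\,\nu(\mathrm{d}s)$. Substituting and doing the inner integral over the angle — the angular window $[\alpha-\varepsilon,\alpha+\varepsilon]$ has arc length $2\varepsilon$ — gives $\mathcal{K}(S(\alpha,\varepsilon,r)) = \tfrac{2\varepsilon}{2\pi}\,\nu([0,r]) = \tfrac{\varepsilon}{\pi}\,\nu([0,r])$. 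Applying the identical computation to the denominator with $\alpha=\varepsilon=\nicefrac{\pi}{2}$, where the angular window $[0,\pi]$ has arc length $\pi$, yields $\mathcal{K}(S(\nicefrac{\pi}{2},\nicefrac{\pi}{2},r)) = \tfrac12\,\nu([0,r])$ (note this set is a half‑disc rather than the full ball $b_r(o)$, but by the point symmetry $\mathcal{K}(\,\cdot\,)=\mathcal{K}(-\,\cdot\,)$ it equals half of Ripley's $K$-function, which is consistent). Taking the ratio, the common radial factor $\nu([0,r])$ cancels and we obtain $F_{r,\alpha}(\varepsilon) = (\varepsilon/\pi)\big/(1/2) = \varepsilon/\nicefrac{\pi}{2}$.

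I do not expect a genuine obstacle here: the argument is just the change to polar coordinates plus the uniformity of the directional rose, and the Fubini‑type exchange in the disintegration is legitimate since the integrand is nonnegative. The only point needing a word of care is well‑definedness — the identity is meaningful exactly when $0<\nu([0,r])<\infty$, equivalently $0<\mathcal{K}(S(\nicefrac{\pi}{2},\nicefrac{\pi}{2},r))<\infty$, which is implicit in even forming the ratio $F_{r,\alpha}$; when $\nu([0,r])=0$ both sides degenerate to $0/0$, and when $\varepsilon=0$ both sides are $0$. The main thing to get right is simply the bookkeeping of the two arc lengths $2\varepsilon$ and $\pi$.
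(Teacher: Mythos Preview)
Your proposal is correct and follows essentially the same route as the paper: pass to polar coordinates, disintegrate $\mathcal{K}$ via the directional rose \eqref{eq:K-rose}, use Theorem~\ref{thm:uniform} to replace the angular part by the uniform density, and cancel the common radial factor (your $\nu$ is the paper's $\mathcal{T}$). The only cosmetic difference is that the paper first establishes the identity for a generic polar rectangle $[r_1,r_2]\times[\theta_1,\theta_2]$ before specializing to the sector, whereas you go directly to $S(\alpha,\varepsilon,r)$; your added remarks on well-definedness and the half-disc are sound but not present in the paper.
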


\begin{proof}
    Consider first $0 \leq r_1 < r_2$ and $0 \leq \theta_1 < \theta_2 \leq 2\pi$. Let $\mathcal{T}(\mathrm{d}r) = \int_0^{2\pi} \mathcal{K}(\mathrm{d}r \times \mathrm{d}\theta)$ denote the radial measure appearing in the denominator of the directional rose. 
    We can compute \begin{align}
        \mathcal{K}([r_1, r_2] \times [\theta_1, \theta_2]) &= \int_{r_1}^{r_2}\int_{\theta_1}^{\theta_2} \mathcal{K}(\mathrm{d}r \times \mathrm{d}\theta) 
        \stackrel{\mathclap{\eqref{eq:K-rose}}}{=} \int_{r_1}^{r_2}\int_{\theta_1}^{\theta_2} \Gamma_2(\mathrm{d}\theta \mid r) \cdot \mathcal{T}(\mathrm{d}r)  
        \\ &\stackrel{\mathclap{\text{Thm.}~\ref{thm:uniform}}}{=} \int_{r_1}^{r_2}\int_{\theta_1}^{\theta_2} \frac{1}{2\pi}  \mathrm{d}\theta \, \mathcal{T}(\mathrm{d}r)
        = \int_{r_1}^{r_2} \frac{\theta_2 - \theta_1}{2\pi} \mathcal{T}(\mathrm{d}r)\\ &= \frac{\theta_2 - \theta_1}{2\pi} \int_{r_1}^{r_2}\int_0^{2\pi} \mathcal{K}(\mathrm{d}r \times \mathrm{d}\theta) = \frac{\theta_2 - \theta_1}{2\pi} \mathcal{K}([r_1, r_2] \times [0, 2\pi]).
    \end{align}
    The sector $S(\alpha, \varepsilon, r)$ written in polar coordinates results is the rectangular set $[0,r] \times [\alpha-\varepsilon, \alpha+\varepsilon]$. Consequently we obtain \begin{align}
        F_{r, \alpha}(\varepsilon) = \frac{\mathcal{K}(S(\alpha, \varepsilon, r))}{\mathcal{K}(S(\frac{\pi}{2}, \frac{\pi}{2}, r))} = \frac{\frac{2\varepsilon}{2\pi} \mathcal{K}([0, r] \times [0, 2\pi])}{\frac{\pi}{2\pi} \mathcal{K}([0, r] \times [0, 2\pi])} = \frac{\varepsilon}{\nicefrac{\pi}{2}}.
    \end{align}
\end{proof}

\section{Nonparametric isotropy tests}\label{sec:test}
In this section we propose an isotropy test based on isotropic resampling of the Fry points. 

\subsection{Random rotation of Fry points} 
We aim at resampling Fry-type point patterns that reflect a similar second-order structure as the observed point pattern while being rotational invariant. We will achieve this through random rotations of the Fry points. In the following, we will refer to the resampled patterns as \emph{bootstrap samples} even though we do not purely resample from the observed Fry points as in classical bootstrap. 

The main theoretical ingredient of our approach is that the sum of independent uniformly distributed random variables on $\So$ again follows a uniform distribution on $\So$ \citep{mardia_directional_1999}. 
In theory, Theorem~\ref{thm:uniform} states that under isotropy the angle distribution of the Fry points having any fixed norm $r$ coincides with the uniform distribution. Hence, we can add independent uniformly distributed angles, i.e. rotate the Fry point around the origin, without changing the distribution.

There are three possibilities to draw random uniformly distributed rotations. 

The most straightforward choice is to assign an independent rotation to each individual Fry point. Clearly, this destroys the symmetry of the Fry points. To keep this symmetry, each pair of Fry points, that is $z = x_i - x_j$
and $-z = x_j - x_i$ for $x_i \neq x_j \in \x$, can be assigned the same rotation. 

A third option is what we call the group-wise rotation which can be derived from the way \citet{fry_random_1979} originally created the Fry plot. The group $G_i$ hereby corresponds to all Fry points that have the same origin $x_i \in \x$, i.e.\begin{equation}
    G_i = \{ z = x_j - x_i \mid z \in F_{\x}\}. 
\end{equation}
Each group $G_i$ is rotated by the same random angle. 
This procedure needs the fewest random samples of the rotation but keeps the most of the structure of the point pattern. As for the individual rotations, the resulting bootstrap Fry points are no longer symmetric around the origin.

Note, that one could also theoretically consider rotating the entire Fry plot by the same random rotation. However, this approach does not create any new second-order structure. In particular, we do not obtain samples that are rotational invariant and thus resemble the null hypothesis.

\begin{figure}[th]
    \captionsetup[subfloat]{aboveskip=0.25pt}
    \begin{minipage}{0.3\textwidth}
        \subfloat[Observed point pattern. The numbers give the index of the respective point. Highlighted are the group $G_3$ in orange, and the pairs $x_{8}$, $x_{10}$ and $x_4$, $x_5$ in purple and cyan, respectively.]{\includegraphics[width=\textwidth]{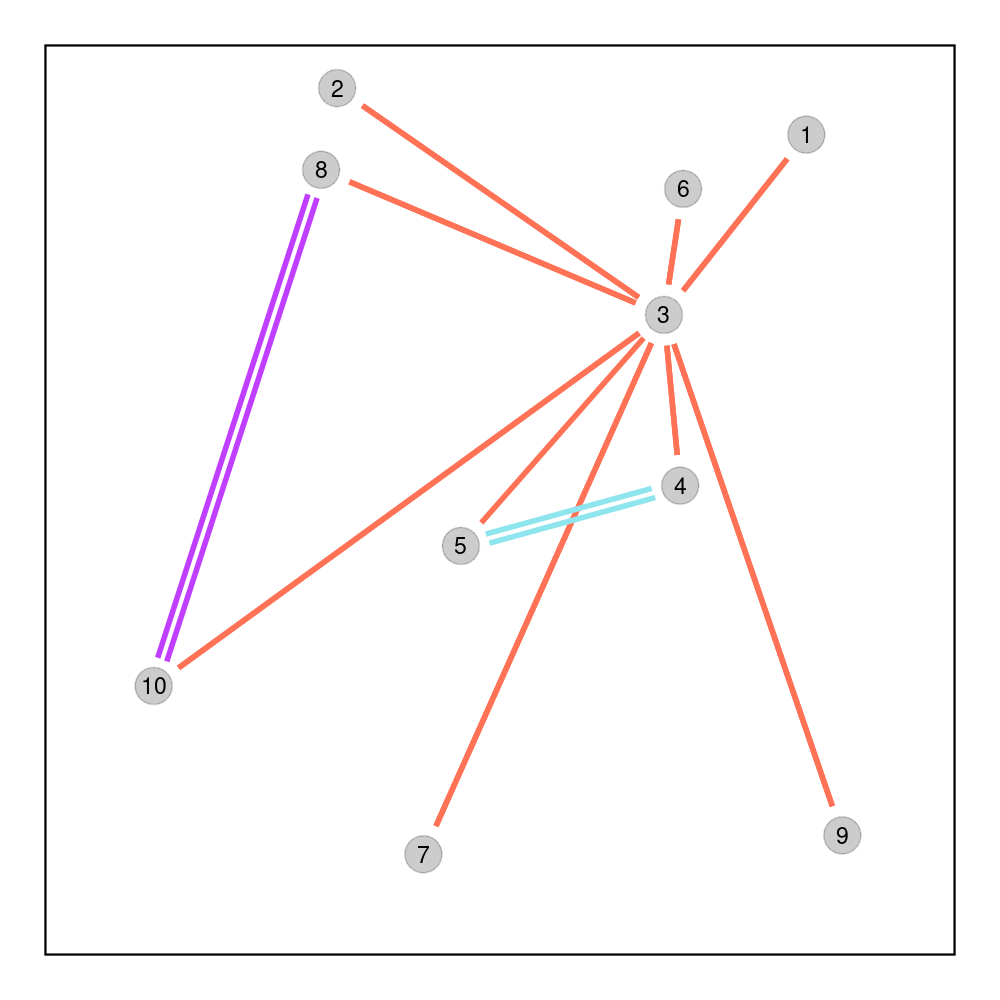}} 
    \end{minipage}
	\hfill
 \begin{minipage}{0.65\textwidth}
     \subfloat[Random group rotation of the group $G_3$ having point $x_3$ as the origin. Observed Fry points on the left, rotated Fry points on the right.]{\includegraphics[width=0.45\textwidth]{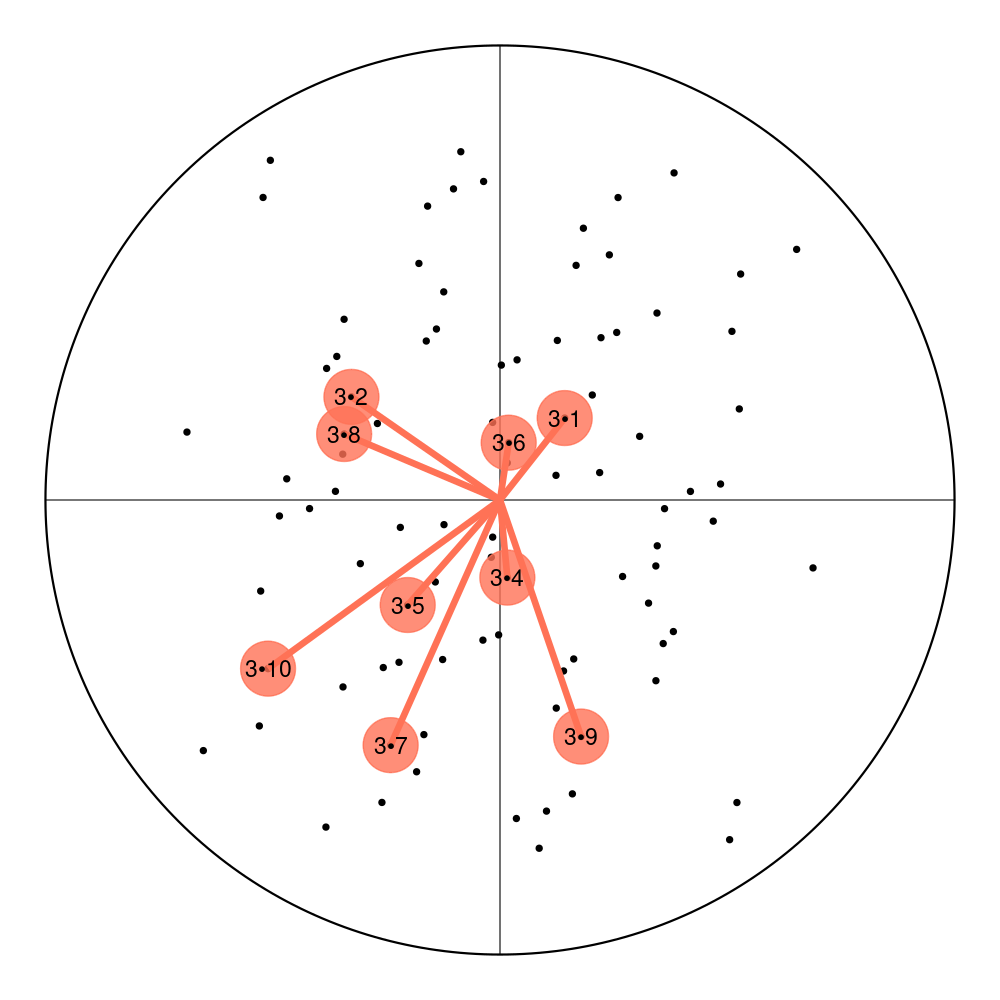}\hfill
 \includegraphics[width=0.45\textwidth]{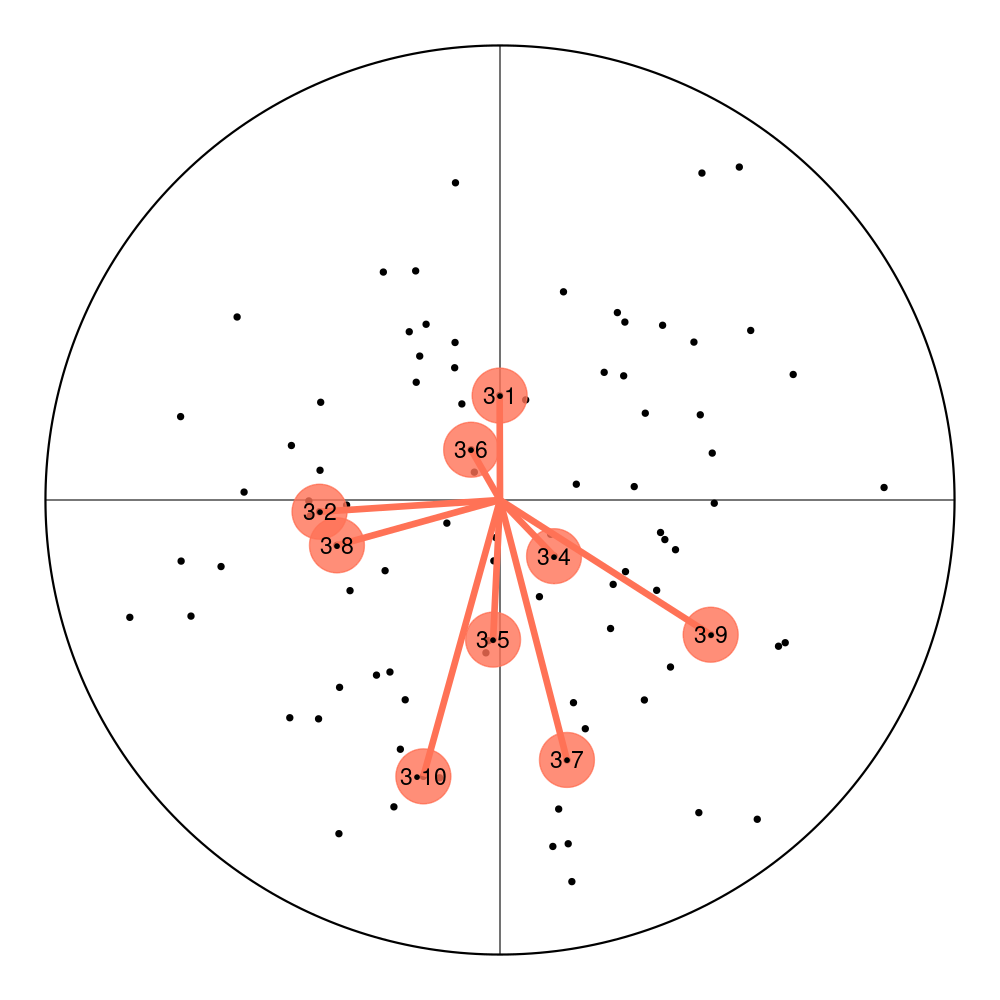}}\\ 
 \subfloat[Random pair rotation highlighted for the pairs $x_{8}$, $x_{10}$ and $x_4$, $x_5$. Observed Fry points on the left, rotated Fry points on the right.]{\includegraphics[width=0.45\textwidth]{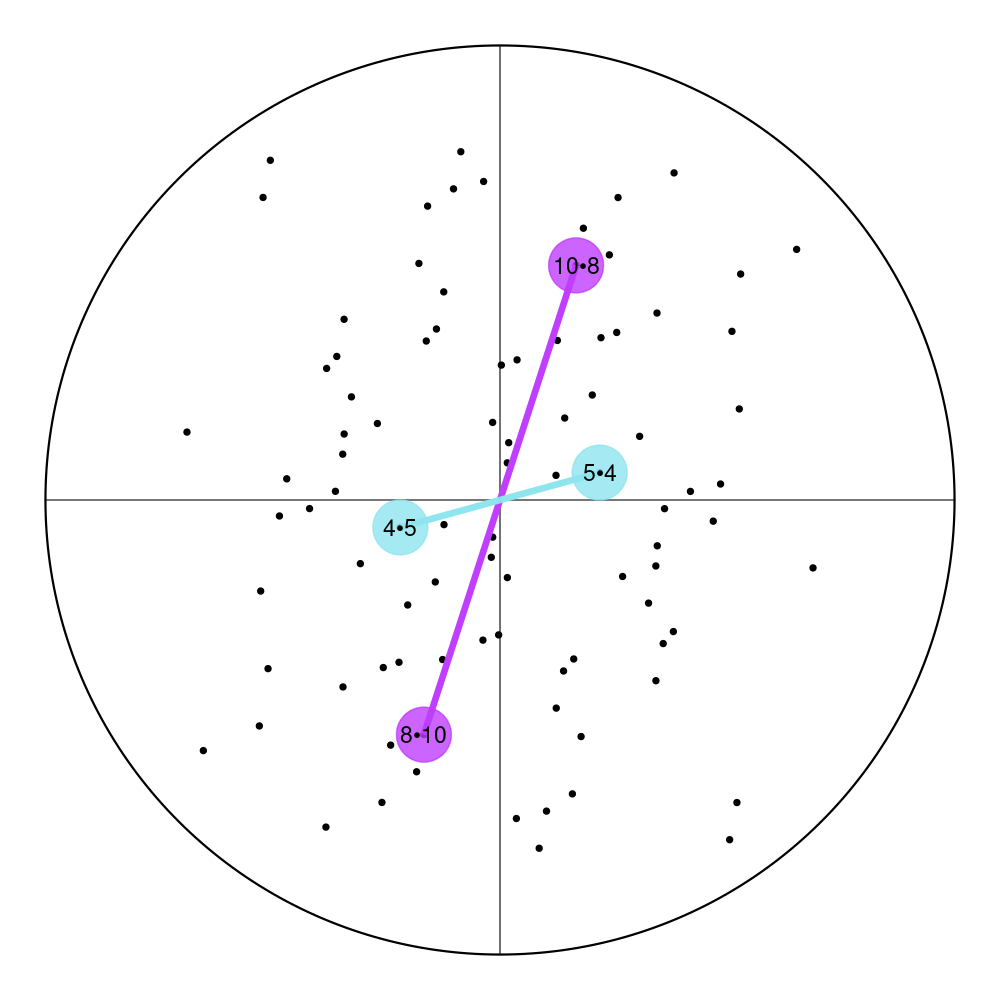}\hfill
 \includegraphics[width=0.45\textwidth]{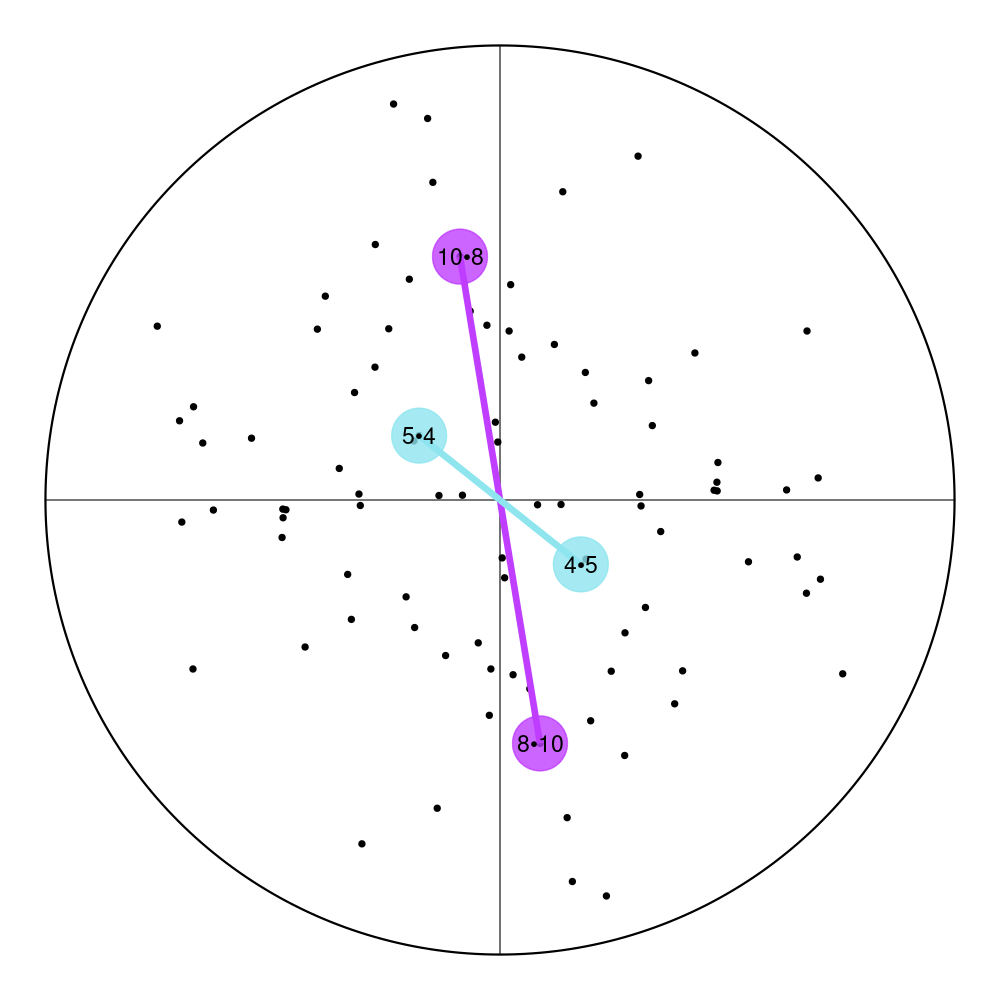}}
 \end{minipage}
	\caption{Visualization of the different random rotations}
	\label{fig:constr-rotations}
\end{figure}

Figure~\ref{fig:constr-rotations} illustrates the constructions while Figure~\ref{fig:fry-group-rot} shows some bootstrap samples obtained using the group-wise rotation for both an isotropic regular process and an anisotropic regular process. Additional examples for other types of processes can be found in the Appendix in Figure~\ref{fig:fry-group-rot-cluster} and Figure~\ref{fig:fry-group-rot-line}.

\begin{figure}[th]
    \captionsetup[subfloat]{aboveskip=0.25pt}
	\subfloat[Observed Fry plot]{\includegraphics[width=0.245\textwidth]{img/fry-points/iso-regular-fry.png}} \hfill
	\subfloat[]{\includegraphics[width=0.245\textwidth]{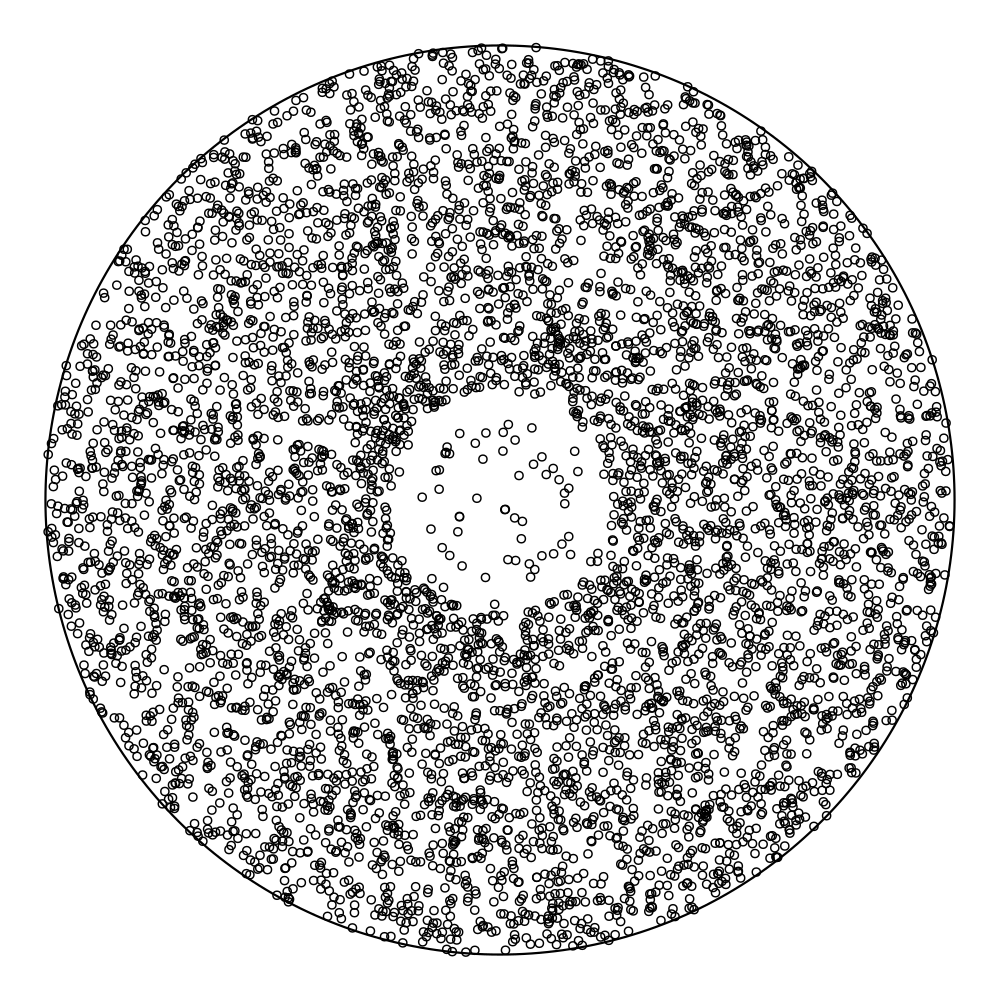}}\hfill 
	\subfloat[]{\includegraphics[width=0.245\textwidth]{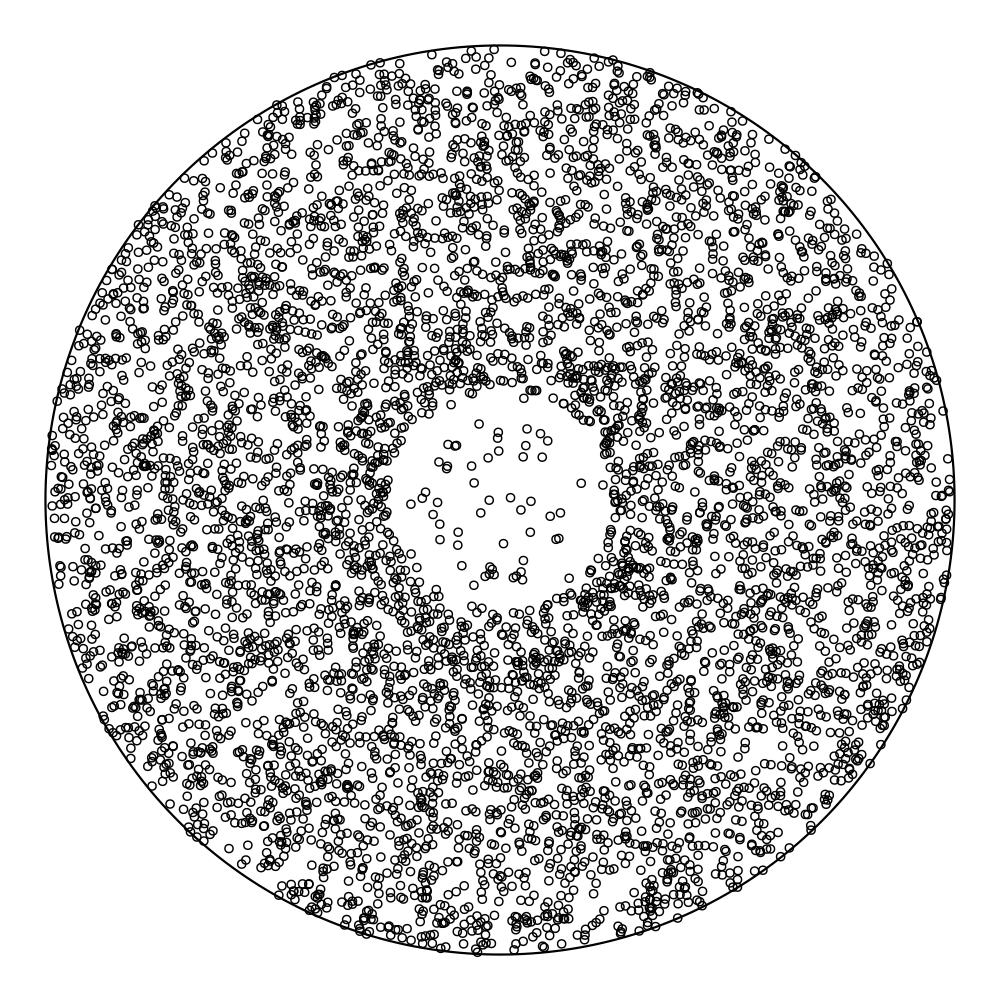}}\hfill
	\subfloat[]{\includegraphics[width=0.245\textwidth]{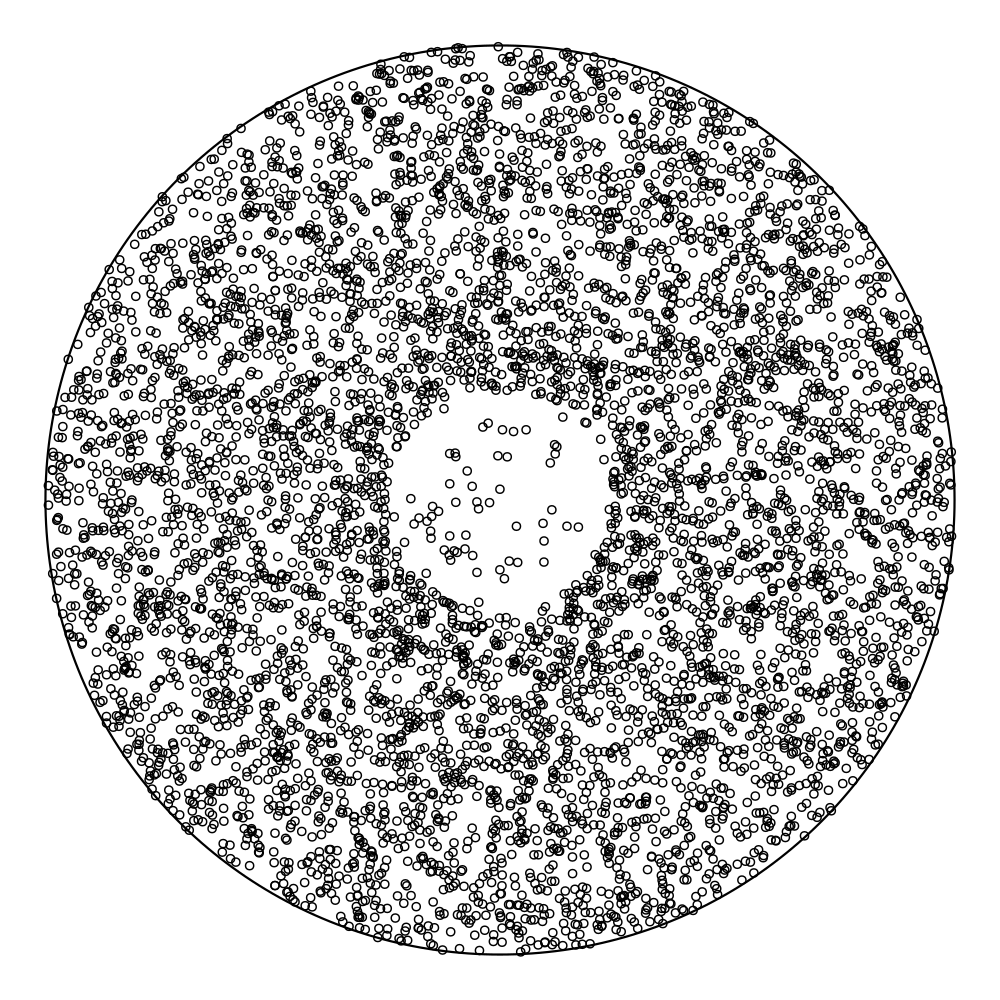}}\\
 
    \subfloat[Observed Fry plot]{\includegraphics[width=0.245\textwidth]{img/fry-points/aniso-regular-fry.png}} \hfill
	\subfloat[]{\includegraphics[width=0.245\textwidth]{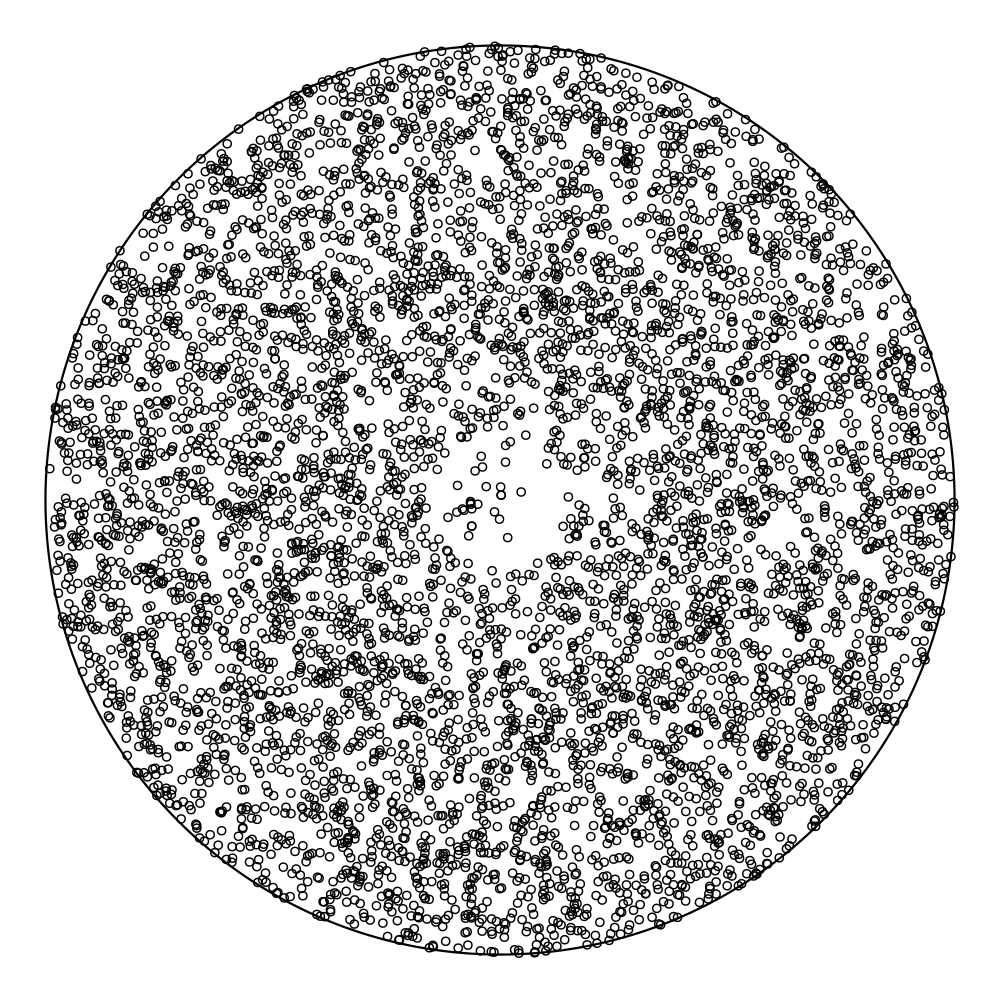}}\hfill 
	\subfloat[]{\includegraphics[width=0.245\textwidth]{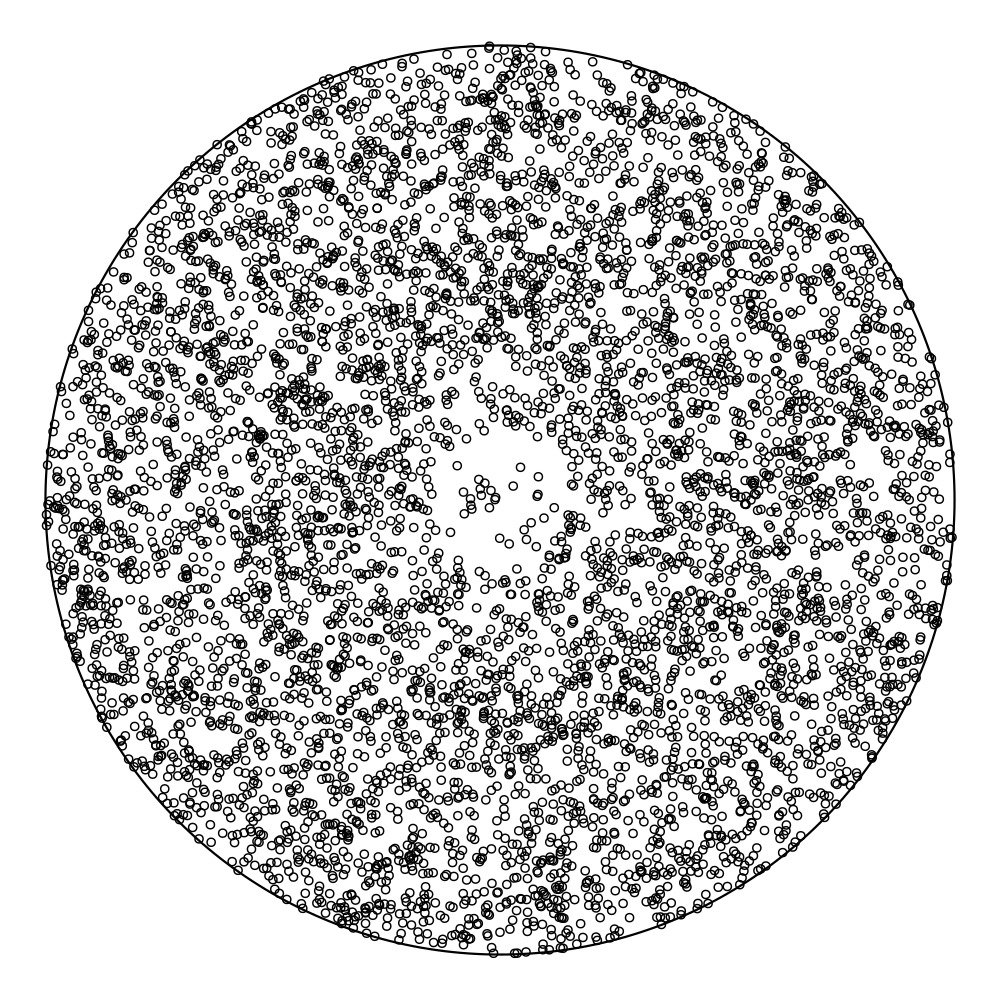}}\hfill
	\subfloat[]{\includegraphics[width=0.245\textwidth]{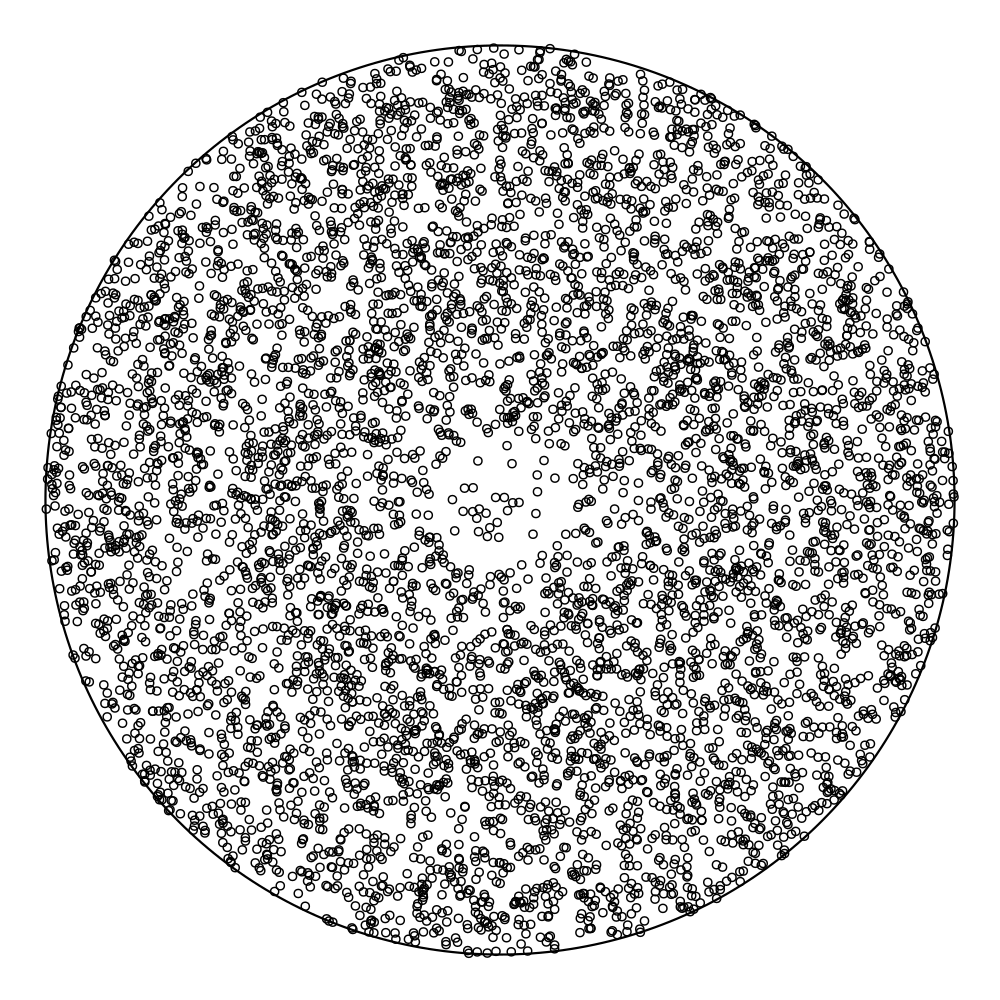}}
	\caption{Examples of Fry plots generated for the Monte Carlo test. (a) and (e) display the observed Fry plots from an isotropic and an anisotropic regular process, respectively. (b) - (d) and (f) - (h) show simulations obtained by random group-wise rotation of the observed Fry points.}
	\label{fig:fry-group-rot}
\end{figure}

\subsection{Test statistic and decision}\label{sec:statistic}

For the Monte Carlo test we first select a functional contrast summary statistic based on the reduced second-order moment measure $\mathcal{K}$.  Then, we compare the estimated statistic for the observed Fry points with the ones obtained from $0 < M < \infty$ independently generated bootstrap Fry point patterns. We will denote the empirical functional contrast summary statistic by $T$ where $T_0$ corresponds to the observed points and $T_1, \dots, T_M$ correspond to the bootstrap samples. 

For the comparison we need a total ordering $\preceq$ where $T_i \preceq T_j$ means that, under the null hypothesis, $T_i$ is at least as extreme as $T_j$. The Monte Carlo $p$-value \citep{davison_bootstrap_1997} as an estimate of the true $p$-value is defined via
\begin{equation}\label{eq:mc-pval}
	p_{\operatorname{MC}} = \frac{1}{M+1} \left(1 + \displaystyle\sum_{i=1}^M \1{T_i \preceq T_0} \right).
\end{equation}

Note, that in order to be able to reject the null hypothesis at the significance level $\alpha$, we need that $\frac{1}{M+1} \leq \alpha$. A common choice of $M$ for $\alpha = 0.05$ is $M=99$. We consequently reject the null hypothesis if the observed value $T_0$ is one of the five most extreme samples.

There are multiple options on how the ordering can be constructed, see \citet{rajala_tests_2022}. 

Consider first the setting when choosing the contrast of the sector $K$-functions as summary statistic, i.e.
\begin{equation}
    T(r) = \widehat{K}^c_{\mathrm{sect, \alpha_1, \alpha_2, \varepsilon}}(r) = \widehat{K}_{\textrm{sect}, \alpha_1, \varepsilon}(r) - \widehat{K}_{\textrm{sect}, \alpha_2, \varepsilon}(r).
\end{equation} 
The parameters $\alpha_1$, $\alpha_2$ and $\varepsilon$ are fixed. 
Under the null hypothesis of isotropy, we expect to see no difference between the two directions $\alpha_1$ and $\alpha_2$ at each distance $r$. In contrast, large absolute values of $T(r)$ point towards anisotropy. It therefore makes sense to construct an ordering based on an $L_p$-norm. We use in our study \begin{equation}
    T_i \preceq_{\mathrm{INT}}T_j \quad \coloneqq  \quad \int_{0}^{r_{\max}} \abs{T_i(r)} \, \mathrm{d}r \, \geq \, \int_{0}^{r_{\max}} \abs{T_j(r)} \, \mathrm{d}r.
\end{equation}
In practise, we can only obtain a discretization of $T_i$ on the interval $[0, r_{\max}]$ with $0 < r_{\max} < \infty$ and hence $T_0, \dots, T_M \in \R^k$ when using $k$ evaluation points.
The integral, i.e. the $L_1$-norm of the functional statistic restricted to $[0, r_{\max}]$, is approximated with the trapezoidal rule. 

A second option for the ordering that we consider in our study is given through global envelope tests \citep{myllymaki_global_2017}. In this approach, statistical depth functions such as the extreme rank length measure (ERL) are used to rank each $T_j$ with respect to the set of all $M+1$ functions. The smallest value of the measure is assigned to the overall most extreme $T_j$. We will denote the corresponding ordering by
\begin{equation}
    T_i \preceq_{\mathrm{ERL}} T_j \quad \coloneqq \quad \operatorname{ERL}(T_i) \leq \operatorname{ERL}(T_j).
\end{equation}
The ERL measure first ranks all $M+1$ functions at each fixed evaluation point from most to least extreme. We can then represent each function as a sorted vector of its pointwise ranks at the $k$ evaluation points. The lexicographical ordering of these sorted vectors gives the overall ordering of $T_0,\dots, T_{M}$.
In our context of differences of sector $K$-functions the pointwise extremeness is two-sided as both large negative and large positive values are extreme under the null hypothesis. 

In the second scenario where we use the ratio $F_{r, \alpha}$ of \citet{wong_isotropy_2016} as functional contrast statistic we have theoretical results under the null hypothesis. This knowledge can be incorporated into the ordering. In particular, one can first compute the pointwise deviations to the known reference value, i.e. \begin{equation}
    D_{r, \alpha}(\varepsilon) = \widehat{F}_{r, \alpha}(\varepsilon) - \frac{\varepsilon}{\nicefrac{\pi}{2}},
\end{equation} which, for each fixed distance $r$ and direction $\alpha$,  results in a functional contrast summary statistic in $\varepsilon$. \citet{wong_isotropy_2016} proceed by computing the $L_{\infty}$-norm of this contrast $D_{r, \alpha}$. Finally, the functional contrast summary statistic in $r$ is given by a maximization over all directions $\alpha$. In other words, we obtain
\begin{equation}
    T(r) = \sup_{\alpha \in [0, \pi]} \sup_{\varepsilon \in [0, \nicefrac{\pi}{2}]} \abs{D_{r, \alpha}(\varepsilon)}.
    \label{eq:wong-chiu}
\end{equation}
Contrary to the contrast of the two sector $K$-functions, this summary statistic already measures absolute differences. Hence, only one-sided extremeness needs to be taken into account in the construction of the orderings.

\section{Simulation study}\label{sec:study}

We conduct a simulation study to compare the empirical levels and powers of the proposed nonparametric isotropy test using random rotations of the Fry points.

\subsection{Point process models}

In this study we consider four parametric point process models that are derived from different types of point process classes and show different types of anisotropy. In the Strauss process and the Thomas-like cluster process  anisotropy is induced by the geometric anisotropy mechanism. In the Poisson line cluster point process and the Matérn-like cluster point process with elliptical clusters, anisotropy is specified by the von Mises-Fisher distribution for the direction of the lines and clusters. We will use the parametrization introduced in \citet{rajala_tests_2022} where parameters can be interpreted in a similar way for all four models. The authors proposed to use 
\begin{itemize}
    \item a range parameter $0 < R < \infty$ describing the spatial scale,
    \item a strength parameter $\gamma \in [0,1]$ describing the strength of the structural features, where $\gamma=0$ corresponds to the case where the structural feature is most prominent in the observed point pattern,
    \item a degree of anisotropy $a \in (0,1]$ where $a=1$ describes the case of isotropy. 
\end{itemize}

The exact parametrization and model descriptions are explained below. The intensity of all processes is chosen as $\lambda=0.005$. To investigate the influence of the number of observed points, we consider square observation windows $W=[-\nicefrac{L}{2}, \nicefrac{L}{2}]^2$ where the side length $L$ is chosen such that the expected number of observed points $n = \lambda \cdot|W|$  is either $100$, $300$ or $500$. That is, we choose $L \in \{\sqrt{2}\cdot100, \sqrt{6}\cdot100, \sqrt{10}\cdot100\}$.

In our power study, we simulated $1000$ point patterns for each parameter setting of each model in each of the three observation windows. 

\subsubsection{Strauss point process with geometric anisotropy mechanism}
The Strauss point process is an example of a regular point process belonging to the class of Gibbs point processes. When defined on the whole space $\R^2$ it is stationary and, additionally, isotropic. For the anisotropic Strauss process, we consider the geometric anisotropy mechanism in terms of a compression matrix
\begin{equation}
    C = \begin{bmatrix}
        \nicefrac{1}{a} & 0 \\
        0 & a
    \end{bmatrix},  \quad a \in (0,1]
    \label{eq:geom-aniso}
\end{equation}
which describes a stretching along the $x$-axis and a compression along the $y$-axis while being area preserving.
First, an isotropic Strauss process is simulated in $C^{-1}W$. We set the interaction radius of this process to $R$ and the interaction parameter to $\gamma$. Since we use the implementation available in the R-package \texttt{rstrauss} \citep{rstrauss}, it is possible to directly specify the number of points $n$ instead of specifying the intensity parameter $\beta$ that controls but does not coincide with the intensity $\lambda$. The generated pattern is then transformed by applying $C$ to obtain a realization of the anisotropic Strauss point process in $W$. 

\subsubsection{Thomas-like cluster point process with geometric anisotropy mechanism}

The Thomas-like cluster point process is constructed similar to the classical Thomas cluster process. But instead of using Poisson-distributed random variables for the number of clusters as well as the number of offsprings per cluster, we consider here fixed numbers, i.e. Binomial processes. The construction works as follows. As for the classical Thomas cluster point process, we first sample parent points independently and uniformly which define the centers of the clusters. Then, offspring points are generated by sampling random displacements from the parent point following a two-dimensional centered Gaussian distribution with covariance matrix given by multiplying the identity matrix with a scalar $\sigma^2$. The offspring points form the realization of the isotropic process.

The standard deviation $\sigma$ of the Gaussian displacements is chosen as $\sigma = \nicefrac{R}{\chi^2_{p}}$ where $\chi^2_{p}$ denotes the $p$-quantile of a $\chi_2^2$-distribution. In our study we set $p=0.94$ which means that $94~\%$ of the offspring points are contained in a disk with radius $R$ around the parent point. 

The strength parameter $\gamma$ is used to control the number of clusters $n_0$ in the observation window through $n_0 = \gamma n$. Additionally, we restrict the parameter space of $\gamma$ slightly such that $n_0 \geq 5$ holds in all realizations. The desired $n$ offspring points in the observation window are distributed evenly among the parent points, i.e. each cluster will contain $n_1$ points such that $n_0n_1$ is closest to $n$. This means that for small values of $\gamma$ we observe only few clusters containing many points, while $\gamma = 1$ implies that each cluster consists of a single point.

For the anisotropic processes we use again the geometric anisotropy mechanism in terms of the transformation $C$ defined in Equation~\eqref{eq:geom-aniso}. The anisotropy is visible in the contours of the clusters. In case of isotropy, the clusters have spherical contours, while in the anisotropic case the contours are elliptical.

As also clusters with center outside the observation window can have offspring inside the window, we will use a boundary correction. This is done by simulating $\lfloor n_0\abs{W\oplus b_{\sigma}(0)}/\abs{W}\rfloor$ parent points on the dilated observation window $C^{-1}W\oplus b_{\sigma}(0)$. All offspring falling outside the observation window are discarded.

\subsubsection{Poisson line cluster point process with von Mises-Fisher distributed line directions}

This model is based on scattering points in the vicinity of a random line system. Lines are generated according to a Poisson line process conditional on the total line length in the observation window. 
Anisotropy is introduced by specifying a preferred direction of the lines. We simulate the line process on the representation space $\R \times \So$ \citep{skm13}. The pair $(p, \theta) \in \R \times \So$ defines a line that forms a counter-clockwise angle $\theta$ from the $x$-axis and which has the orthogonal signed distance $p$ from the center of the observation window, in our case the origin. As usually with line processes, we first simulate the line process in the bounding disk of the observation window and then clip the lines to the square observation window. Denote the radius of the bounding disk by $r_{d}$.

Both parameters of the line are chosen randomly and independent from each other with $p$ being uniformly distributed on the interval $[-r_{d}, r_{d}]$. The line direction distribution is a von Mises-Fisher distribution on $\So$ with mean direction $\mu$ and concentration parameter $\kappa$ given as \begin{equation}
    \kappa = \kappa_{\max} \cdot \left(1 - \exp(1-1/a)\right)
    \label{eq:von-mises}
\end{equation}
where we take $1/0 = \infty$. Thus, the case $a=0$ corresponds to the highest concentration around the mean, while $a=1$ implies $\kappa=0$ which defines the uniform distribution on $\So$. In our study we fixed the maximal concentration to $\kappa_{\max}=10$. 

Instead of specifying the mean number of lines per unit volume, we will generate independent lines until the total length of the lines within the square observation window reaches a threshold $M = L\cdot5^{1+\gamma}$. 
The structure parameter $\gamma$ therefore controls the number of lines. For $\gamma=0$ we want to achieve at least the length of five times the side length. 

After the lines are generated, we simulate a Binomial process on the line system conditional on observing $n$ points in the observation window. Each point is first placed independently and uniformly directly on the line system. Then, it is displaced orthogonally from the line following a centered Gaussian with standard deviation $R$. If $R=0$ no displacement takes place. In order to account for edge effects, the possible initial line positions are extended slightly to the outside of the observation window. 

\subsubsection{Matérn-like cluster point process with elliptical clusters with preferred direction}

As a second type of cluster process, we use a Matérn-like cluster point process with elliptic clusters. As with the Thomas-like cluster process we again replace all Poisson-distributed random variables with fixed numbers. While cluster shapes are aligned in case of the geometric anisotropy mechanism, we now choose them at random around a preferred direction. The cluster centers are drawn from a Binomial point process in the dilated window. The dilation radius is taken as the longer half-axis of the cluster as explained below. The number of clusters $n_0$ is given as $n_0 = \gamma n$. As before, we restrict the parameter space of $\gamma$ such that $n_0 \geq 5$ holds in all parameter combinations. 

The individual cluster shapes are ellipses where the longer half-axis has length $R$ and the shorter half-axis has length $\tau R$ with $\tau \in (0,1]$ being a constant. The anisotropy induced by the direction of the ellipses is given through the von Mises-Fisher distribution on $\So$ with mean direction $\mu$ and $\kappa$ as defined in Equation~\eqref{eq:von-mises}. In our study we set the aspect ratio $\tau$ of the axes to $\tau = 0.4$, $\mu = \pi/3$ and the maximal concentration $\kappa_{\max}=10$.

\FloatBarrier
\subsection{Results}\label{sec:results}

We present in the following the results of the nonparametric isotropy tests using the three different random rotation techniques. We considered several parameter settings for each of the point process models. The exact choices are listed in Table~\ref{tab:parameter-setting}. Figure~\ref{fig:samples} shows one realization for four different parameter combinations of each model.

\begin{table*}[th]\centering
\renewcommand{\arraystretch}{1.5}
\begin{tabularx}{\textwidth}{lcccCc} \toprule
& \multicolumn{4}{c}{Model parameters} &  \\ \cmidrule(r){2-5}
Point process model family & $R$ & $\gamma$ & $a$ & other & $n$  \\ \midrule
Strauss  & $5$, $10$ &  $0.00$, $0.40$, $0.80$  & $0.7$, $1.0$ & -- & $100$, $300$, $500$  \\
Thomas-like cluster & $10$, $20$ &  $0.05$, $0.15$, $0.25$  & $0.7$, $1.0$ &  $p=0.94$ & $100$, $300$, $500$  \\
Poisson line cluster & $0$, $1$  & $0.00$, $0.40$, $0.80$ & $0.7$, $1.0$ & $\mu=\frac{\pi}{3}$, $\kappa_{\max} = 10$ & $100$, $300$, $500$  \\  
Matérn-like cluster & $10$, $20$  & $0.05$, $0.15$, $0.25$ & $0.7$, $1.0$ &  $\mu=\frac{\pi}{3}$, $\tau=0.4$, $\kappa_{\max} = 10$ &$100$, $300$, $500$  \\
\bottomrule 
\end{tabularx}
\caption{Overview of the parameters used in the simulation study. The model parameters are the range parameter $R$, the strength parameter $\gamma$, the degree of anisotropy $a$. The parameter $n$ denotes the expected number of points in the point patterns.}
\label{tab:parameter-setting}
\end{table*}

\begin{figure}[ht]
\captionsetup[subfloat]{aboveskip=4pt, belowskip=4pt, labelformat=empty}
	\subfloat[$R=10$, $\gamma=0.4$, $a=1$]{\includegraphics[width=0.245\textwidth, height=0.245\textwidth]{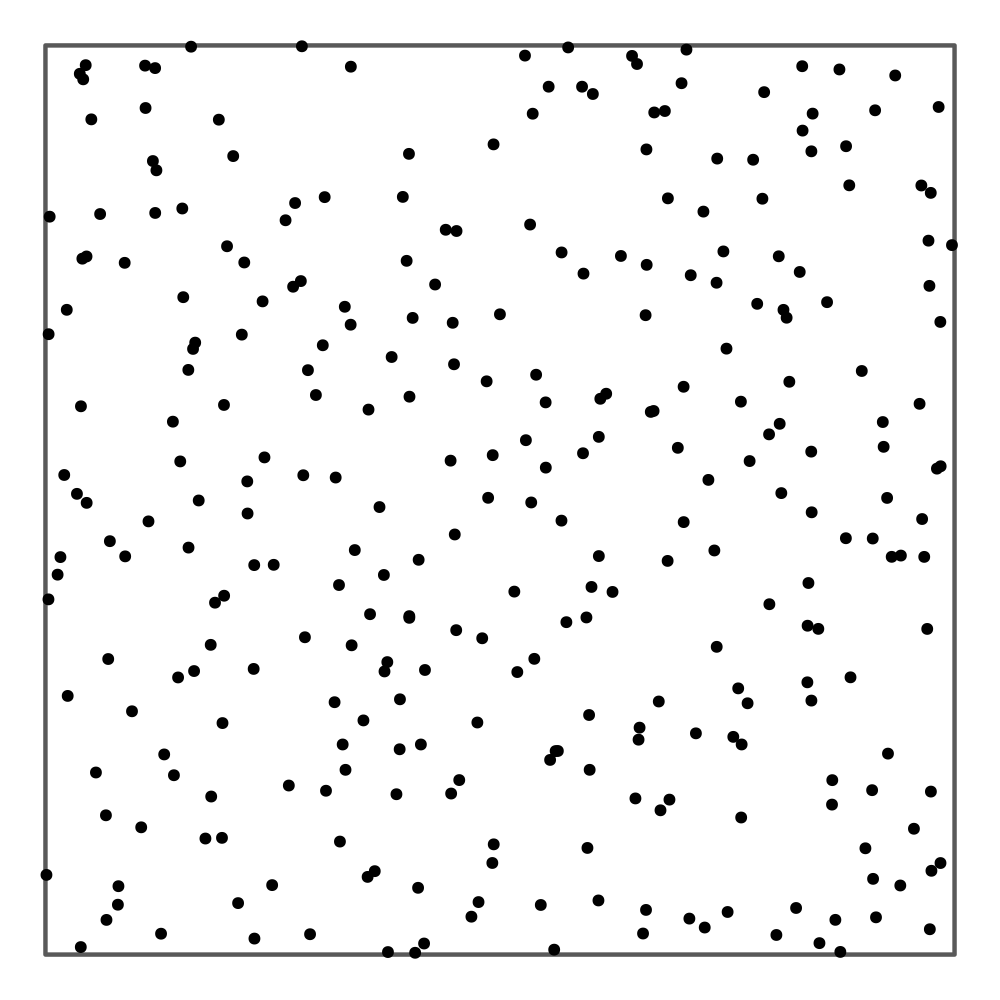}} \hfill
	\subfloat[$R=10$, $\gamma=0$, $a=0.7$]{\includegraphics[width=0.245\textwidth, height=0.245\textwidth]{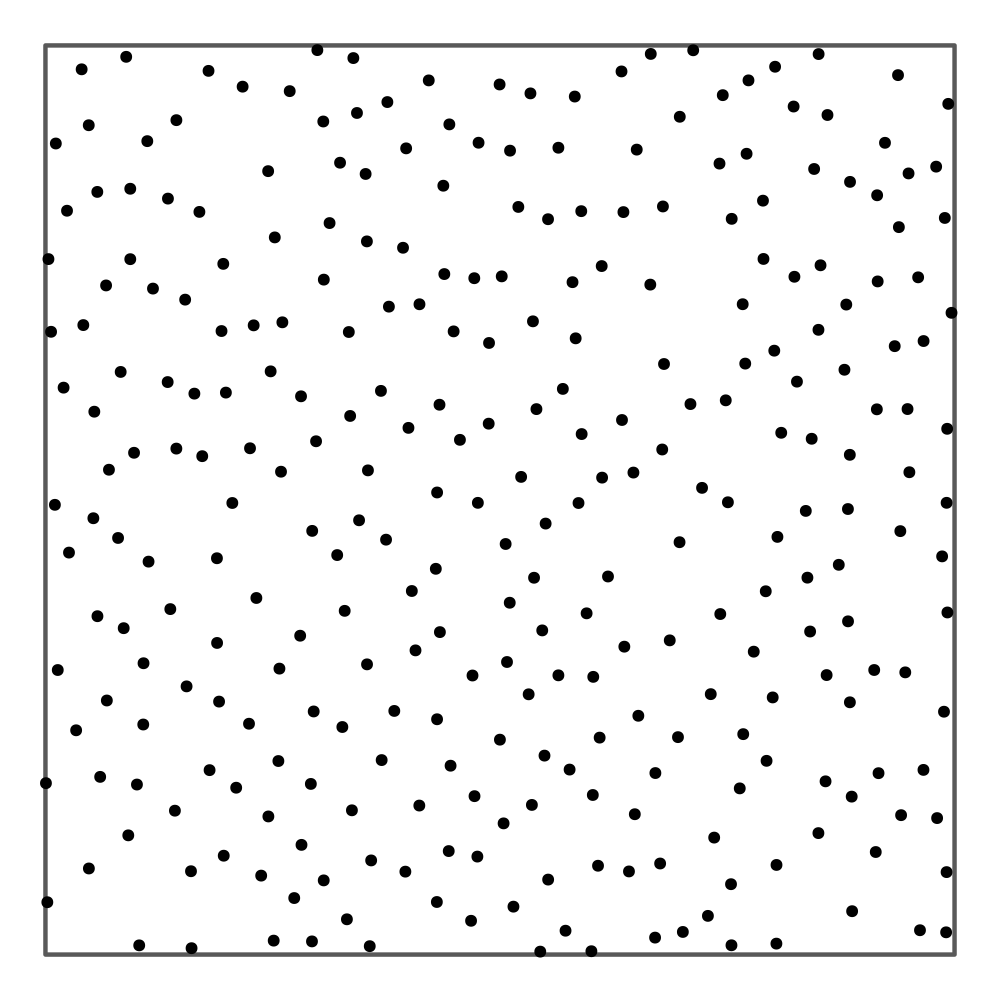}}\hfill 
	\subfloat[$R=10$, $\gamma=0.4$, $a=0.7$]{\includegraphics[width=0.245\textwidth, height=0.245\textwidth]{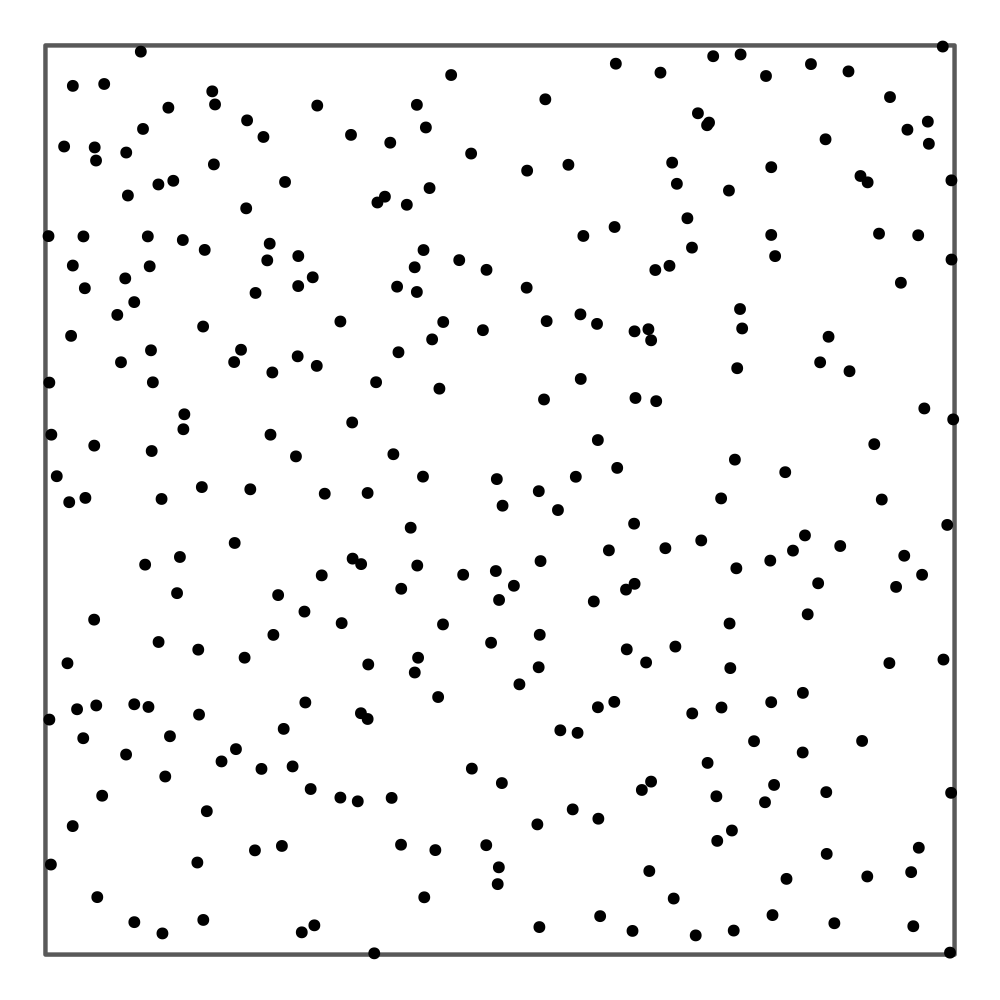}}\hfill
	\subfloat[$R=5$, $\gamma=0.8$, $a=0.7$]{\includegraphics[width=0.245\textwidth, height=0.245\textwidth]{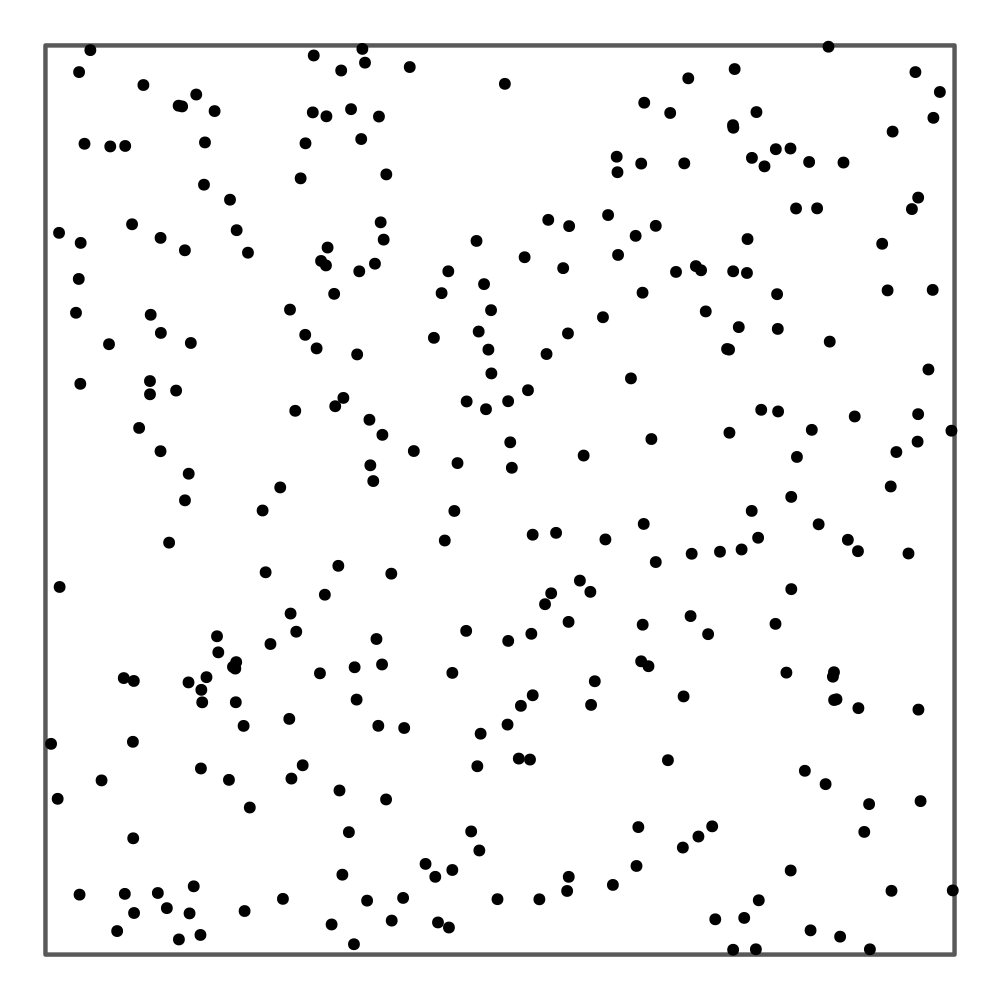}}

 \subfloat[$R=10$, $\gamma=0.15$, $a=1$]{\includegraphics[width=0.245\textwidth, height=0.245\textwidth]{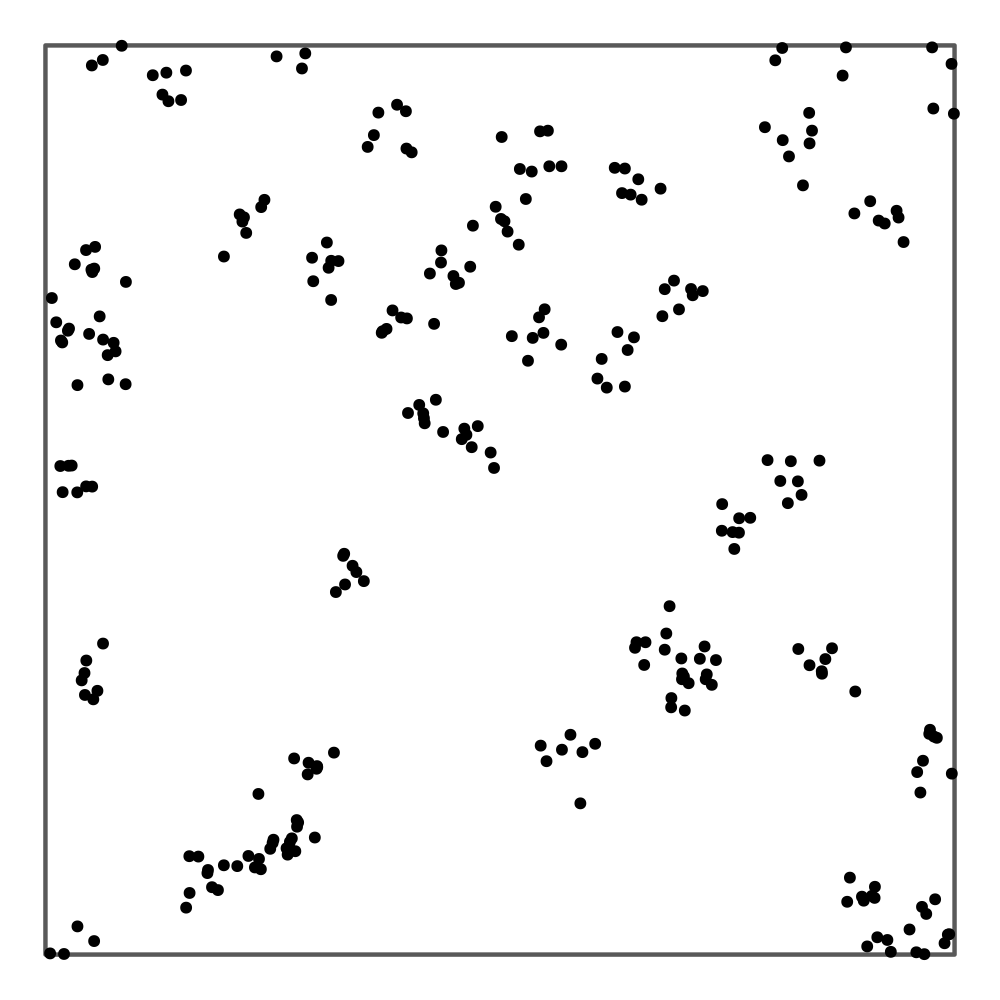}} \hfill
	\subfloat[$R=10$, $\gamma=0.05$, $a=0.7$]{\includegraphics[width=0.245\textwidth, height=0.245\textwidth]{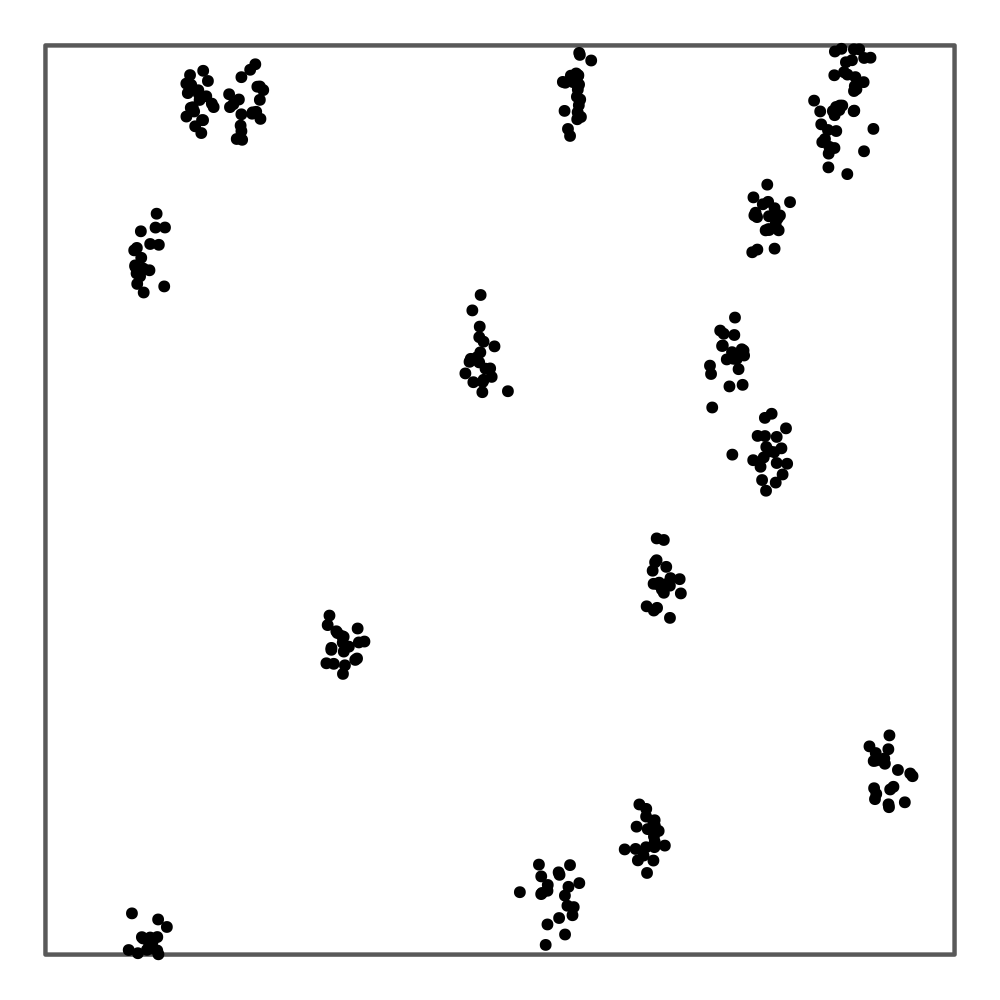}}\hfill 
	\subfloat[$R=20$, $\gamma=0.15$, $a=0.7$]{\includegraphics[width=0.245\textwidth, height=0.245\textwidth]{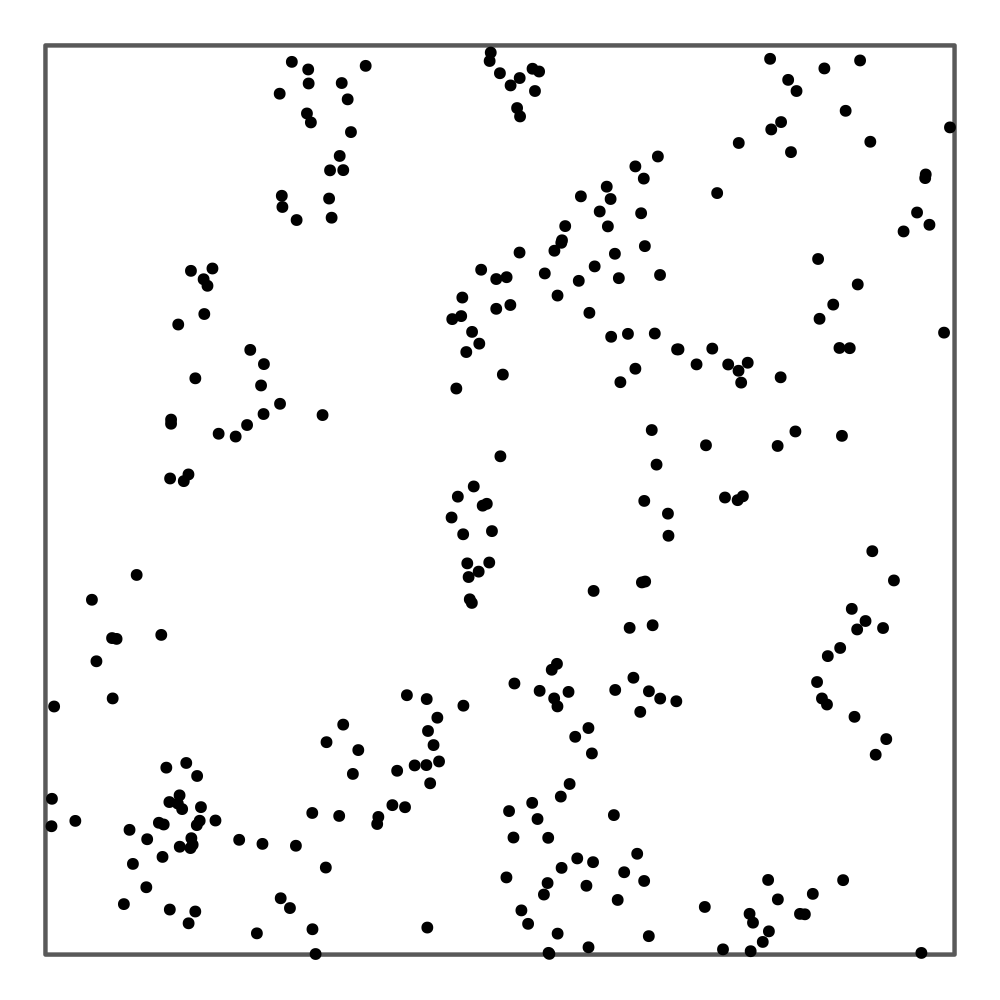}}\hfill
	\subfloat[$R=20$, $\gamma=0.25$, $a=0.7$]{\includegraphics[width=0.245\textwidth, height=0.245\textwidth]{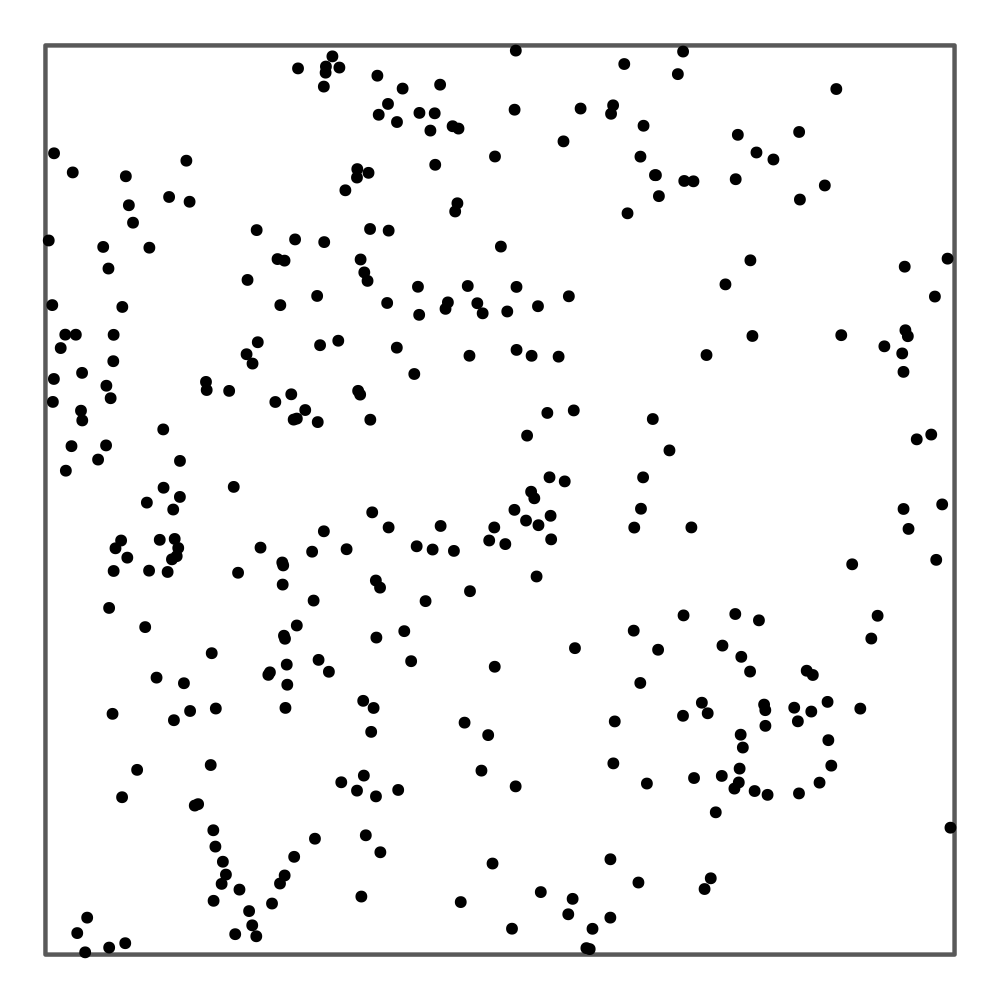}}

 \subfloat[$R=1$, $\gamma=0.8$, $a=1$]{\includegraphics[width=0.245\textwidth, height=0.245\textwidth]{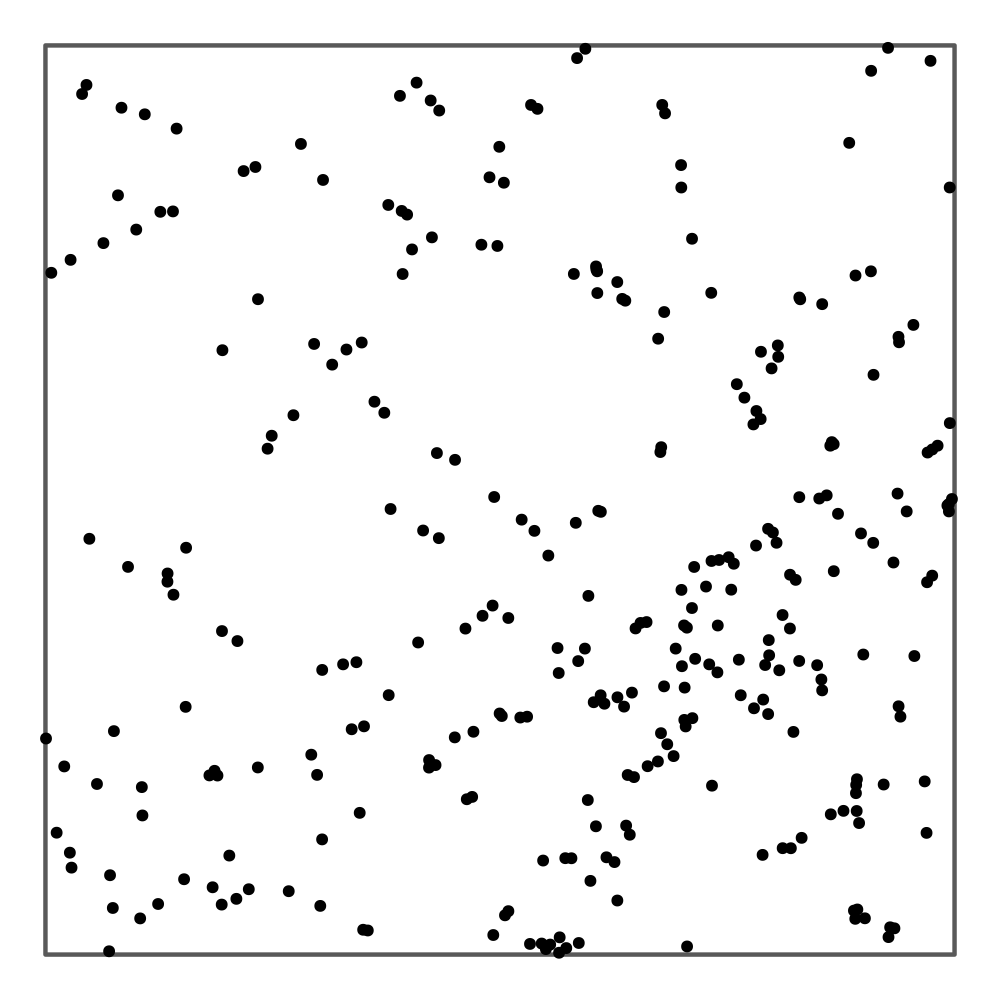}} \hfill
	\subfloat[$R=1$, $\gamma=0$, $a=0.7$]{\includegraphics[width=0.245\textwidth, height=0.245\textwidth]{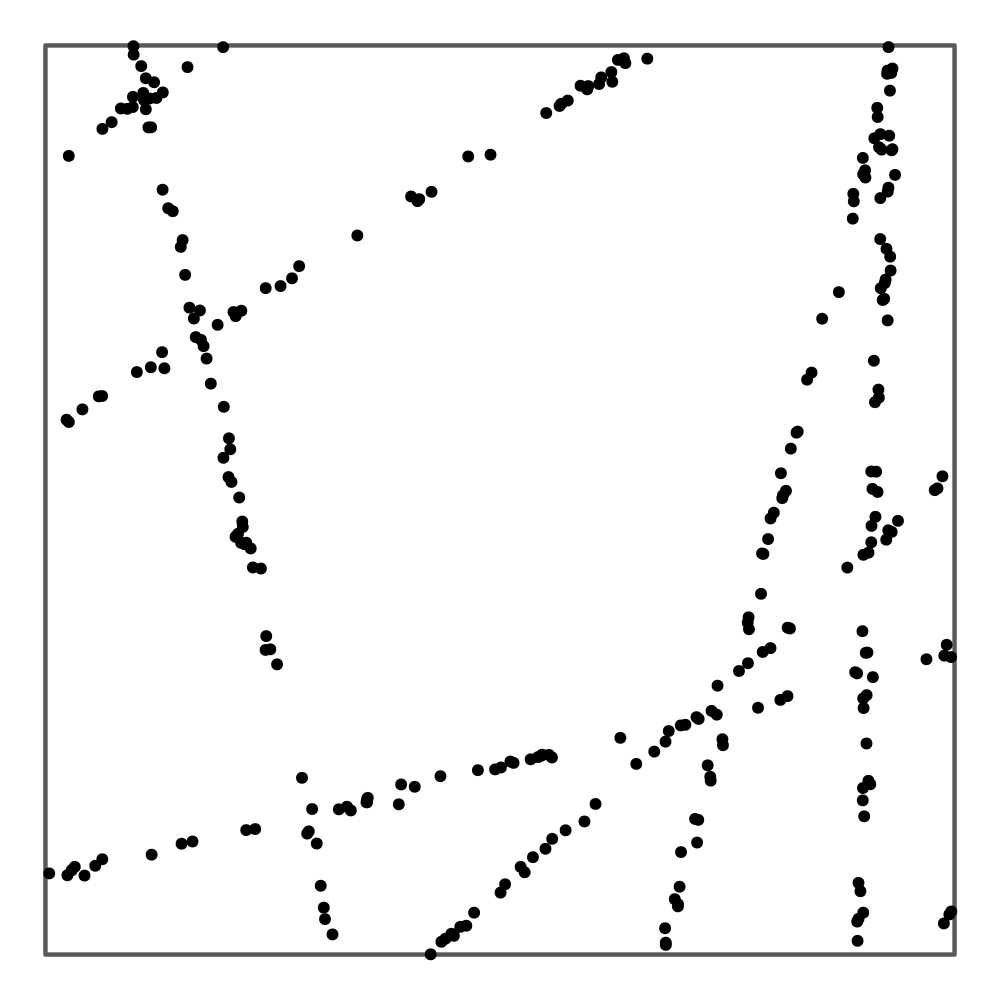}}\hfill 
	\subfloat[$R=0$, $\gamma=0.4$, $a=0.7$]{\includegraphics[width=0.245\textwidth, height=0.245\textwidth]{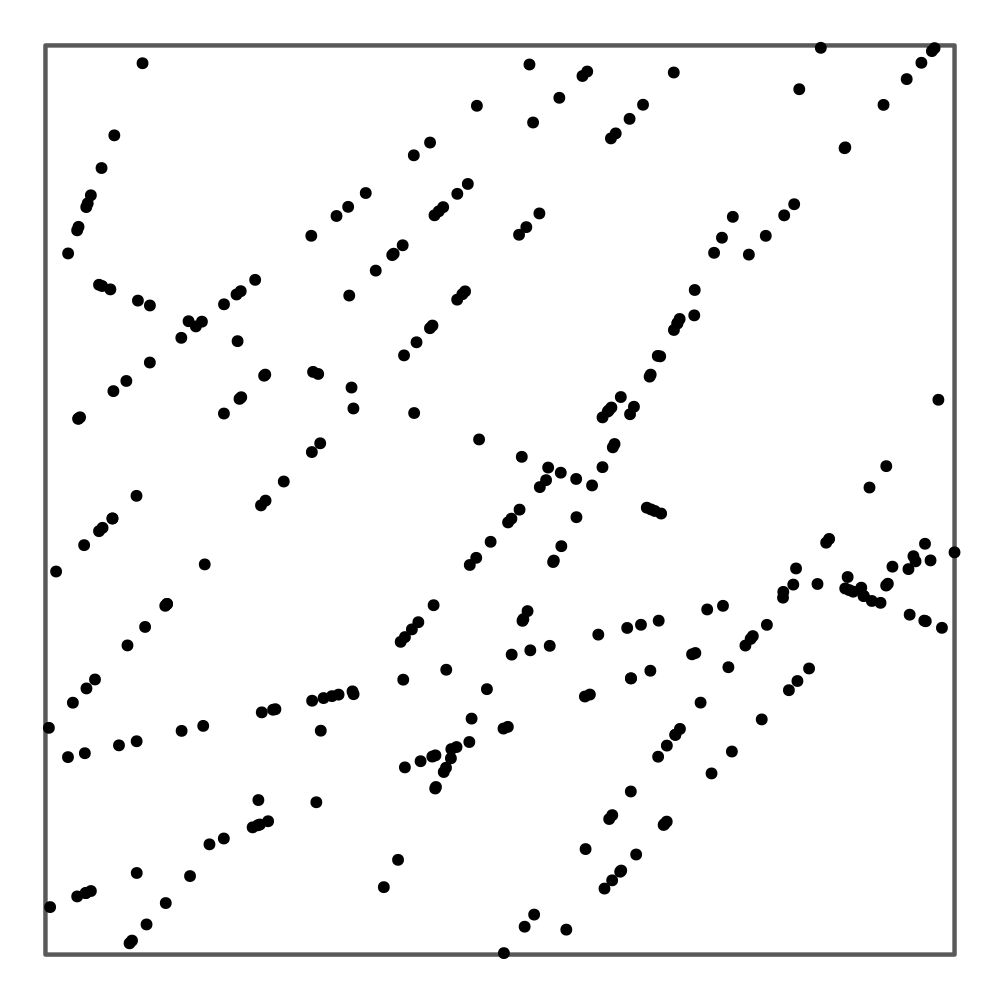}}\hfill
	\subfloat[$R=1$, $\gamma=0.8$, $a=0.7$]{\includegraphics[width=0.245\textwidth, height=0.245\textwidth]{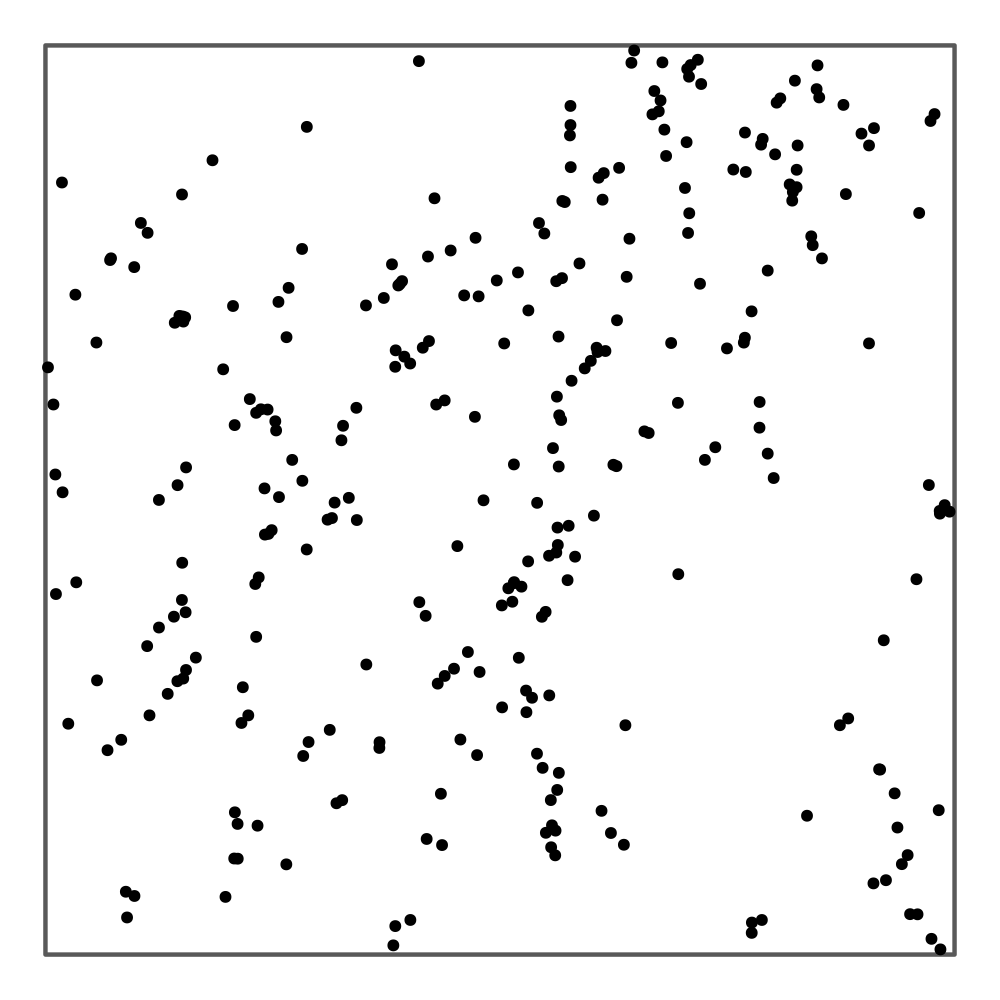}}

 \subfloat[$R=10$, $\gamma=0.15$, $a=1$]{\includegraphics[width=0.245\textwidth, height=0.245\textwidth]{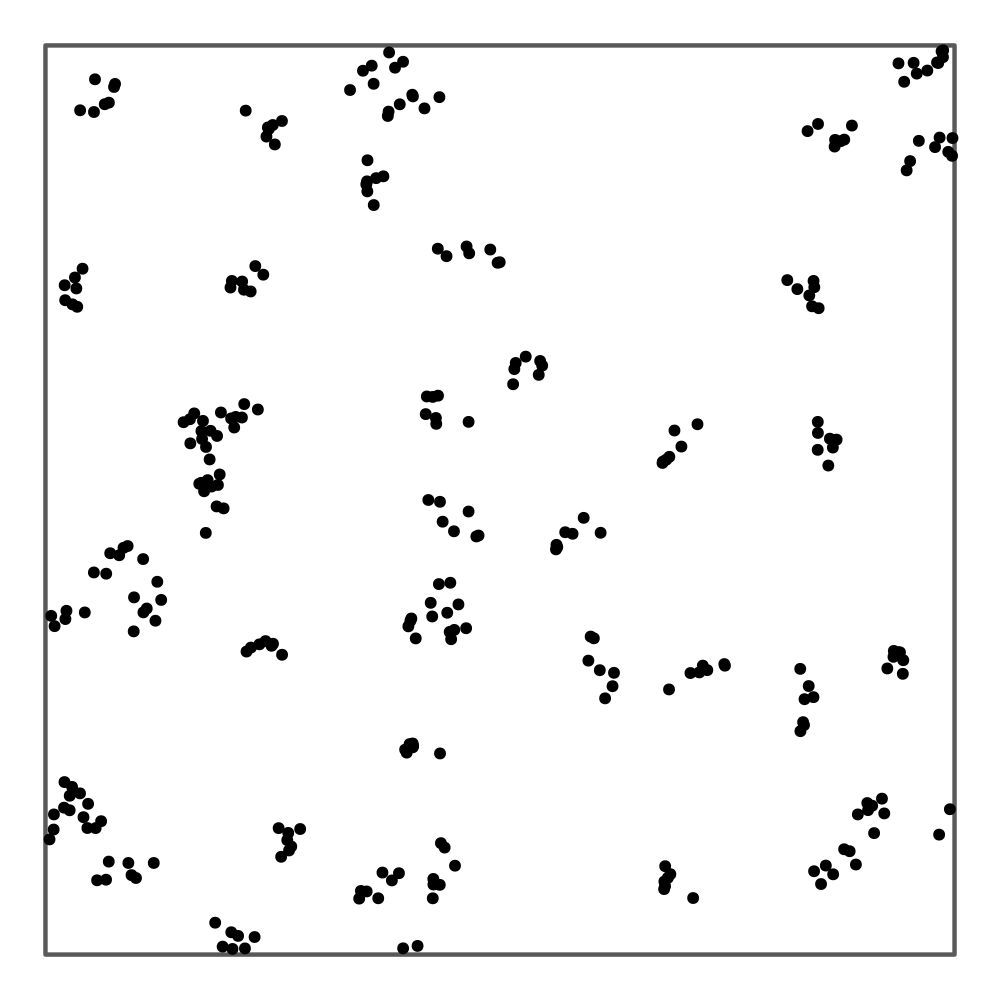}} \hfill
	\subfloat[$R=10$, $\gamma=0.05$, $a=0.7$]{\includegraphics[width=0.245\textwidth, height=0.245\textwidth]{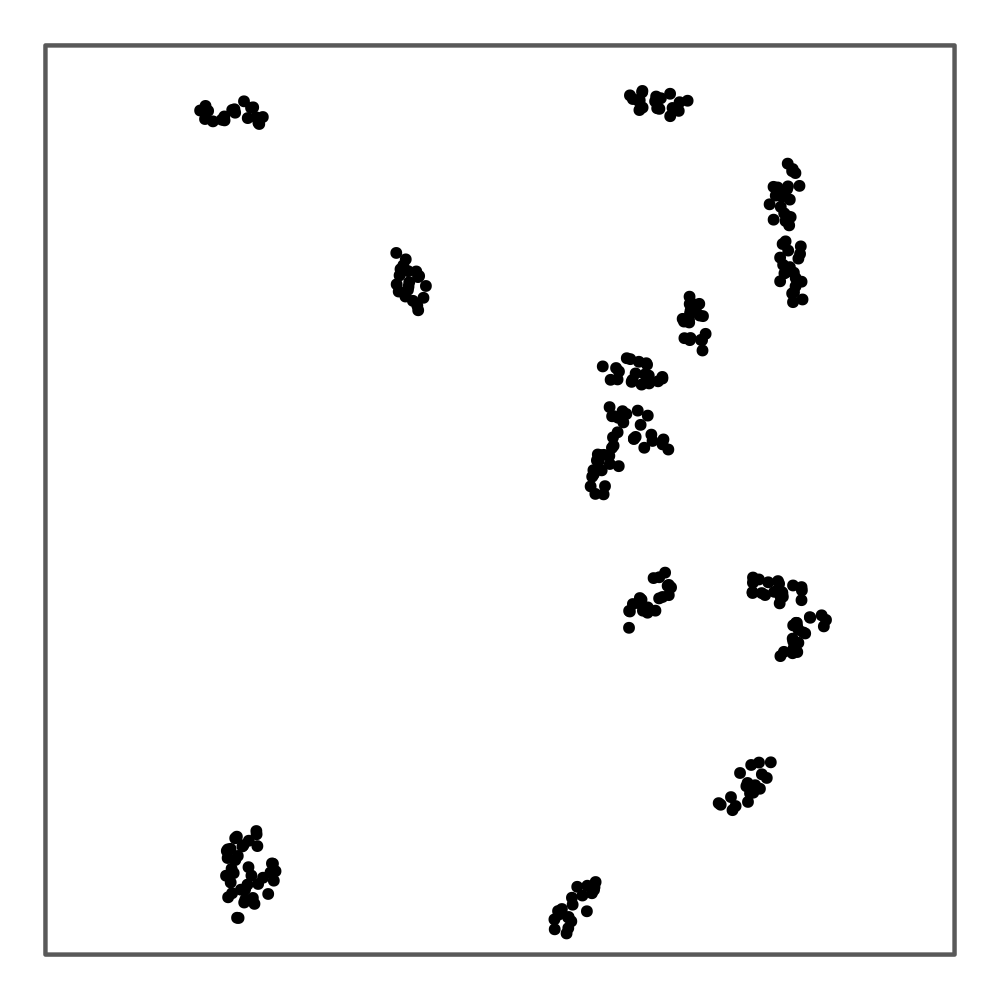}}\hfill 
	\subfloat[$R=20$, $\gamma=0.15$, $a=0.7$]{\includegraphics[width=0.245\textwidth, height=0.245\textwidth]{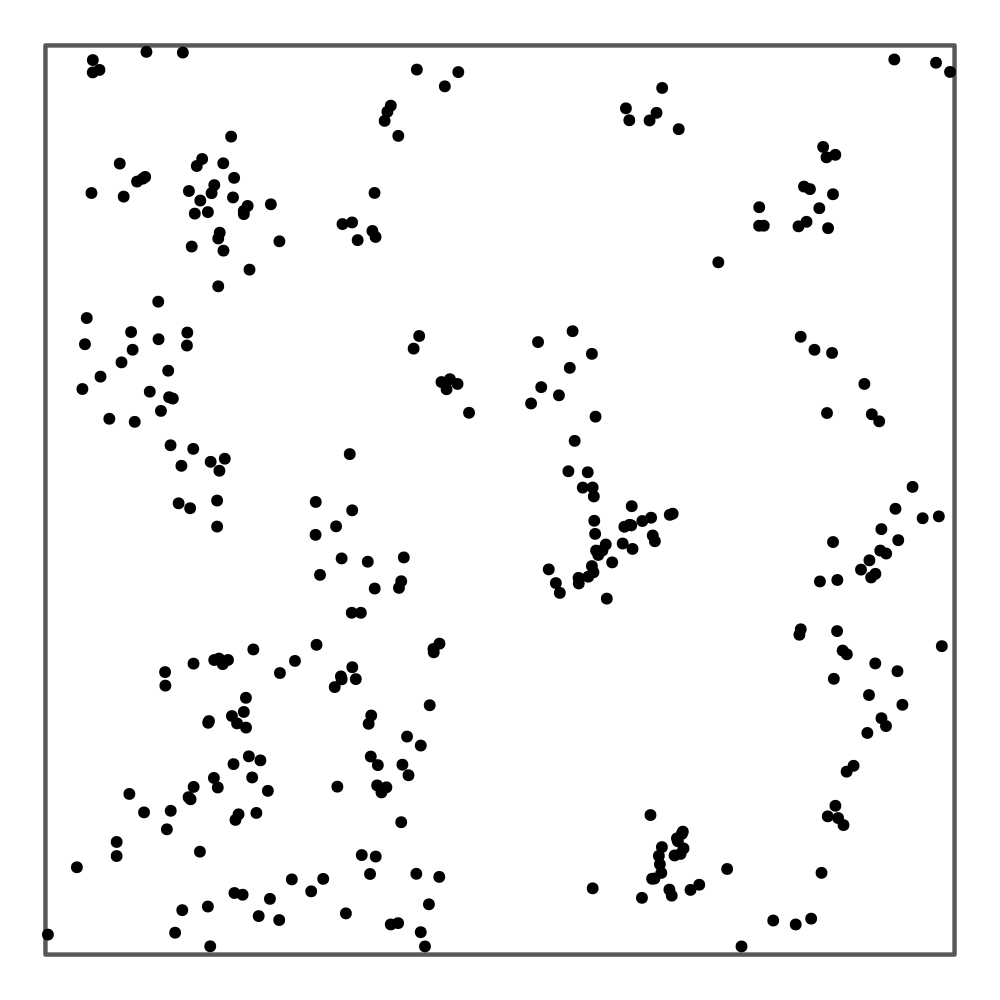}}\hfill
	\subfloat[$R=20$, $\gamma=0.25$, $a=0.7$]{\includegraphics[width=0.245\textwidth, height=0.245\textwidth]{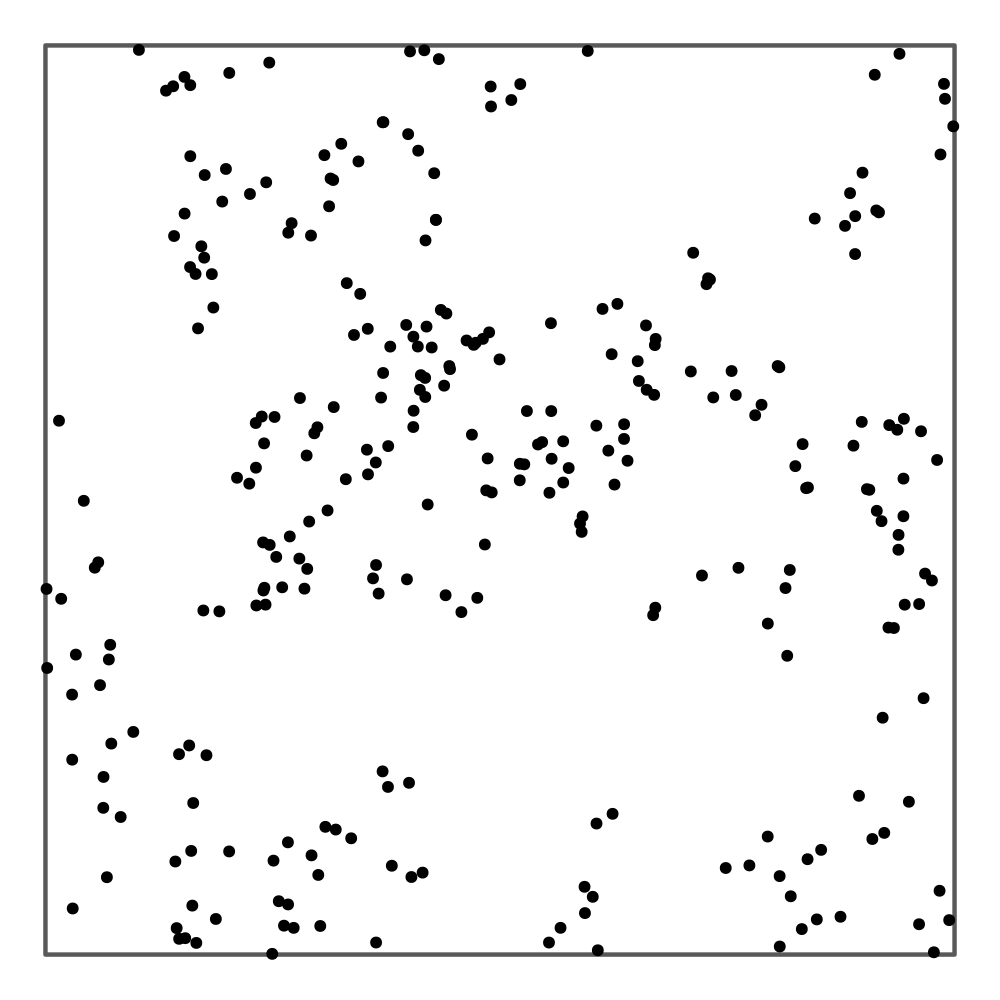}}
	\caption{Realizations of the four point process model families with varying parameters. 
 From top to bottom: Strauss, Thomas-like cluster, Poisson line cluster, Matérn-like cluster. The number of points in each pattern is approximately $n=300$. The leftmost column corresponds to isotropic point processes.}
	\label{fig:samples}
\end{figure}

We use the contrast of the sector $K$-function $K^c_{\mathrm{sect, \alpha_1, \alpha_2, \varepsilon}}$ on the interval $[0, r_{\max}]$ as functional contrast summary statistic. The individual empirical vectors are compared with either the integral ordering or the extreme rank length ordering as defined in Section~\ref{sec:statistic}. When using the contrast of the sector $K$-functions we have to choose four additional tuning parameters $\alpha_1, \alpha_2 \in [0, 2\pi]$, $\varepsilon \in [0, \nicefrac{\pi}{2}]$ and $0 < r_{\max} < \infty$. Additionally, we set the number of evaluation points of the functional statistic to $200$. 

The simulation study in \citet{rajala_tests_2022} showed that the half-opening angle $\varepsilon$ of the sector has little effect on the power of the isotropy tests when all other parameters are fixed. Consequently, we follow their recommendation and set $\varepsilon = \frac{\pi}{4}$. 

The two directions ideally should match the preferred directions in the point pattern. If no expert knowledge regarding the optimal choice of $\alpha_1$ and $\alpha_2$ is available, then a contrast maximization over a set of discrete angles can be incorporated into the ordering similar to the test statistic of \citet{wong_isotropy_2016} given in Equation~\eqref{eq:wong-chiu}. Due to the increased computation effort needed for the maximization, we will fix the two directions in our study. We use $\alpha_1 = 0$ and $\alpha_2 = \frac{\pi}{2}$ corresponding to the $x$- and $y$-axis, respectively, in case of the geometric anisotropic point processes. For the models where anisotropy is introduced by the preferred direction $\mu$ of the von Mises-Fisher distribution, we let $\alpha_1 = \mu$ and $\alpha_2 = \mu + \nicefrac{\pi}{2}$. 

It is well known \citep[see e.g.][]{rajala_tests_2022, wong_isotropy_2016, redenbach_anisotropy_2009}, that for many types of point process models the upper bound $r_{\max}$ needs to be chosen slightly larger than the interaction distance between points when working with second-order characteristics. In our parametrization framework this is given as the parameter $R$. We want to choose $r_{\max} = sR$ where $s > 1$ is a real-valued scaling factor.
\citet{redenbach_anisotropy_2009} state that $s = 1.1$ yields the best performance in case of hard-core processes with hard-core distance $R$. In \citet{wong_isotropy_2016} the choice of $s = 1.1$ results for regular pairwise interaction processes with interaction radius $R$ in high powers but also in high type I error rates in case of isotropic patterns. Hence, they advocate to use a scaling factor of $1.2 \leq s \leq 1.4$. In case of clustered patterns like the classical Thomas or Matérn cluster processes, $r_{\max}$ needs to match the size of the clusters. For a classical Thomas cluster point process which is parameterized through the standard deviation $\sigma$, \citet{wong_isotropy_2016} suggest to use $0.57\cdot 4\sigma \leq r_{\max} \leq 0.67\cdot 4\sigma$ since the diameter of the clusters is approximately $4\sigma$. In our parametrization this recommendation corresponds to $0.57\cdot 2 R \leq r_{\max} \leq 0.67\cdot 2R$ and thus a scaling factor of $1.14 \leq s \leq 1.34$.

In this study we opt for the scaling factor $s = 1.3$ for both clustered and regular point processes. \citet{rajala_tests_2022} additionally recommend that the upper bound $r_{\max}$ should be very large when observing linear structures in the point pattern. Consequently we set $r_{\max} = R + 25$ in the case of the Poisson line cluster process.

Due to the computational complexity we
used $M=99$ for all tests in this study.
The significance level is set to $\alpha=0.05$. Below, we summarize the results from the power study for each type of process separately. 

\subsubsection{Strauss point process with geometric anisotropy mechanism}

\begin{figure}
    \centering
    \includegraphics[width=0.9\textwidth]{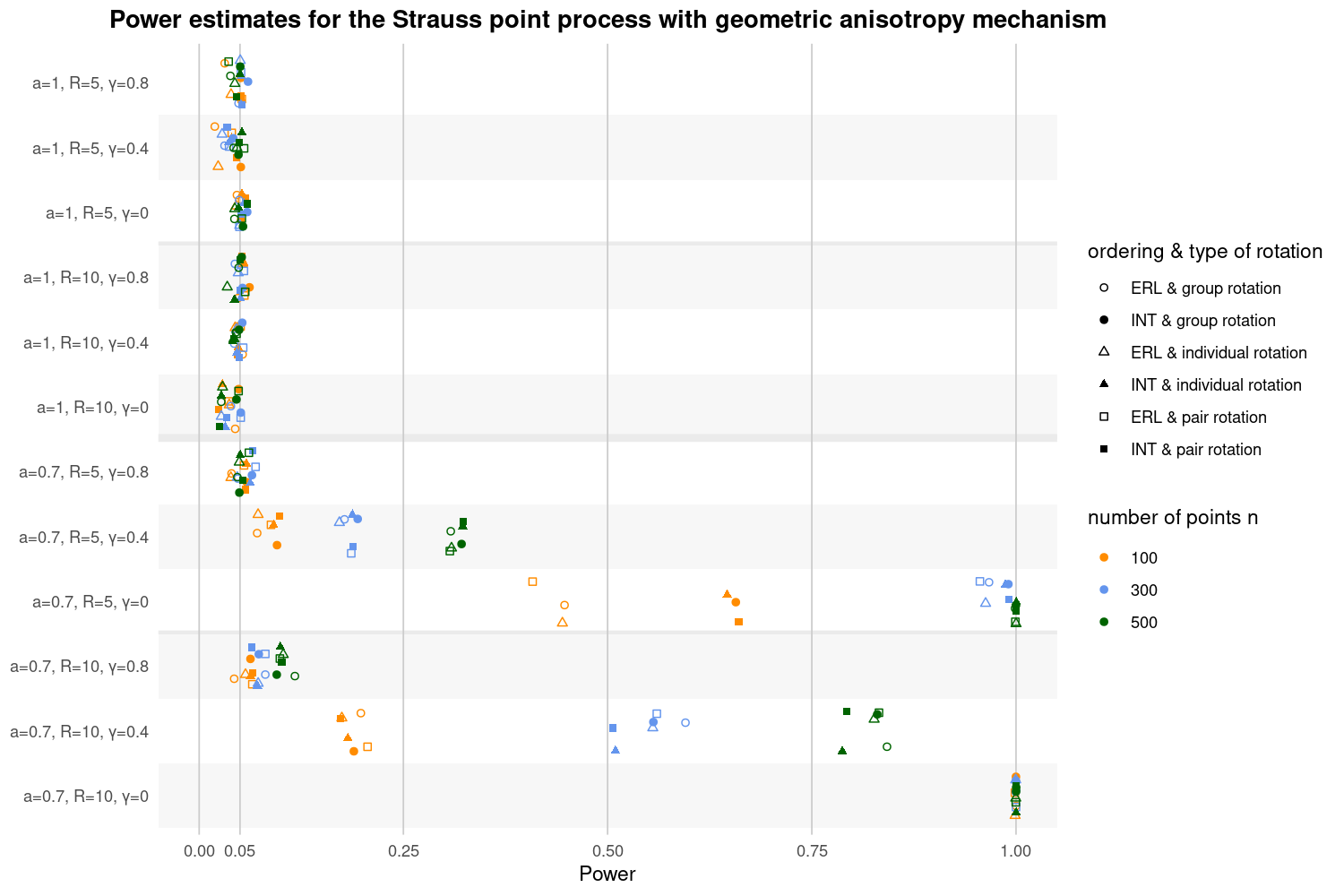}
    \caption{Power estimates for the Strauss point process with geometric anisotropy mechanism. Each parameter combination corresponds to one horizontal block. The upper six combinations represent isotropic models, while the lower six are anisotropic. }
    \label{fig:geomstrauss-results-all}
\end{figure}

All results for the geometric anisotropic Strauss point process are displayed in Figure~\ref{fig:geomstrauss-results-all}. As mentioned above the upper bound was set to $r_{\max} = 1.3\cdot R$ which results in either $6.5$ or $13$. The first six models with $a=1$ correspond to the case of isotropy. We can see that the tests are in some cases conservative, in particular when using the ERL ordering. Almost all tests are conservative for the hardcore Strauss process with $\gamma = 0$ and $R=10$. The strong regularity in this model makes the decision if there is anisotropy or not very easy and thus it is plausible that the empirical size tends to zero. 
Overall the combination of integral ordering and the group rotation indicated by the filled circles in Figure~\ref{fig:geomstrauss-results-all} yield tests with empirical size being close to the nominal level for all six parameter combination.

With regard to the power in case of the anisotropic models we observe as expected that the power increases for decreasing $\gamma$ and increasing $R$. In particular anisotropic hardcore processes are detected almost all the time and even for a small number of points. For $\gamma=0.8$ the isotropic process is close to a Poisson process and thus the geometric anisotropy is hard to detect. This explains the low powers close to the significance level. 

The choice of ordering, i.e. whether we use the integral or the ERL ordering for the comparison, does not have a big influence in most cases, only for the anisotropic hardcore model with $a=0.7$, $R=5$ and $\gamma=0$ the integral ordering is more powerful. 

\begin{figure}[ht]
\captionsetup[subfloat]{aboveskip=4pt, belowskip=4pt, labelformat=empty}
	\subfloat[$100$ points]{\includegraphics[width=0.245\textwidth, height=0.245\textwidth]{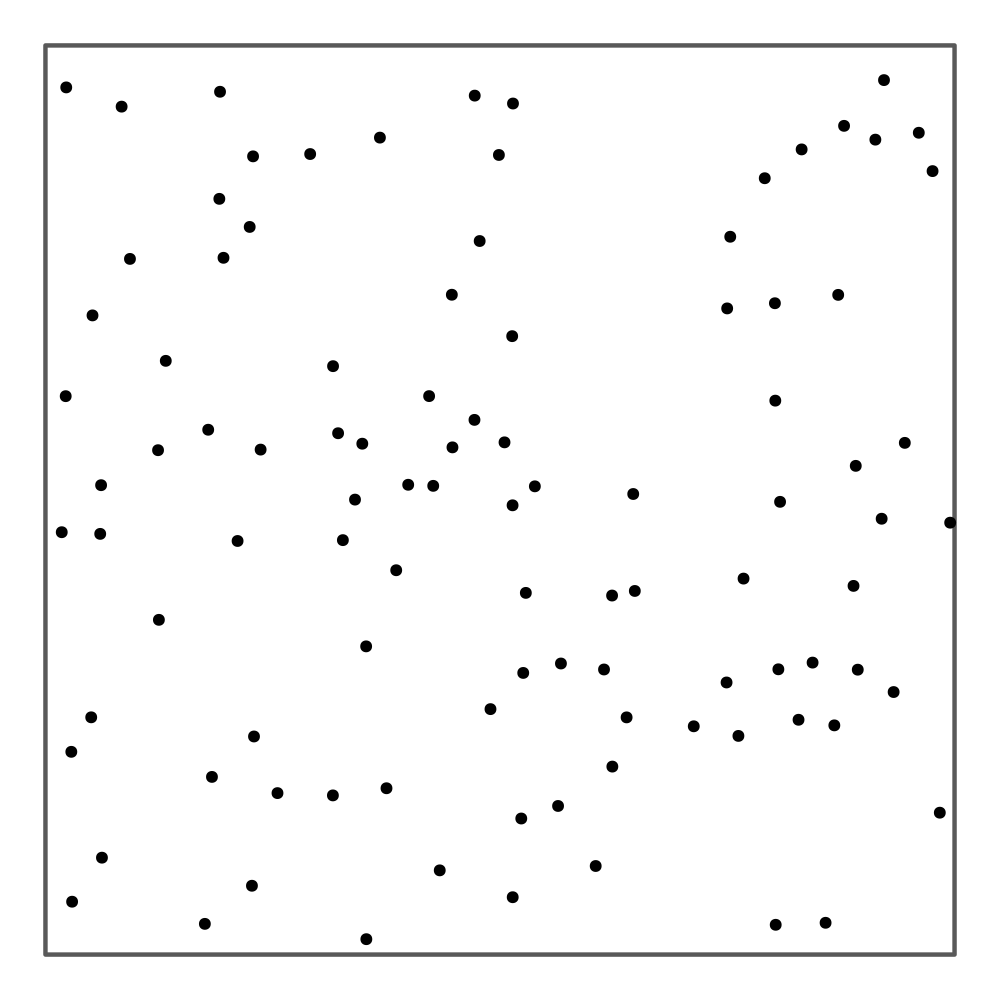}} \hfill
	\subfloat[Fry points in $b_{6.5}(0)$]{\includegraphics[width=0.245\textwidth, height=0.245\textwidth]{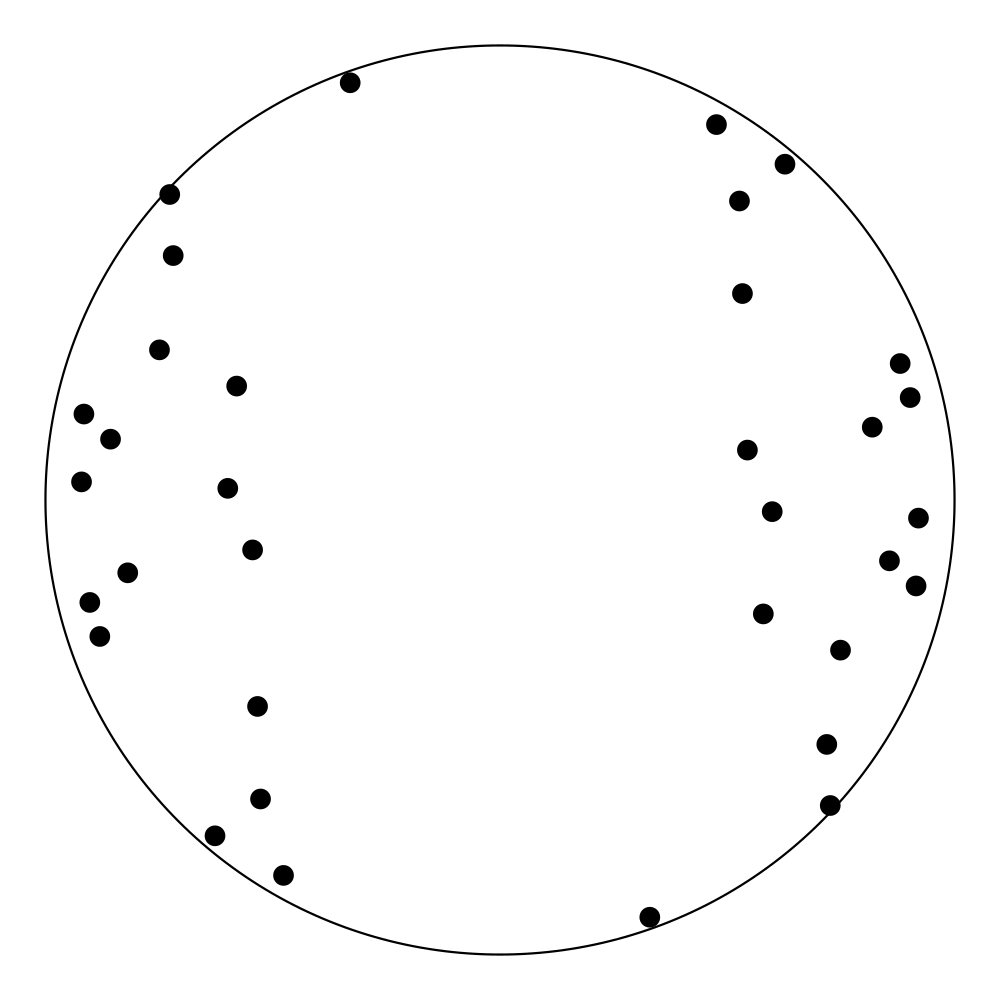}}\hfill 
	\subfloat[$300$ points]{\includegraphics[width=0.245\textwidth, height=0.245\textwidth]{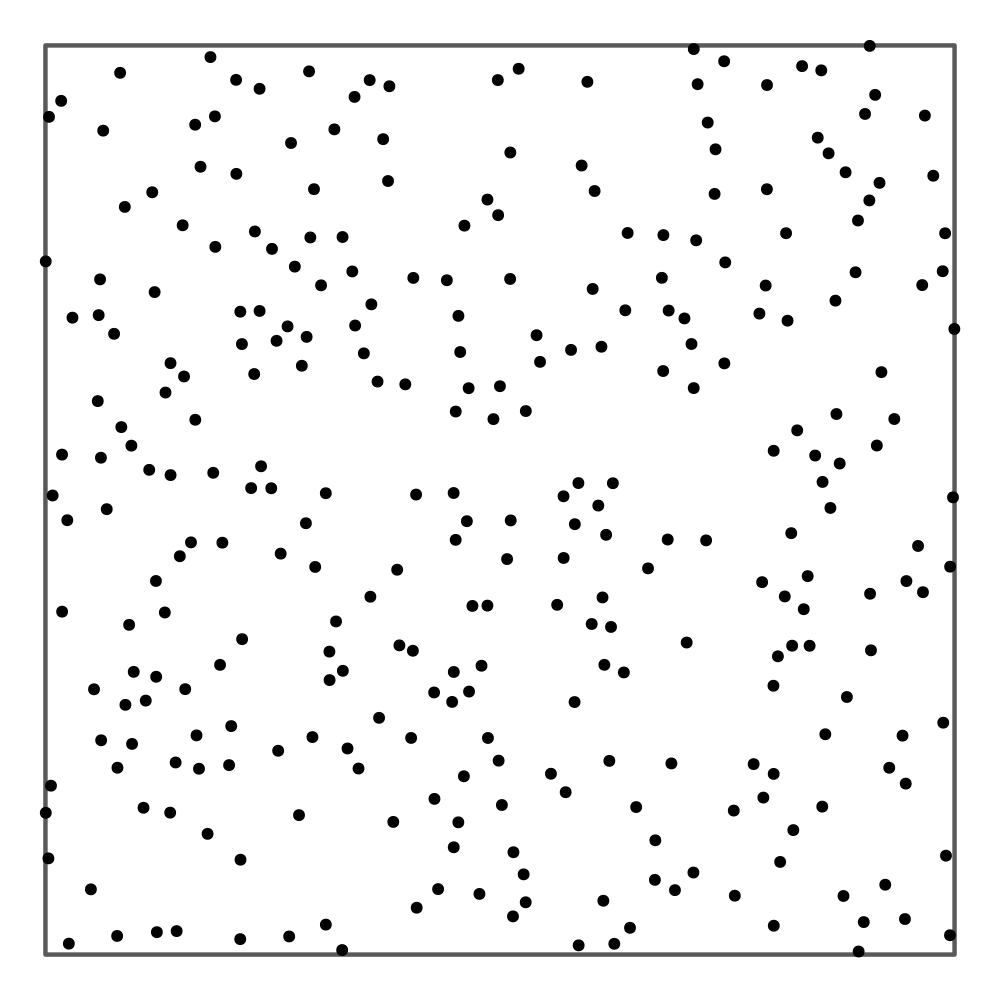}}\hfill
	\subfloat[Fry points in $b_{6.5}(0)$]{\includegraphics[width=0.245\textwidth, height=0.245\textwidth]{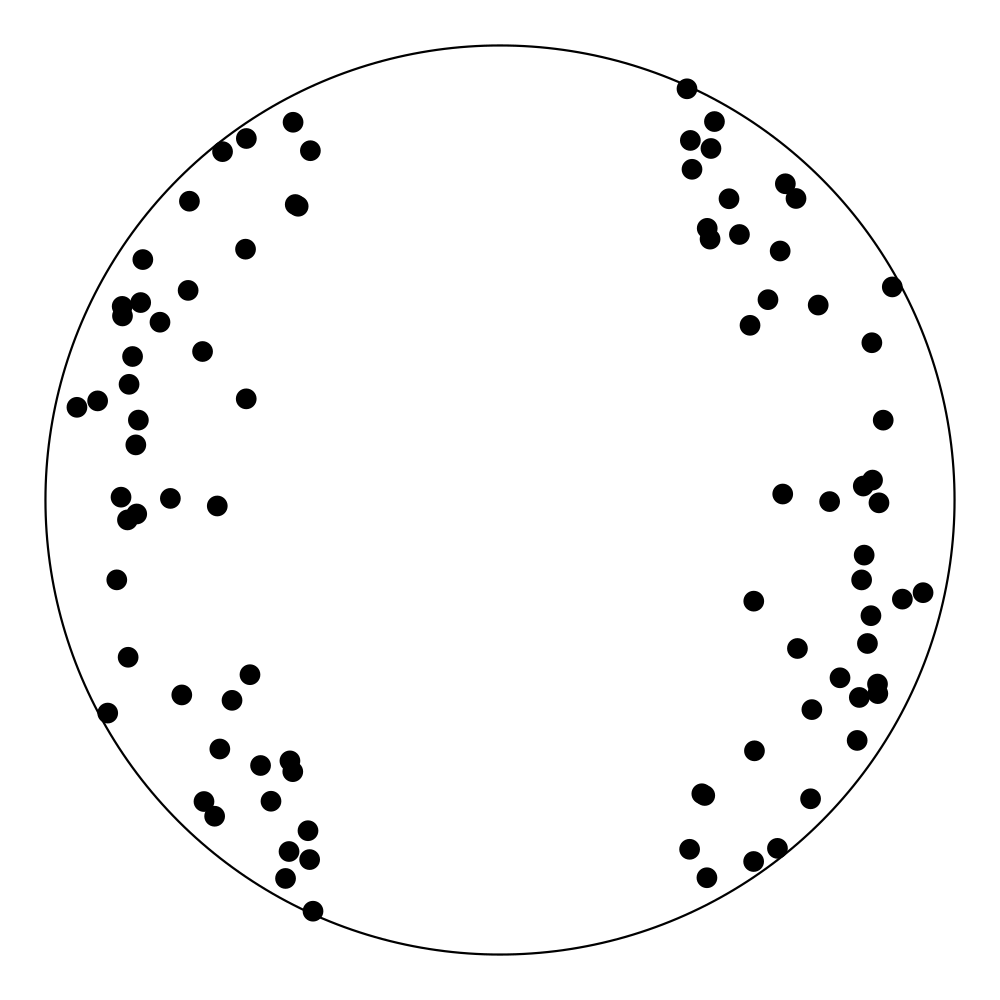}}
	\caption{Two realizations of the Strauss point process with geometric anisotropy mechanism for the parameter setting $a=0.7$, $R=5$, $\gamma=0$. The realization on the left has around $100$ points, the one on the right $300$. In our study we obtained Monte Carlo $p$-values of $0.01$ (integral ordering) and $0.14$ (ERL ordering) for the left pattern and of $0.03$ (integral ordering) and $0.08$ (ERL ordering) for the right pattern.}
	\label{fig:geomstrauss-number}
\end{figure}

This can be explained due to the low number of Fry points that are taken into account for hard-core point processes, see Figure~\ref{fig:geomstrauss-number}. For the pattern on the left that contains $100$ points, only $34$ Fry points lie in the set $b_{6.5}(0)$. Only four out of these points have norm smaller than $4$ and consequently we have a very small number of possible values of the contrast until $r=4$. With the extreme rank length ordering it is possible that contrast statistics obtained from the bootstrap samples are overall more extreme than the observed contrast just because they attain the highest/lowest value at the small distances. 

The three different rotation methods are overall comparable in their power for the anisotropic models and there is no preferred rotation scheme. 

\subsubsection{Thomas-like cluster point process with geometric anisotropy mechanism}

Figure~\ref{fig:geomthomas-results-all} shows all results obtained for the Thomas-like cluster point process with geometric anisotropy mechanism. As for the regular process before, the first six parameter combinations describe isotropic models and the remaining six anisotropic ones. For the clustered process we observe a clear difference in the empirical sizes for the three different rotation mechanisms. Only with the group-wise rotation it is possible to control the type I error probability. For the smaller clusters where the circular shape is very clear, the size matches the level, but for the larger clusters we obtain liberal tests. This is due to having more cases, where several clusters overlap and create even larger clusters whose shape dominates in the Fry plot. An example of this case is shown in Figure~\ref{fig:geomthomas-overlap}. Both patterns stem from isotropic models but have a Monte Carlo $p$-value of $0.01$ for the group-wise rotation with the integral ordering. Patterns like these are more likely in case of larger $R$ which explains the empirical size of more than $0.05$ for $R=20$ while the level is met for $R=10$.

\begin{figure}
    \centering
    \includegraphics[width=0.9\textwidth]{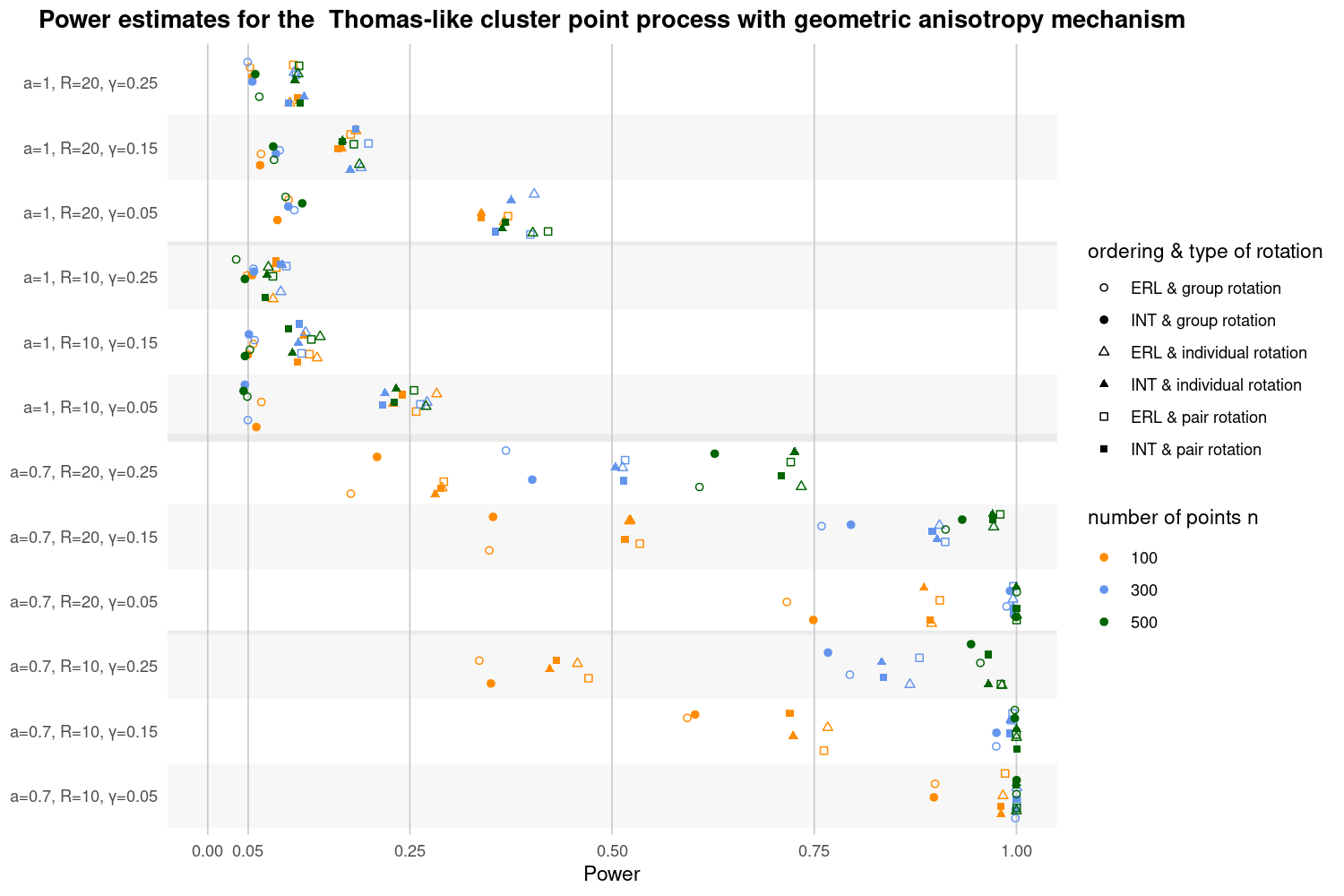}
    \caption{Power estimates for the Thomas-like cluster point process with geometric anisotropy mechanism. Each parameter combination corresponds to one horizontal block. The upper six combinations represent isotropic models, while the lower six are anisotropic.}
    \label{fig:geomthomas-results-all}
\end{figure}

For the discussion of the empirical power, we restrict to the group-wise rotation as the other two methods have no controlled type I error. Consequently, rejections of the null hypothesis are in general more likely for the other two approaches. For all six anisotropic models we observe high powers, in particular when having at least $300$ points. As for the regular process, the difference in power between the two orderings is small. Overall, the proposed tests are able to detect the geometric anisotropy even for many large but sparse clusters. 

\begin{figure}[ht]
\captionsetup[subfloat]{aboveskip=4pt, belowskip=4pt, labelformat=empty}
	\subfloat[large $R=20$]{\includegraphics[width=0.245\textwidth, height=0.245\textwidth]{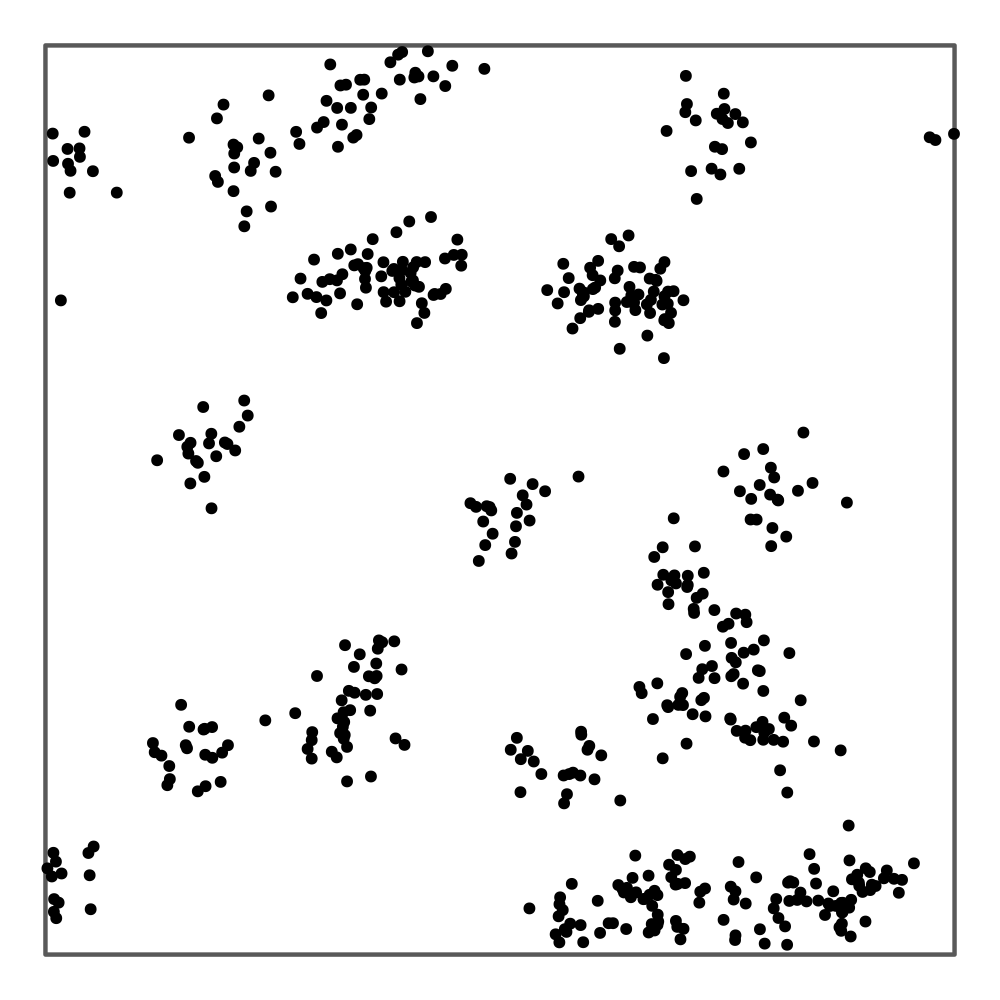}} \hfill
	\subfloat[Fry points in $b_{1.3R}(0)$]{\includegraphics[width=0.245\textwidth, height=0.245\textwidth]{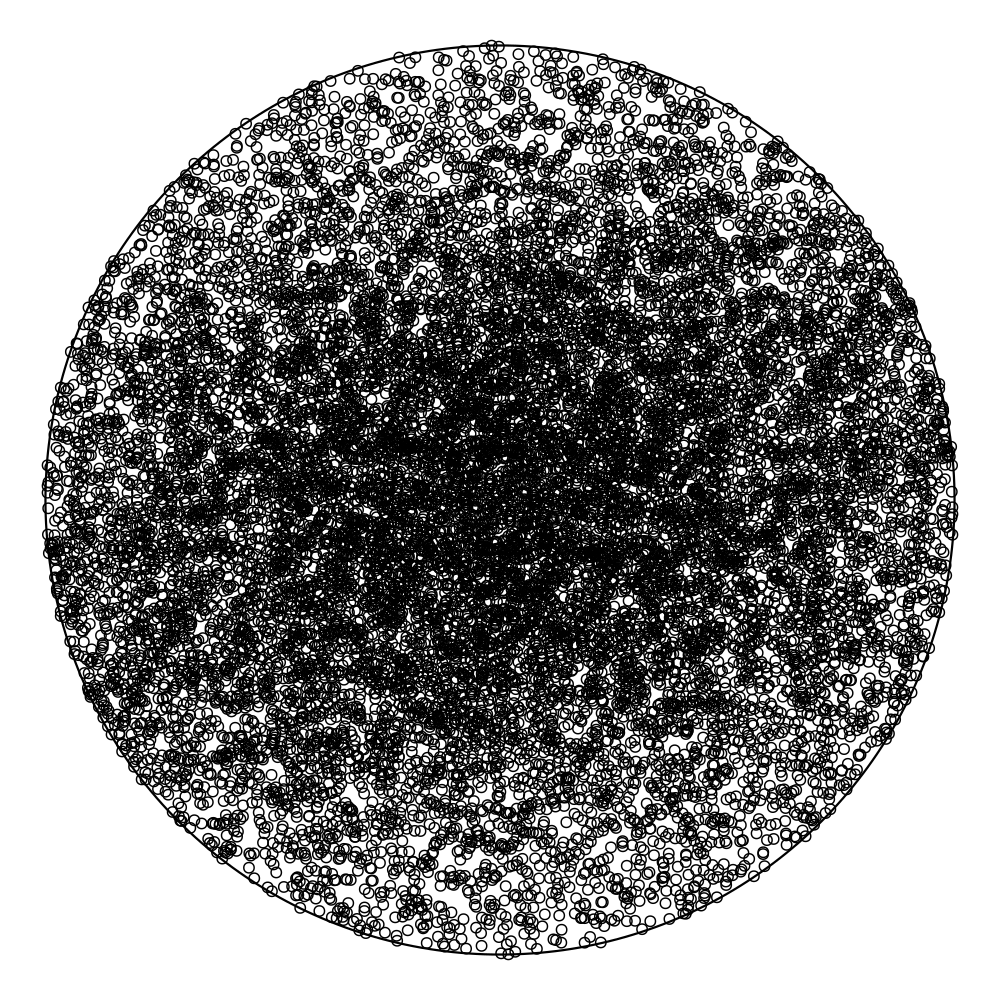}}\hfill 
	\subfloat[small $R=10$]{\includegraphics[width=0.245\textwidth, height=0.245\textwidth]{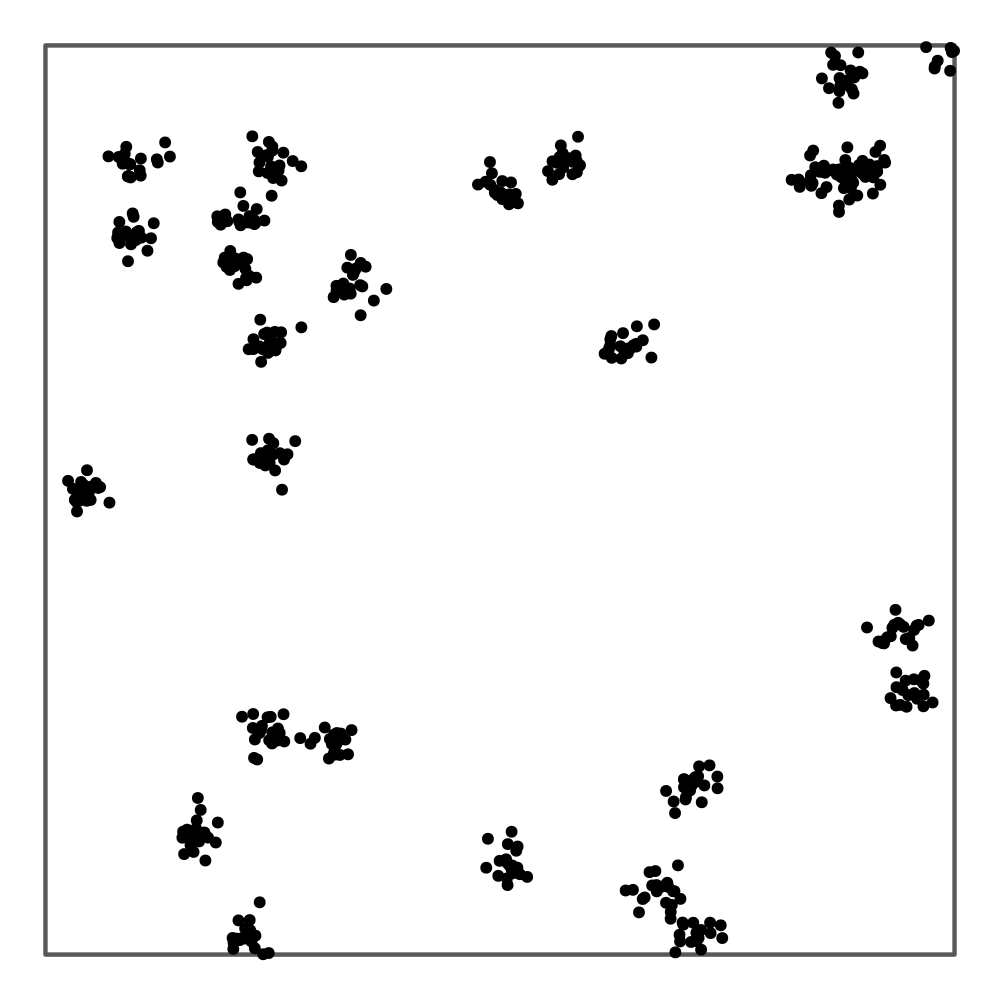}}\hfill
	\subfloat[Fry points in $b_{1.3R}(0)$]{\includegraphics[width=0.245\textwidth, height=0.245\textwidth]{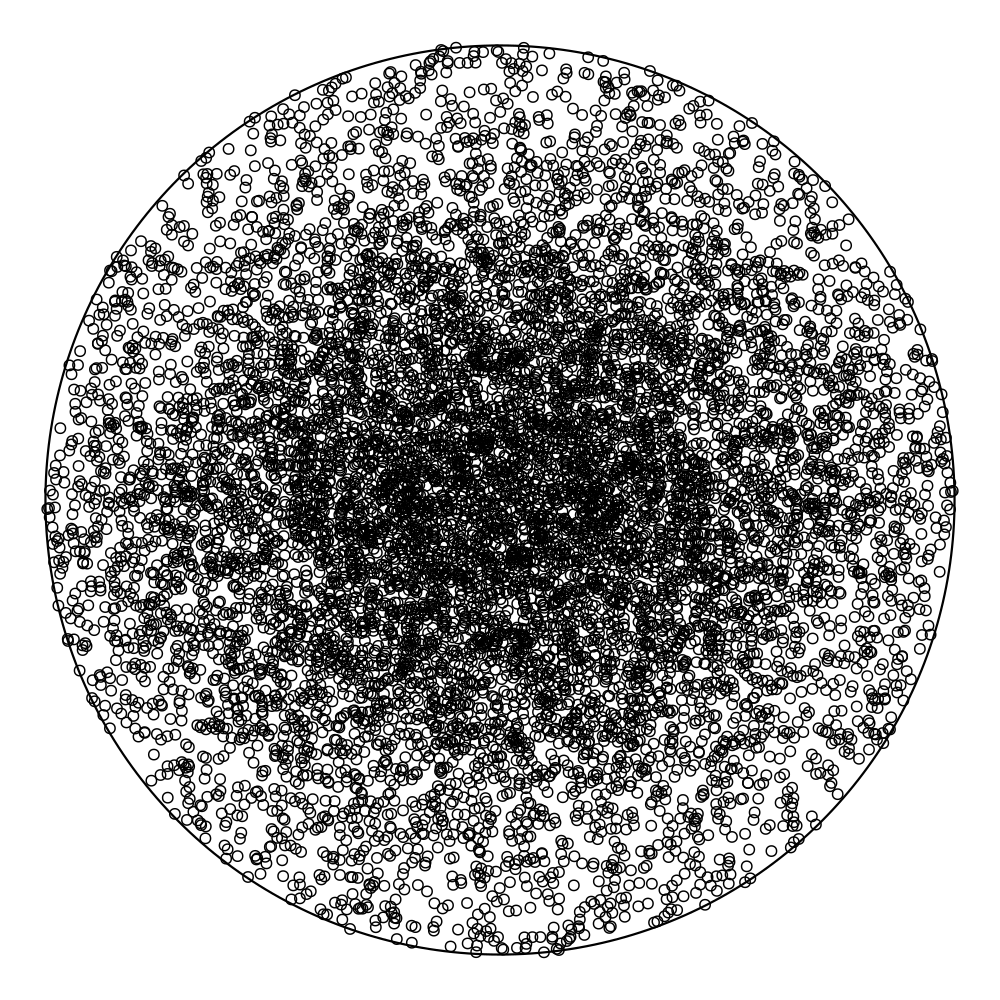}}
	\caption{One realization of the isotropic Thomas-like cluster point process for the parameter setting $a=1$, $R=20$, $\gamma=0.05$ (left) and $a=1$, $R=10$, $\gamma=0.05$ (right). Realizations have around $500$ points. In both Fry plots we observe a horizontal accumulation and thus the visible cluster in the center no longer looks rotation invariant. In both cases, the proposed test rejects the null hypothesis of isotropy with a $p$-value of $0.01$.}
	\label{fig:geomthomas-overlap}
\end{figure}

\subsubsection{Poisson line cluster point process with von Mises-Fisher distributed line directions}

The results for the Poisson line cluster point process are shown in Figure~\ref{fig:linecluster-results-all}. The behaviour compared to the previous two model classes is a bit different. Here, anisotropy needs to be identified through the direction of the lines. The model is isotropic if the directions are uniformly distributed. 
The test based on the rotation of the Fry points clearly fails to keep the nominal level if the number of lines (controlled by $\gamma$) is too small. In this case, the sample of orientations is not sufficient to accurately represent the uniform distribution on $\So$. One can identify the individual lines in the Fry plot such that the distribution looks rather discrete, see Figure~\ref{fig:fry-group-rot-line}. In case of isotropy, the visual lines in the Fry plot are less clustered than in the anisotropic case, but nevertheless it is far from rotational invariance. The samples obtained through the group-wise rotation blur these lines as shown in Figure~\ref{fig:fry-group-rot-line}. Consequently, in the Monte Carlo test even the isotropic patterns often show a more extreme structure compared to the bootstrap samples. For this reason, the rejection rate of the isotropic patterns is extremely high.
The problem gets worse, if we increase the number of points as in our parametrization this only changes the number of points per line. With more points, the linear structures within the Fry plots are even more pronounced as shown in Figure~\ref{fig:linecluster-discretization}.

As for the anisotropic patterns, we are able to detect the anisotropy in almost all the cases considered. However, as discussed, this comes at the cost of not having a controlled type I error. 

Based on this study, one should avoid using the proposed nonparametric test by random rotations when only a few collinear sets of points are visible as one cannot trust the estimated $p$-value at all. The test will only be able to detect the existence of the collinear structures but no differences in the angular distributions.

\begin{figure}[ht]
\captionsetup[subfloat]{aboveskip=4pt, belowskip=4pt, labelformat=empty}
	\subfloat[isotropic with $300$ points]{\includegraphics[width=0.245\textwidth, height=0.245\textwidth]{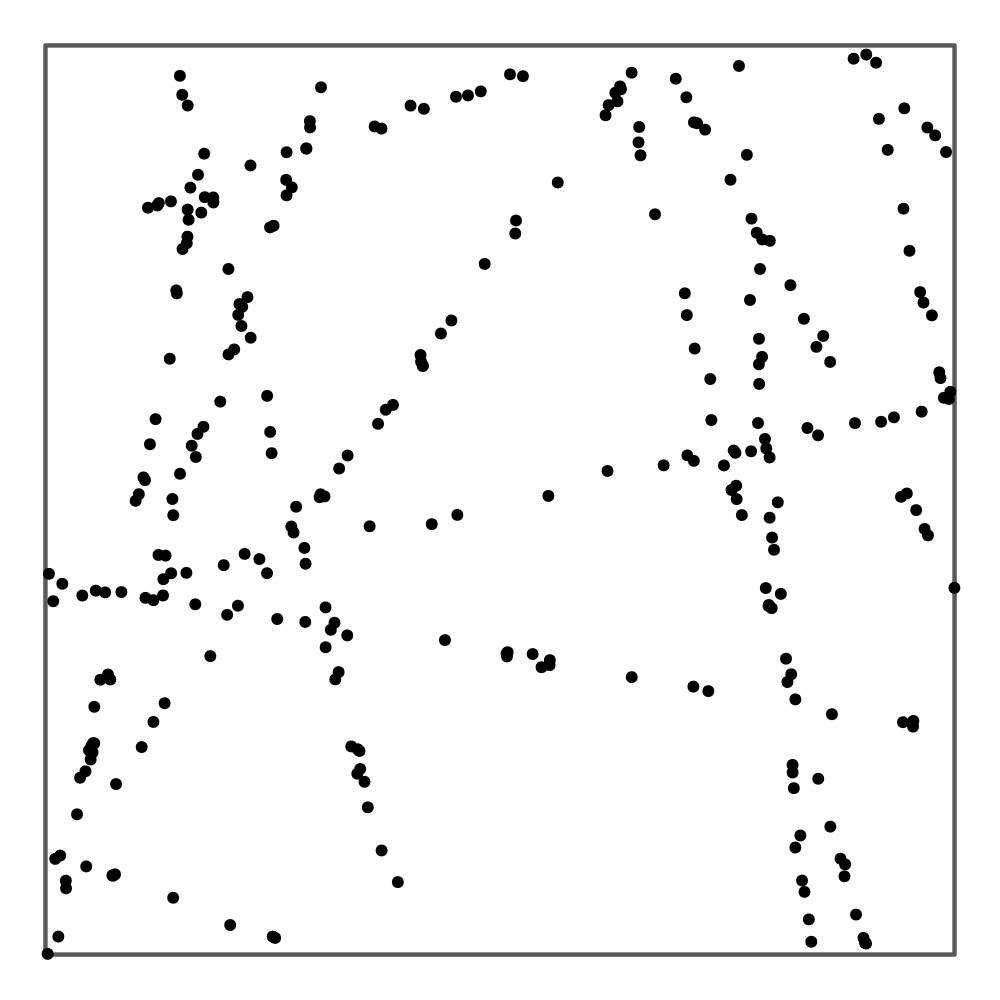}} \hfill
	\subfloat[Fry points in $b_{26}(0)$]{\includegraphics[width=0.245\textwidth, height=0.245\textwidth]{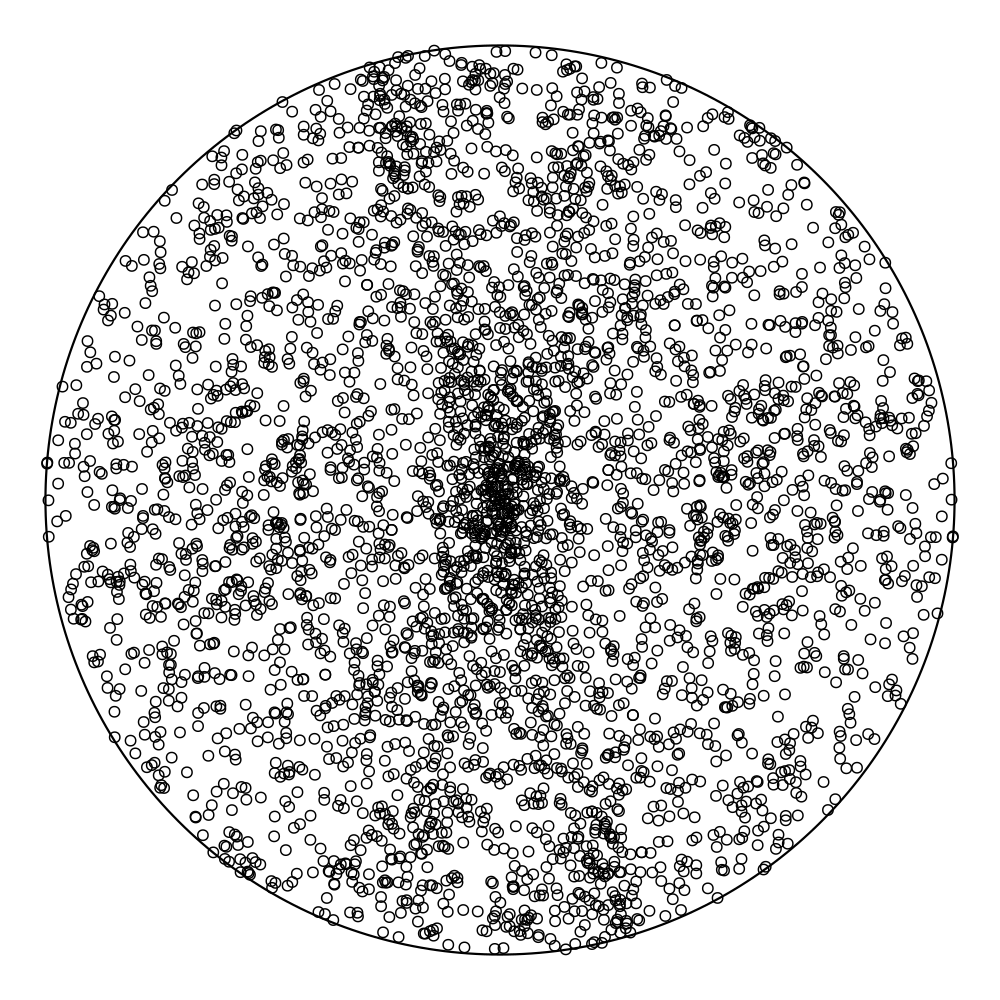}}\hfill 
	\subfloat[isotropic with $500$ points]{\includegraphics[width=0.245\textwidth, height=0.245\textwidth]{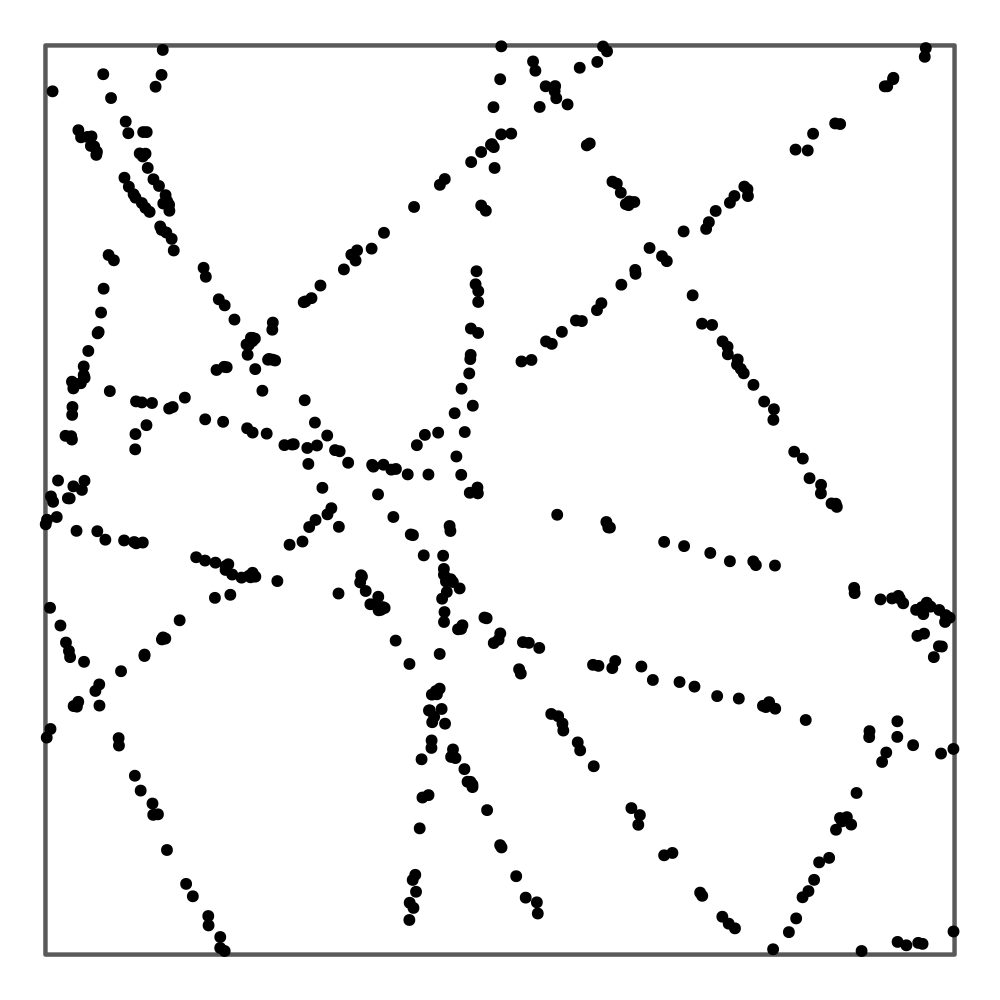}}\hfill
	\subfloat[Fry points in $b_{26}(0)$]{\includegraphics[width=0.245\textwidth, height=0.245\textwidth]{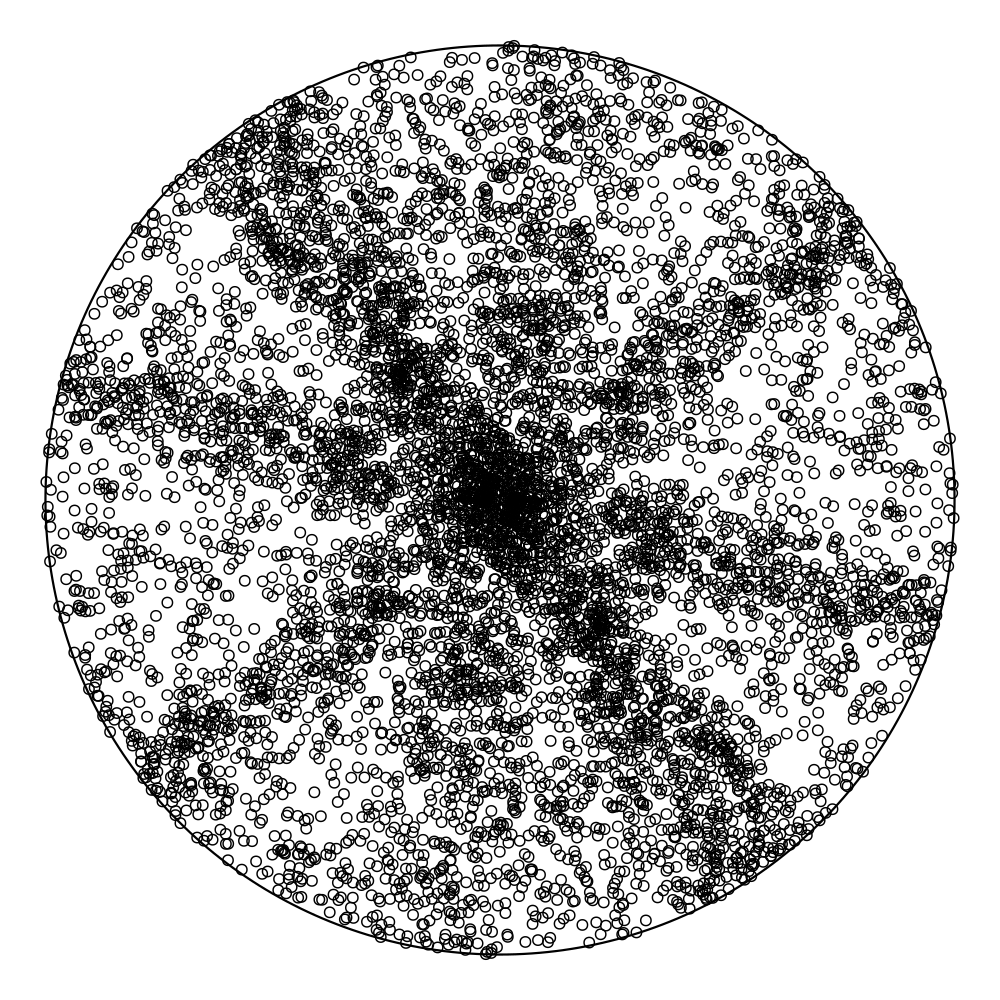}}
	\caption{One realization of the Poisson line cluster point process for the isotropic parameter setting $a=1$, $R=1$, $\gamma=0.4$ for $300$ points (left) and $500$ points (right). In both Fry plots we can identify the lines. The proposed test with group-wise rotation and integral ordering rejects the null hypothesis of isotropy in both cases with a $p$-value of $0.01$.}
	\label{fig:linecluster-discretization}
\end{figure}

\subsubsection{Matérn-like cluster point process with elliptical clusters with preferred direction}

The results of the Matérn-like cluster process with anisotropic clusters having a preferred direction is shown in Figure~\ref{fig:matern-results-all}. This model has elliptical clusters whose main direction is either uniformly distributed ($a=1$) or has a preferred direction of $\pi/3$ ($a=0.7$). 
Similar to the line cluster process, anisotropy in this model is induced by the orientation of the ellipses. Again, in case of only a few clusters ($\gamma=0.05$) the sample of orientations is too small to be able to identify the uniform distribution, see Figure~\ref{fig:matern-discretization}. Consequently, the null hypothesis of isotropy is rejected far too often. With an increasing number of clusters (i.e. increasing $\gamma$), also the empirical size goes down to the nominal level. Thus, if the anisotropy lies in the directional distribution of clusters, we obtain valid tests only if the number of clusters is sufficiently high. This holds for both types of orderings.

As for the cluster point process with geometric anisotropy mechanism the group-wise rotation yields the smallest type I error probabilities which are the closest to the nominal level. But consequently, it results in smaller powers for the anisotropic models in comparison with the two other rotation techniques. Nonetheless, the empirical power using the group-wise rotation is still high, and the anisotropy induced by the von Mises-Fisher distribution is clearly detected.

\begin{figure}[ht]
\captionsetup[subfloat]{aboveskip=4pt, belowskip=4pt, labelformat=empty}
	\subfloat[isotropic $a=1.0$]{\includegraphics[width=0.245\textwidth, height=0.245\textwidth]{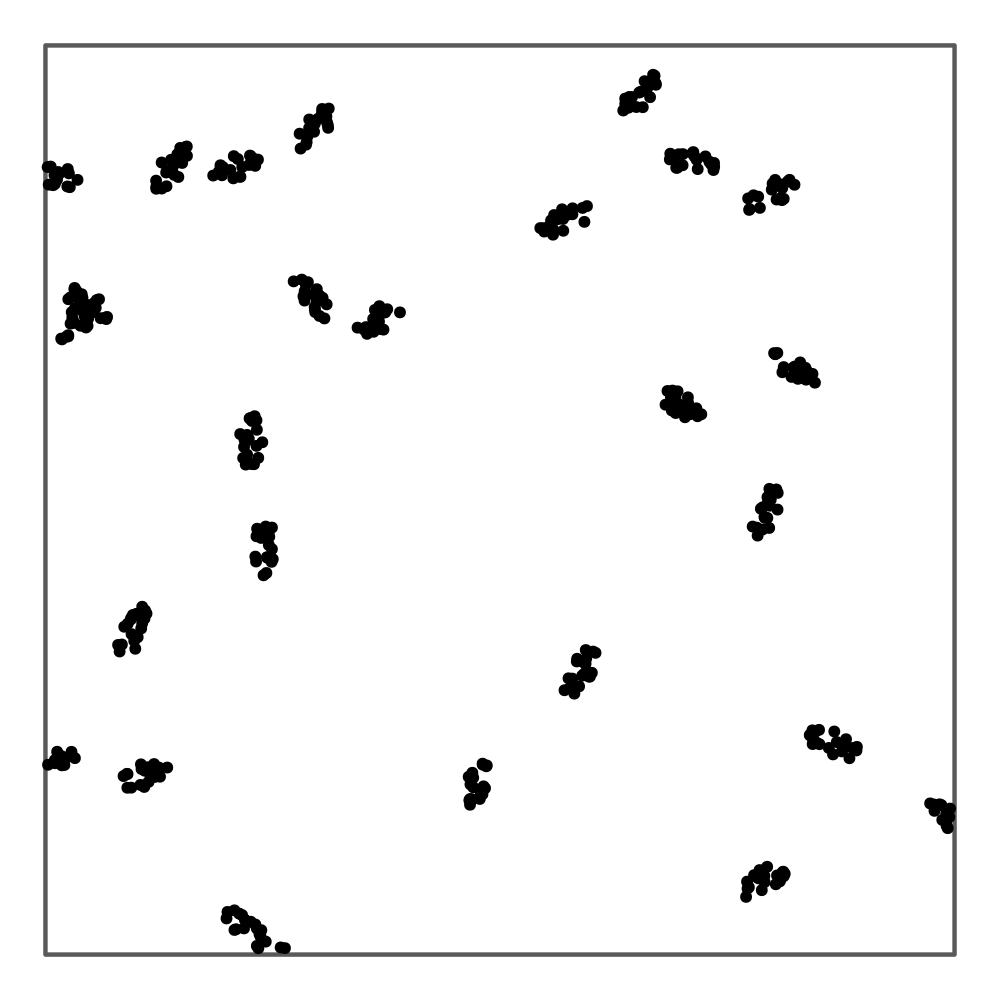}} \hfill
	\subfloat[Fry points in $b_{1.3R}(0)$]{\includegraphics[width=0.245\textwidth, height=0.245\textwidth]{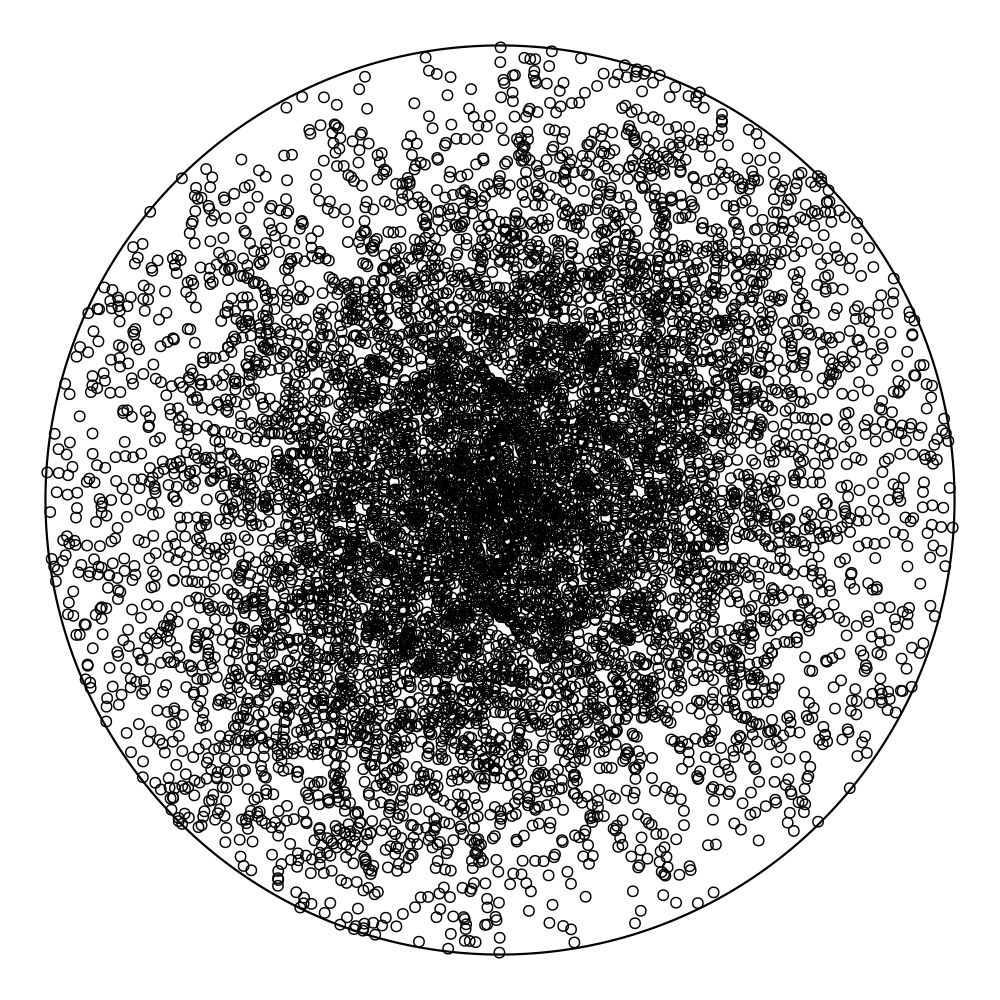}}\hfill 
	\subfloat[anisotropic $a=0.7$]{\includegraphics[width=0.245\textwidth, height=0.245\textwidth]{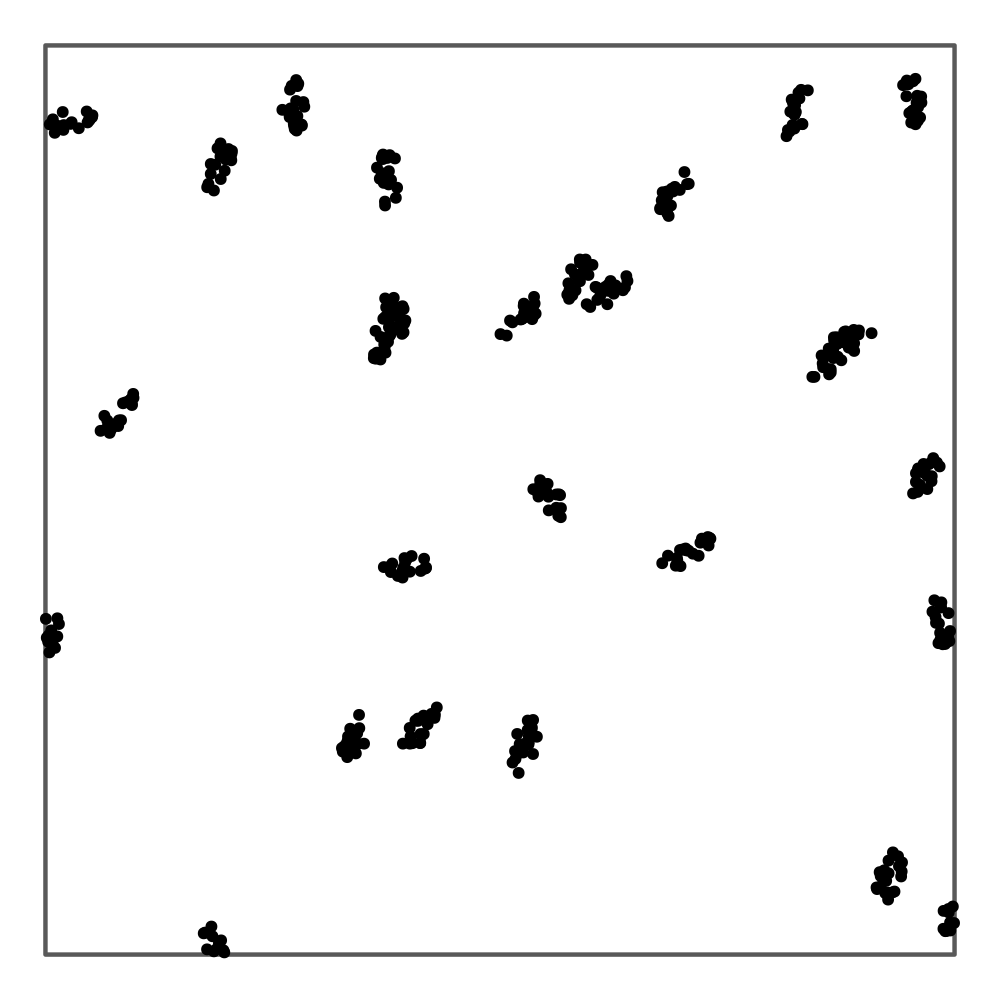}}\hfill
	\subfloat[Fry points in $b_{1.3R}(0)$]{\includegraphics[width=0.245\textwidth, height=0.245\textwidth]{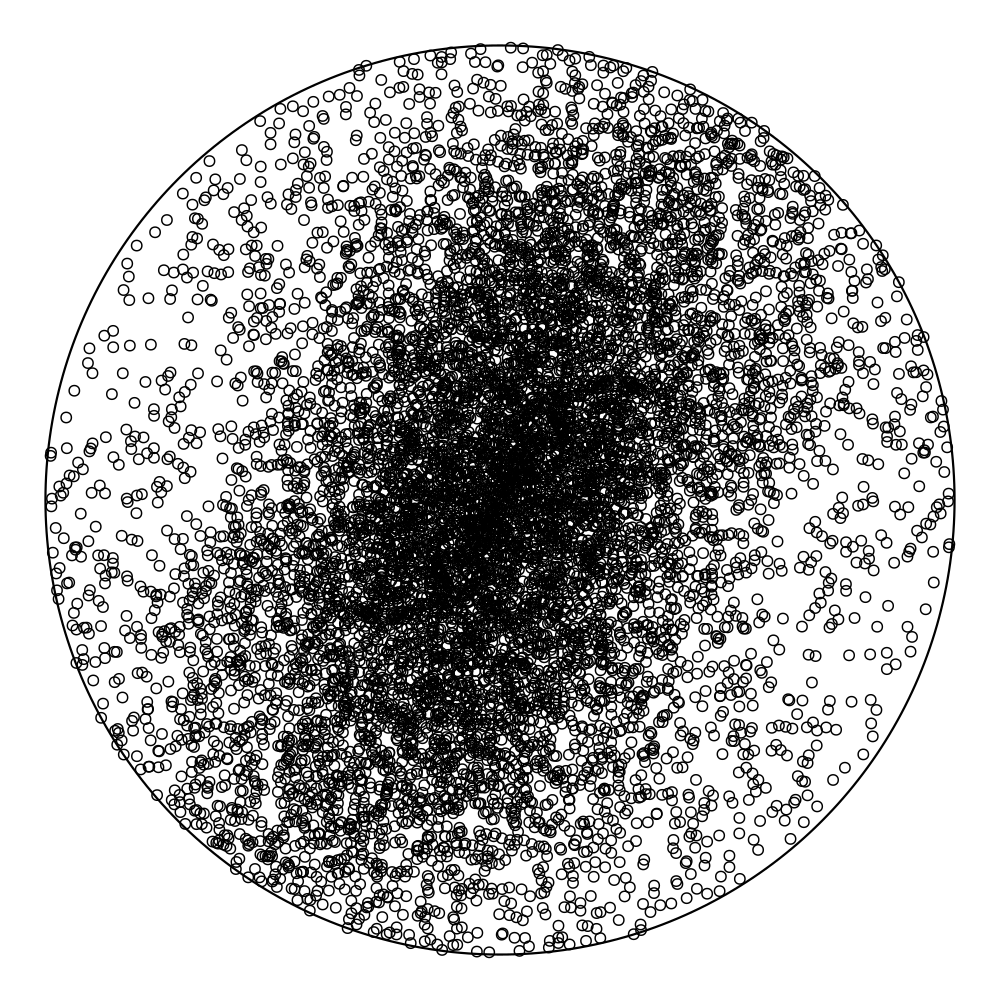}}
	\caption{One realization of the Matérn-like cluster point process with elliptical clusters for the parameter setting $a=1$, $R=10$, $\gamma=0.05$ (left) and $a=0.7$, $R=10$, $\gamma=0.05$ (right). Realizations have around $500$ points. In both Fry plots we observe a non-circular cluster in the center. The proposed test with group-wise rotation and integral ordering rejects the null hypothesis of isotropy in both cases with a $p$-value of $0.01$.}
	\label{fig:matern-discretization}
\end{figure}

\begin{figure}
    \centering
    \includegraphics[width=0.9\textwidth]{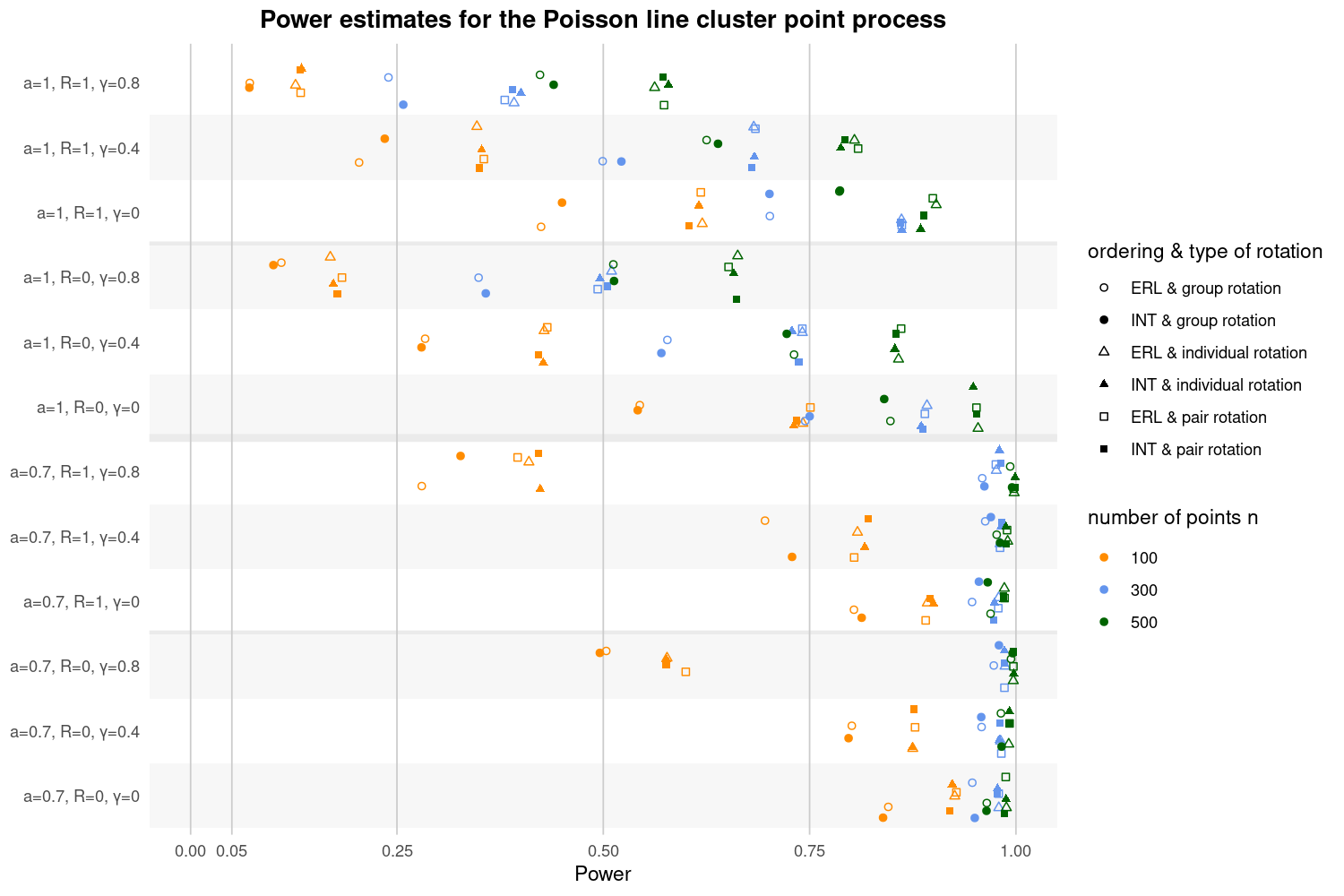}
    \caption{Power estimates for the Poisson line cluster point process. Each parameter combination corresponds to one horizontal block. The upper six combinations represent isotropic models, while the lower six are anisotropic.}
    \label{fig:linecluster-results-all}
\end{figure}

\begin{figure}
    \centering
    \includegraphics[width=0.9\textwidth]{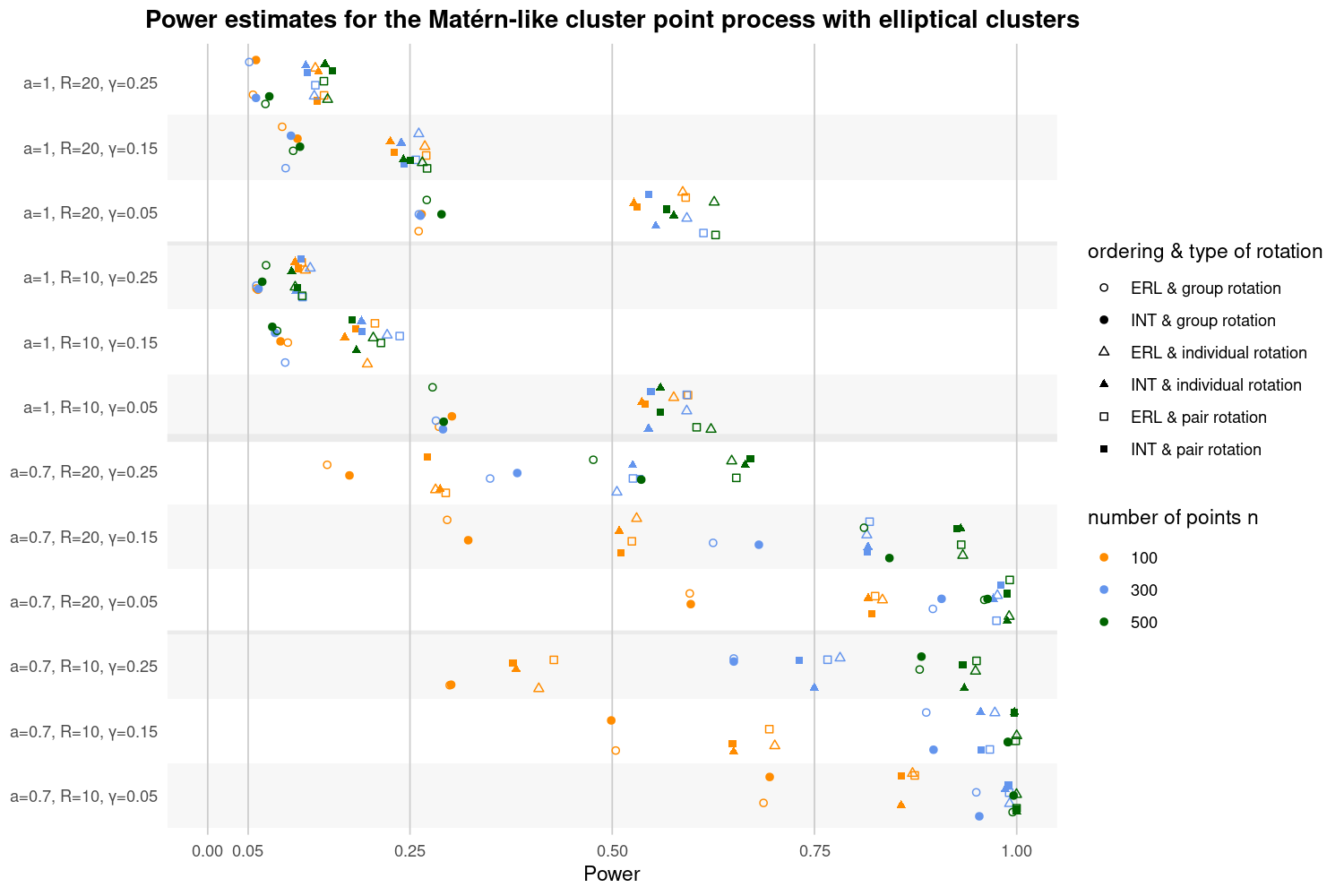}
    \caption{Power estimates for the Matérn-like cluster process with elliptical clusters. Each parameter combination corresponds to one horizontal block. The upper six combinations represent isotropic models, while the lower six are anisotropic.}
    \label{fig:matern-results-all}
\end{figure}

\FloatBarrier

\section{Example: Amacrine cells data} \label{sec:real-data}

In this section, we will briefly discuss how the nonparametric isotropy test by random rotations can be applied to real data. As we have seen that the test worked well for regular patterns, we decided to use the \emph{amacrine cells} data set that was introduced in \citet{baddeley_modelling_2006} and which is available in the R-package \texttt{spatstat} \citep{spatstat}. The same data set has also been investigated with regard to isotropy in \citet{wong_isotropy_2016} and \citet{rajala_tests_2022} which allows for a comparison with our proposed resampling approach.

The data consists of a marked point pattern in the observation window $ [0, 1.6012085] \times [0,1]$ where one unit corresponds to $662~\mu\textrm{m}$. In total there are $294$ points, $152$ of them marked as \emph{on} and the other $142$ as \emph{off}. \citet{rajala_tests_2022} investigated only the pattern containing the \emph{off} cells, while \citet{wong_isotropy_2016} considered the total unmarked pattern as well as the patterns for each of the two marks.

As in \citet{rajala_tests_2022}, we choose the two directions for the contrast statistic of the sector $K$-function based on the shape of the elliptical central void in the Fry plots. This yields for the unmarked pattern the angles $\alpha_1 = -45$° and $\alpha_2 = 45$°, for the \emph{on} cells $\alpha_1 = -10$° and $\alpha_2 = 80$° and for the \emph{off} cells $\alpha_1 = 60$° and $\alpha_2 = 150$°. As in the simulation study, we keep the half-opening angle of $\varepsilon = \frac{\pi}{4}$ for the sector $K$-function. Figure~\ref{fig:amacrine} shows the marked point pattern as well as the Fry plots overlaid with the sectors. Based on these plots, we use several $r_{\max}$ values between $0.08$ and $0.12$.
 
\begin{figure}[th]
\centering
    \subfloat[Amacrine cell data with cells marked \emph{on} as filled circles $\bullet$ and \emph{off} cells as unfilled circles $\circ$.]{\includegraphics[width=0.4\textwidth, valign=c]{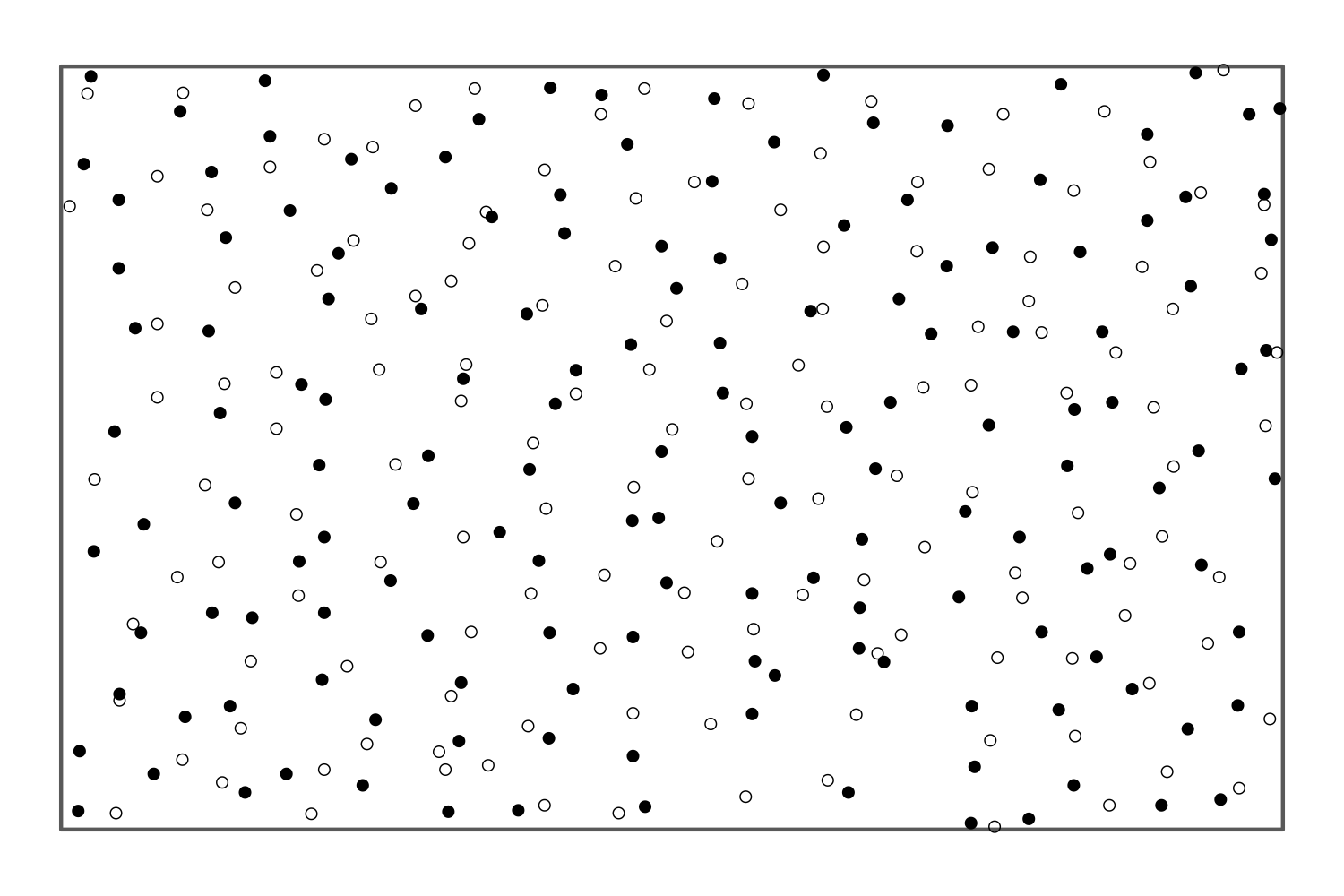}}\hfill
     \subfloat[both cells]{\includegraphics[width=0.19\textwidth, valign=c]{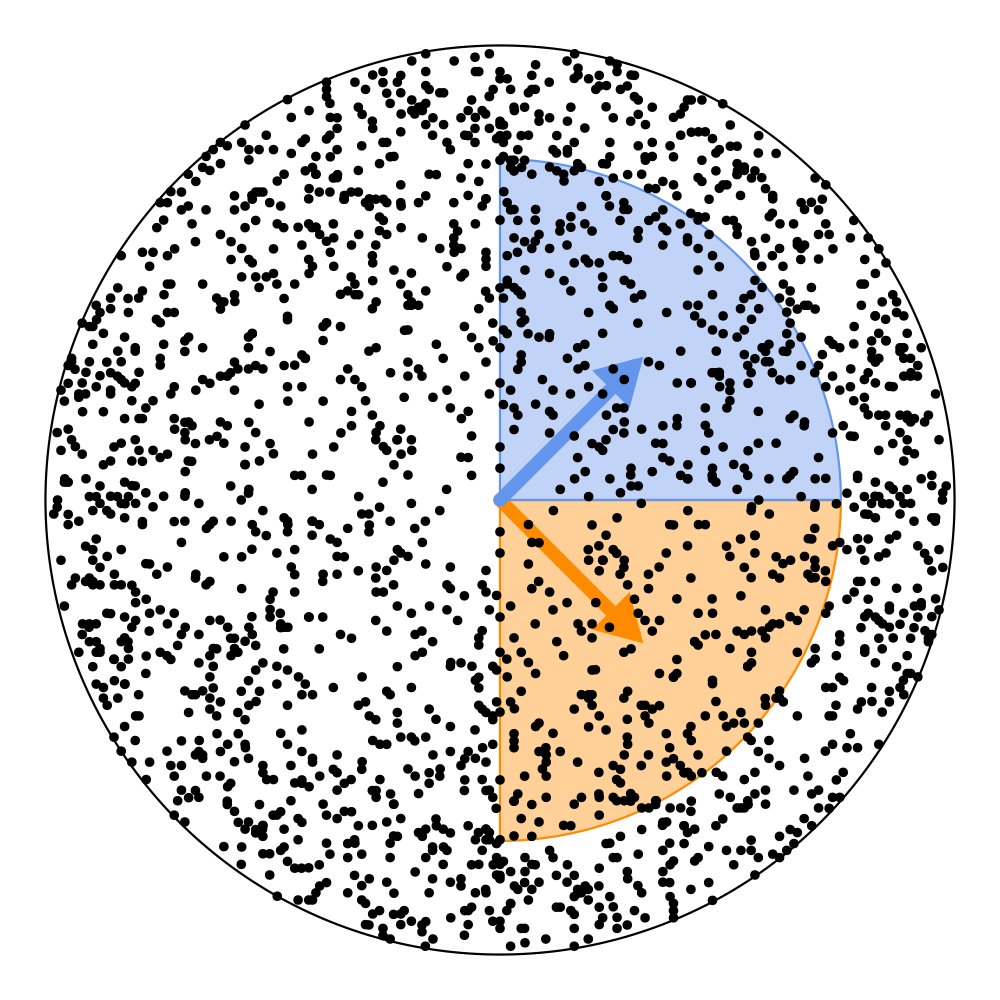}
     \vphantom{\includegraphics[width=0.4\textwidth, valign=c]{img/real-data/amacrine-marked.png}}}\hfill
 \subfloat[\emph{on} cells]{\includegraphics[width=0.19\textwidth, valign=c]{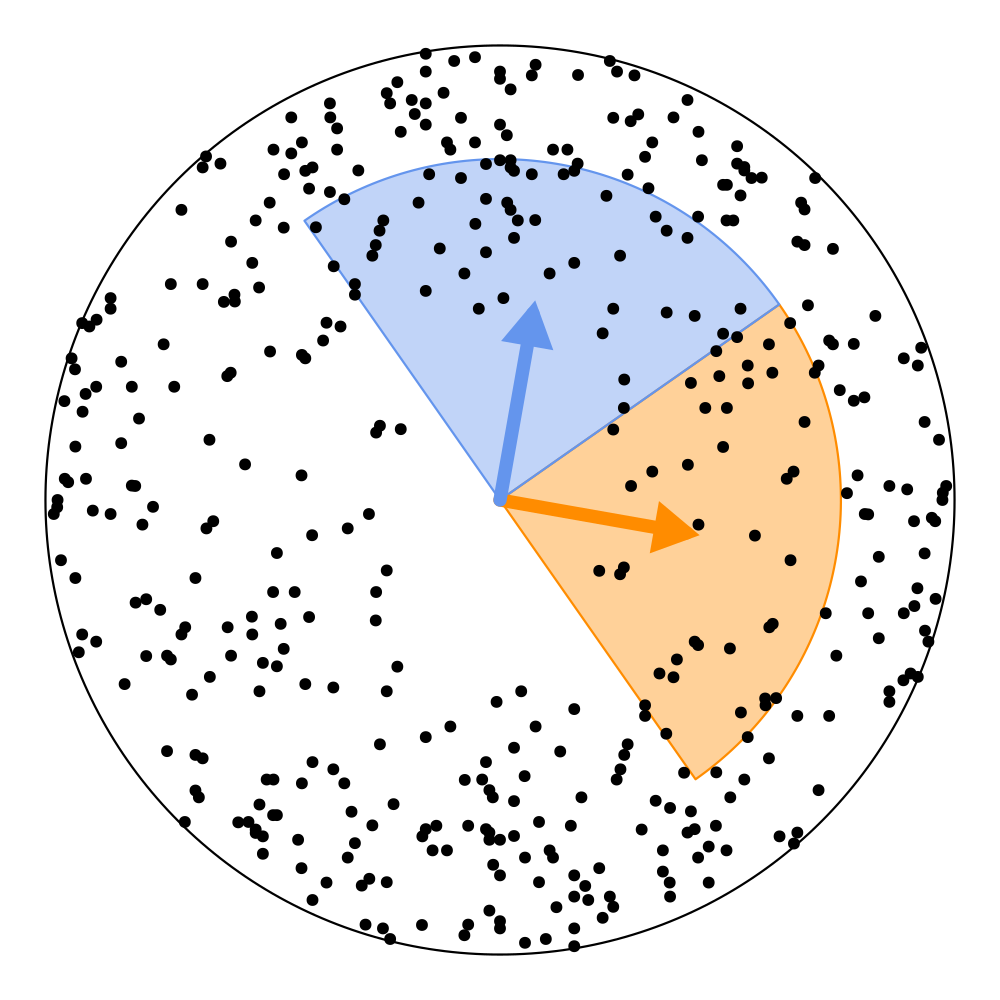}\vphantom{\includegraphics[width=0.4\textwidth, valign=c]{img/real-data/amacrine-marked.png}}}\hfill
 \subfloat[\emph{off} cells]{\includegraphics[width=0.19\textwidth, valign=c]{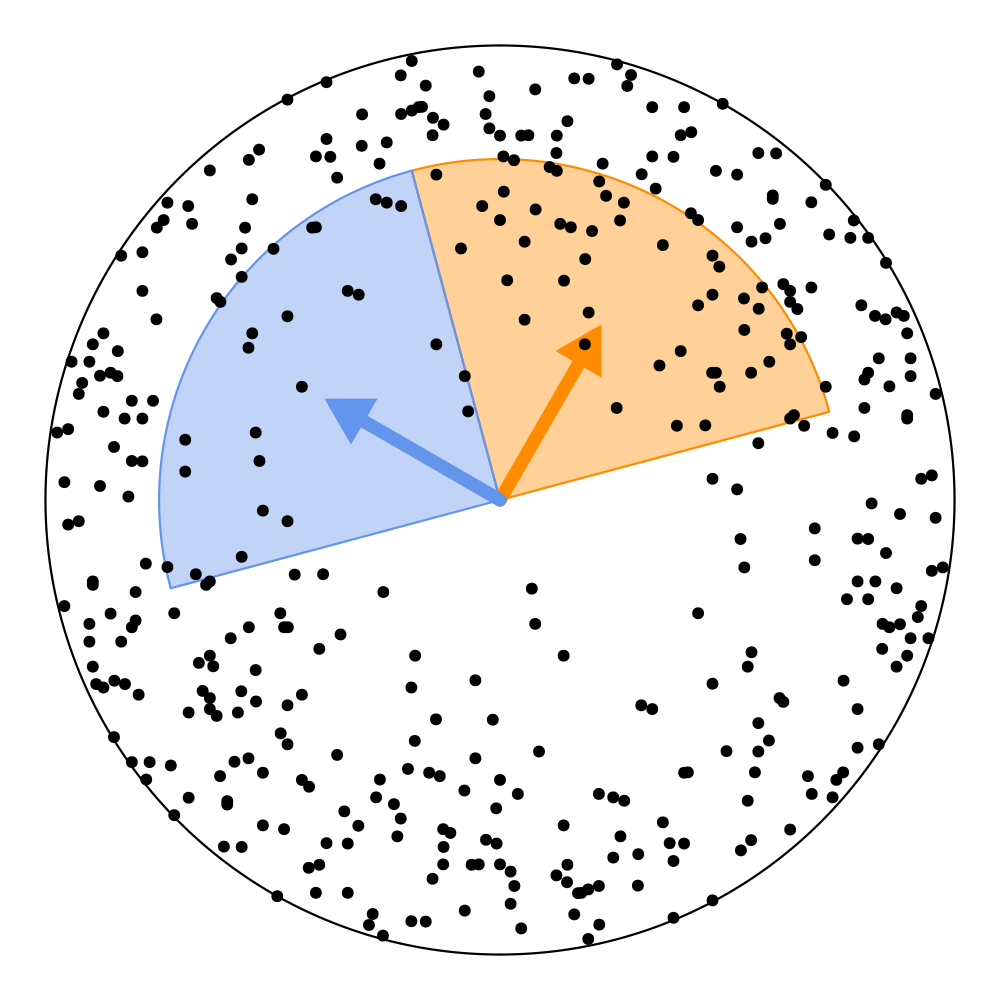}\vphantom{\includegraphics[width=0.4\textwidth, valign=c]{img/real-data/amacrine-marked.png}}}
    \caption{Overview of the amacrine cell data set. (a) Marked point pattern. (b) - (d) Fry points with norm bounded by $0.12$. The sector $S(\alpha_1, \frac{\pi}{4}, 0.09)$ is shown in orange and in  $S(\alpha_2, \frac{\pi}{4}, 0.09)$ in blue. Values of $\alpha_1, \alpha_2$ are given in the text.}
    \label{fig:amacrine}
\end{figure}

Due to the stochasticity in the Monte Carlo test, we run all tests $1000$ times. Table~\ref{tab:amacrine-results} reports the average $p$-values including the standard deviation.

\begin{table*}[th]\centering
\renewcommand{\arraystretch}{1.5}
\begin{tabularx}{\textwidth}{Xrccccc} \toprule
& & \multicolumn{5}{c}{$r_{\max}$} \\ \cmidrule(l{8pt}){3-7}
& $M$ & $0.08$ & $0.09$ & $0.10$ & $0.11$ & $0.12$ \\ \midrule
\multicolumn{7}{@{}l}{\textbf{\emph{on} cells}} \\
integral ordering & $99$ & $0.292$ ($0.046$) & $0.416$ ($0.047$) &$ 0.400$ ($0.049
$) & $0.296$ ($0.045$) &$0.108$ ($0.030$)\\
& $499$ & $0.288$ ($0.020$) & $0.414$ ($0.022$) & $0.393$ ($0.022$)& $0.291$ ($0.020$)& $0.101$ ($0.013$) \\
ERL ordering & $99$& $0.187$ ($0.096$) & $0.199$ ($0.106$) & $0.216$ ($0.115$) & $0.183$ ($0.075$) & $0.074$ ($0.044$)\\
& $499$ & $0.151$ ($0.056$) & $0.160$ ($0.057$) & $0.179$ ($0.063$)& $0.145$ ($0.049$) & $0.046$ ($0.028$)\\
\multicolumn{7}{@{}l}{\textbf{\emph{off} cells}} \\
integral ordering  & $99$& $0.178$ ($0.038$) & $0.051$ ($0.019$)& $0.017$ ($0.008
$)& $0.012$ ($0.004$)& $0.011$ ($0.003$) \\
& $499$ & $0.170$ ($0.017$)& $0.043$ ($0.009$) & $0.009$ ($0.004$) & $0.004$ ($0.002$) & $0.003$ ($0.001$) \\
ERL ordering & $99$& $0.215$ ($0.080$) & $0.081$ ($0.053$) & $0.029$ ($0.024$) &$0.016$ ($0.011$) & $0.013$ ($0.013$)\\
& $499$ & $0.142$ ($0.049$) & $0.048$ ($0.024$) & $0.012$ ($0.011$) & $0.007$ ($0.008$)& $0.004$ ($0.006$)\\
\multicolumn{7}{@{}l}{\textbf{unmarked pattern}} \\
integral ordering  & $99$& $0.249$ ($0.041$)& $0.257$ ($0.047$)& $0.236$ ($0.043$) & $0.229$ ($0.042$)& $0.183$ ($0.036$)\\
& $499$ & $0.243$ ($0.020$) & $0.251$ ($0.020$) & $0.229$ ($0.018$) & $0.222$ ($0.018$)& $0.174$ ($0.016$)\\
ERL ordering & $99$ & $0.614$ ($0.117$)& $0.637$ ($0.121$) & $0.655$ ($0.127$) & $0.615$($0.129$) & $0.545$ ($0.129$)\\
& $499$ & $0.634$ ($0.063$)& $0.651$ ($0.064$)& $0.699$ ($0.062$) & $0.679$ ($0.059$) & $0.569$ ($0.074$)\\
\bottomrule 
\end{tabularx}
\caption{Results of the nonparametric isotropy tests using the random group-wise rotation for the amacrine cell data set. Shown are the average $p$-values and in parentheses the empirical standard deviations of $1000$ Monte Carlo tests. Every individual test is based on $M$ bootstrap samples.}
\label{tab:amacrine-results}
\end{table*}

Using the group-wise rotation resampling scheme in combination with the pointwise contrast of the sector $K$-functions, we obtain a series of small $p$-values for the pattern containing the \emph{off} cells. When $r_{\max} \geq 0.10$, there is strong evidence that this pattern is anisotropic. This conclusion coincides with the analysis of \citet{wong_isotropy_2016} and \citet{rajala_tests_2022} who both identified the pattern as anisotropic. 

In contrast to \citet{wong_isotropy_2016} we obtain large $p$-values for both the \emph{on} cells and the total unmarked pattern. Using our resampling scheme there is no evidence to reject the null hypothesis of isotropy in these two cases. The conclusion is also consistent over all five upper bounds that we considered.

Table~\ref{tab:amacrine-results} also highlights that the Monte Carlo $p$-value estimator obtained from $M$ bootstrap samples using the extreme rank length ordering has a higher variance compared to the one using the integral ordering. This is the case for both numbers of bootstrap samples that we investigated here. Consequently, we should use in general a higher number of samples $M$ when using the extreme rank length ordering.

\section{Discussion}\label{sec:discussion}

Previous works on isotropy tests for spatial point processes often rely on the assumption that a parametric isotropic null model is available. Since this is not always the case in practice, we aimed for a new nonparametric isotropy test that can be used without any parametric model. The focus of this paper was to propose a new resampling scheme for a Monte Carlo test. Compared to the stochastic reconstruction approach by \citet{wong_isotropy_2016} we do not generate entire new point patterns but only Fry-type point patterns containing the resampled pairwise difference vectors. Our proposed random rotation approaches are also free of parameters to be selected and computationally cheap. The only parameters that one has to choose for the derived isotropy tests are the functional contrast summary statistic including its parameters, and the upper bound for and the type of ordering used for the comparison in the Monte Carlo test. These choices need to be made in any type of null hypothesis testing for point processes.

We performed an extensive simulation study investigating the empirical power and size of the tests when using the contrast of two sector $K$-functions as the contrast statistic. We compared all three proposed rotation schemes as well as two types of orderings.

In case of the regular Strauss process with geometric anisotropy mechanism, all tests performed well even though the tests with either the individual rotation or the pairwise rotation approach were sometimes conservative. The empirical sizes met the nominal level for all parameter combinations when choosing the integral ordering and the group-wise rotation. In particular the anisotropy for hard-core point processes is found already for point patterns with around $300$ points. This is an improvement compared to the results stated in \citet{wong_isotropy_2016} where it was difficult to detect anisotropic Gibbs hard-core point process.

For the two clustered processes that we considered we observed clear differences between the three rotation techniques. In particular, only with the group-wise rotation it is possible to control the type I error. The Matérn-like cluster process with elliptical clusters represents the harder problem as in this case the anisotropy is induced by a non-uniform distribution for the cluster directions. The number of clusters in our study was sometimes too small to correctly identify the uniform distribution of these directions when having an isotropic pattern. For the geometric anisotropic Thomas-like cluster process, the influence of the number of clusters is less severe in view of the empirical sizes. This is due to the fact that all clusters are aligned and only the shape of the clusters determines whether the process is isotropic or anisotropic. 

None of our proposed tests is valid if only a few linear structures are visible in the point pattern as it was the case for the Poisson line cluster point process in our study. The null hypothesis of isotropy was rejected for far too many isotropic patterns. As for the Matérn-like cluster process this was due to having too few lines, i.e. samples of the orientation distribution, in order to be able to identify the uniform distribution. Consequently, we cannot trust the almost perfect empirical powers as the reason of the rejection is not necessarily the different directional distribution. 

The same problem also occurred in the nonparametric isotropy tests of \citet{sormani_second_2020} which are based on classical uniformity tests on $\So$. The spherical tests assume an i.i.d. sample of the directional distribution which is not met in case of Fry points, in particular for clustered processes. With our proposed group-wise rotation approach we were still able to obtain valid tests for geometrically transformed anisotropic cluster processes which is again an improvement as the spherical tests are not applicable for these types of processes.

In conclusion, our resampling scheme for spatial point processes using random group-wise rotations is applicable for many types of point processes and different types of anisotropy. The obtained bootstrap samples can be used in the very general isotropy testing framework discussed in \citet{rajala_tests_2022}. Therefore one can use any summary function derived from the second-order characteristic of the point process instead of only using the sector $K$-function that was used in our study. Since our bootstrap samples from the group-wise rotation are no longer point symmetric around the origin we advise to use only directed sets which are not point symmetric when the point pattern is not very large. Otherwise possible artifacts in the observed Fry points coming from the small number of samples of the orientation distribution are enhanced. This results in higher type I error rates.

Not only other contrast statistics but also other ordering can be used. For example one can incorporate the variance of the second-order characteristic estimators. With these adjustments we expect to obtain even more powerful isotropy tests.

\section*{Acknowledgements}

We thank Saskia Ostmann for the implementation of the group-wise rotation scheme.

\bibliographystyle{abbrvnat}
\bibliography{references} 

\section{Appendix. Additional Figures}

\begin{figure}[th]
    \captionsetup[subfloat]{aboveskip=0.25pt}
	\subfloat[Observed Fry plot]{\includegraphics[width=0.245\textwidth]{img/fry-points/iso-cluster-fry.png}} \hfill
	\subfloat[]{\includegraphics[width=0.245\textwidth]{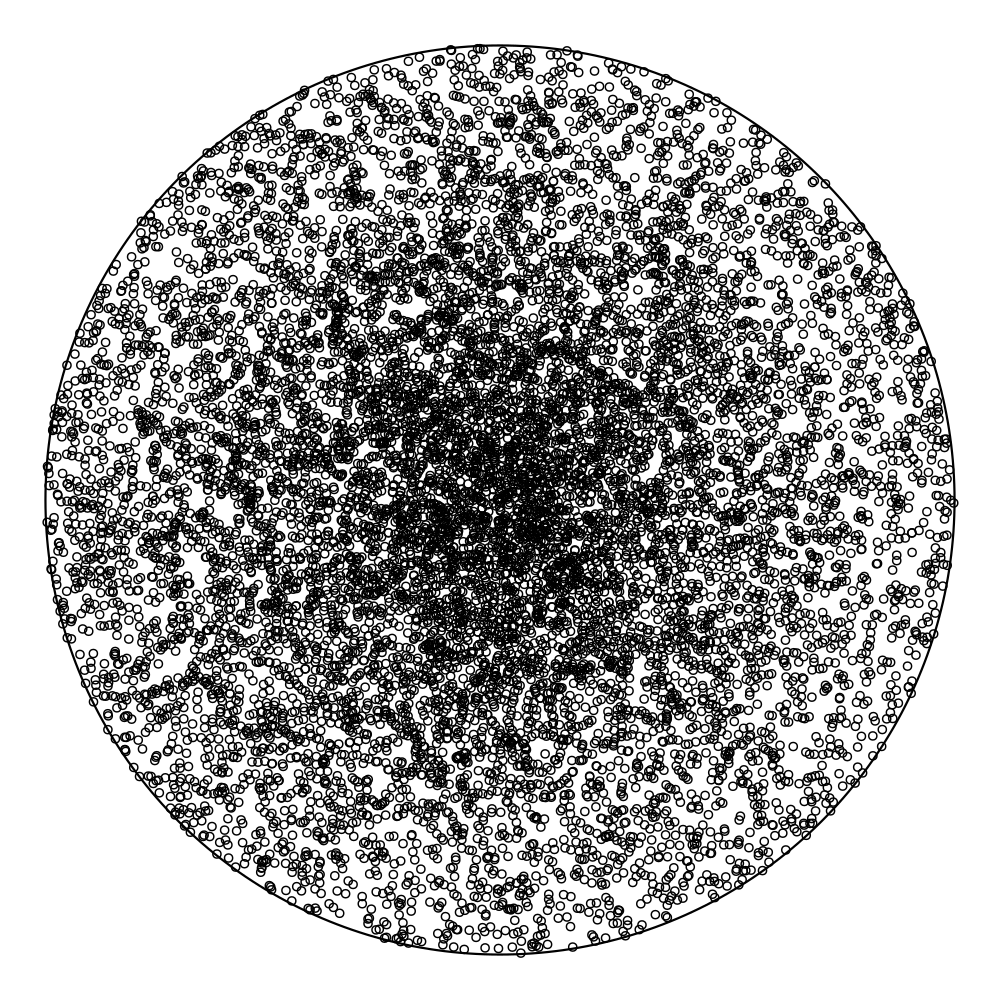}}\hfill 
	\subfloat[]{\includegraphics[width=0.245\textwidth]{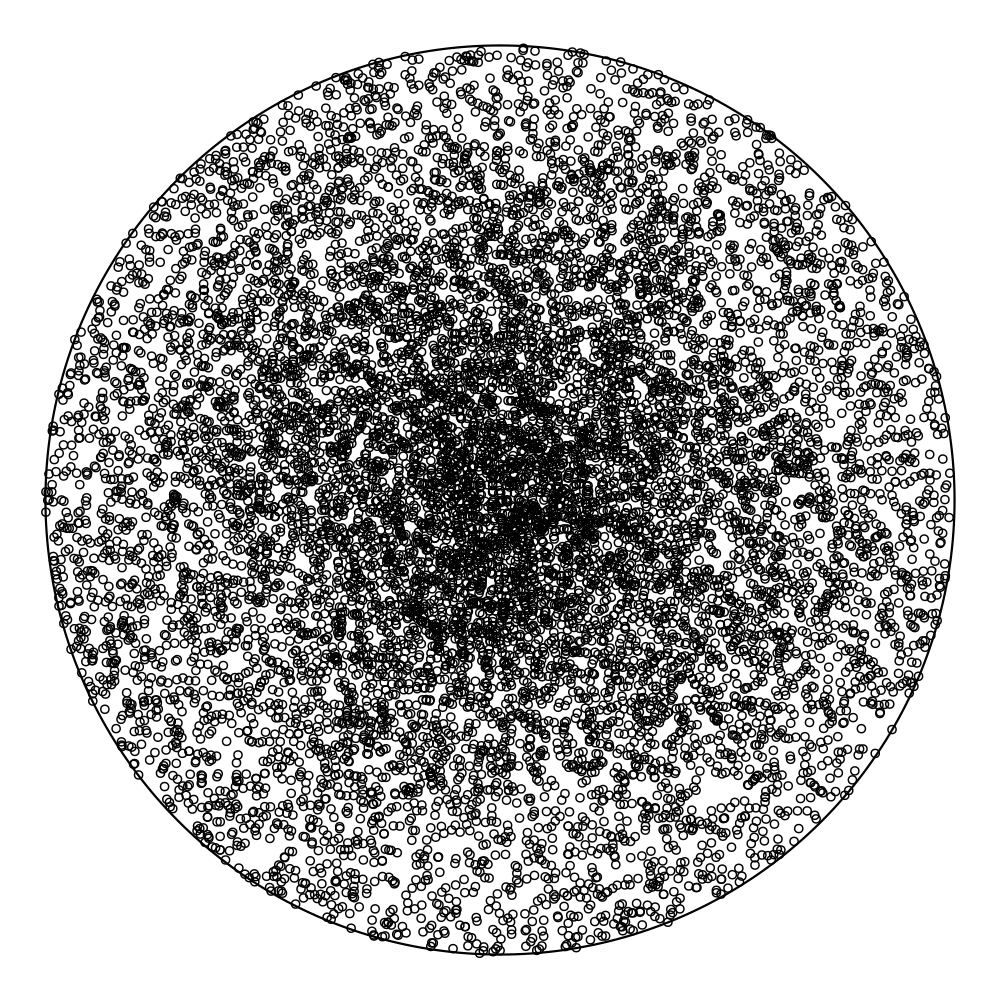}}\hfill
	\subfloat[]{\includegraphics[width=0.245\textwidth]{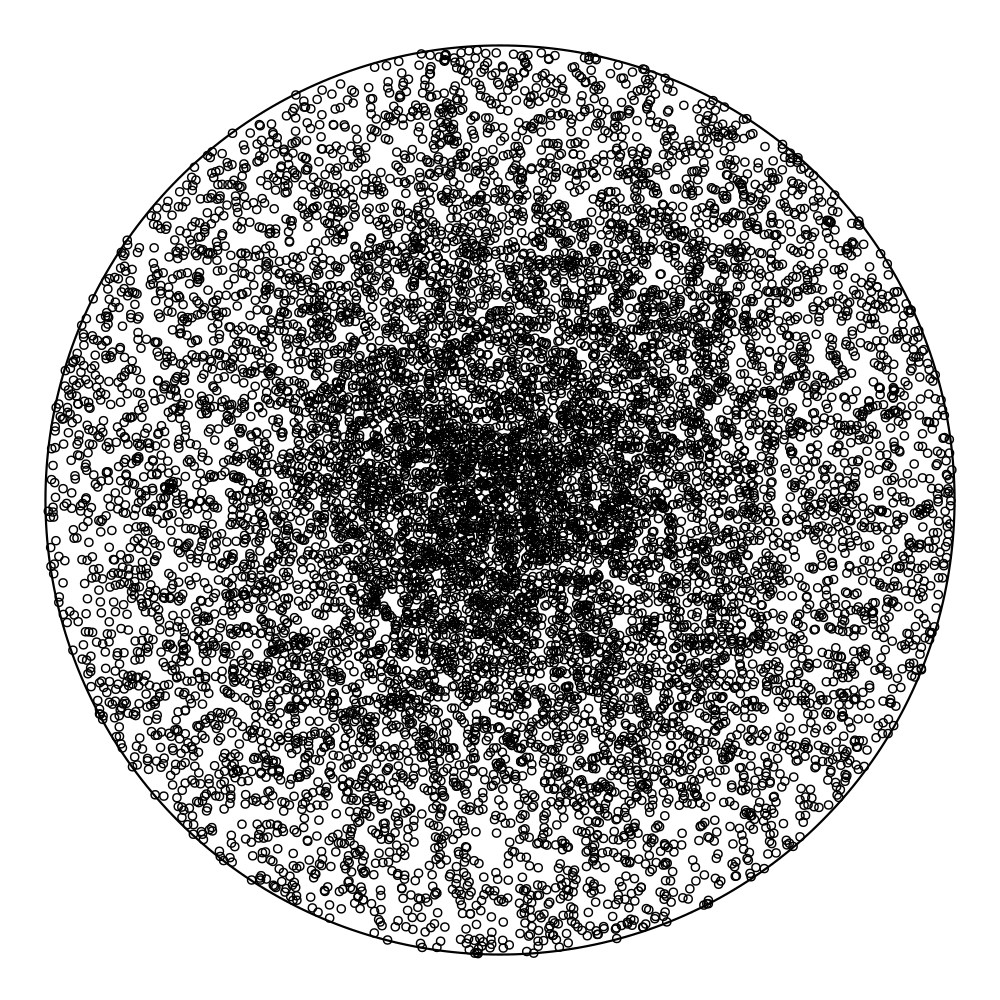}}\\
 
    \subfloat[Observed Fry plot]{\includegraphics[width=0.245\textwidth]{img/fry-points/aniso-cluster-fry.png}} \hfill
	\subfloat[]{\includegraphics[width=0.245\textwidth]{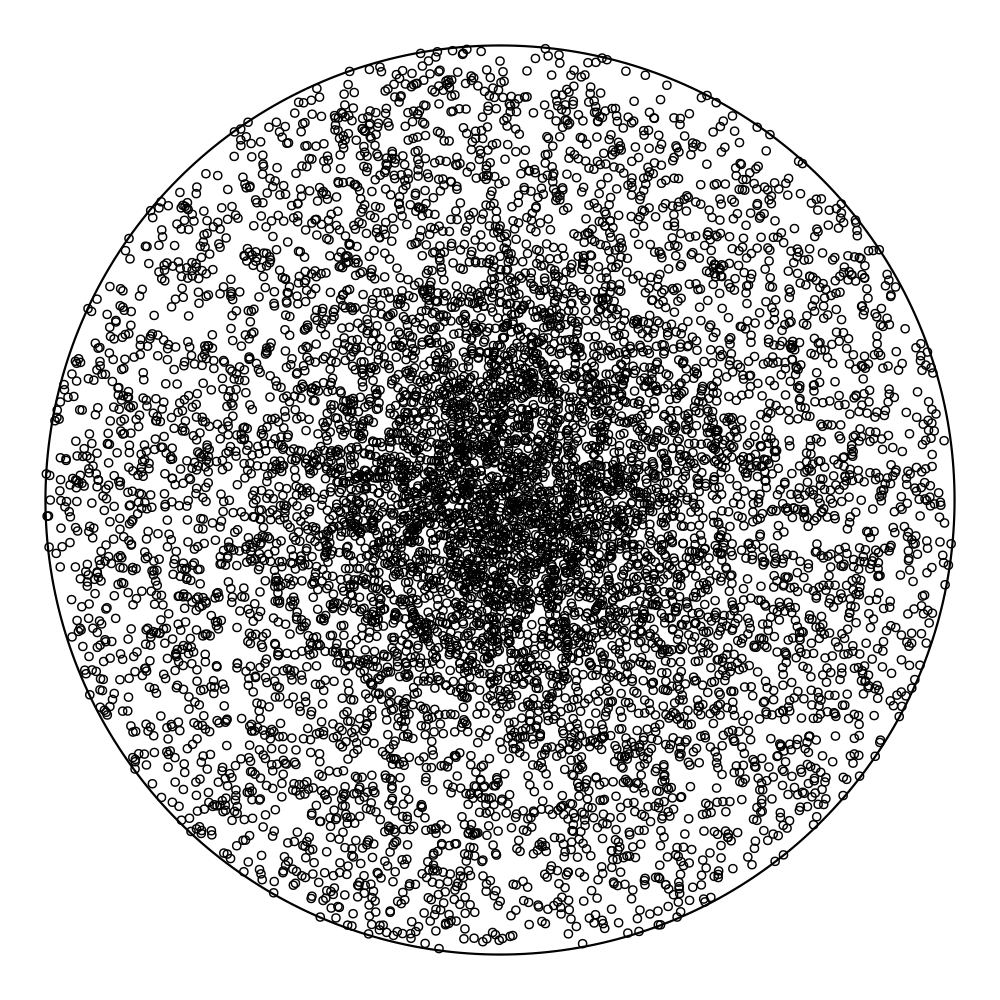}}\hfill 
	\subfloat[]{\includegraphics[width=0.245\textwidth]{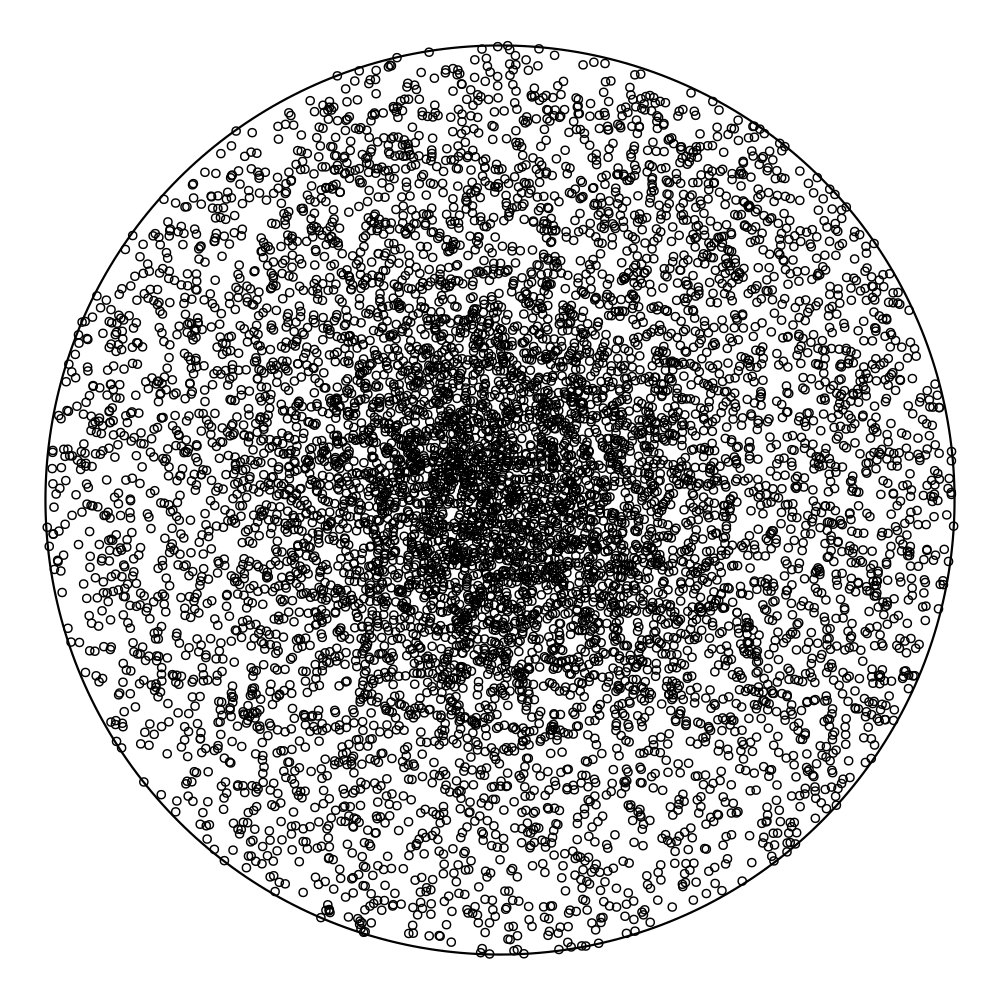}}\hfill
	\subfloat[]{\includegraphics[width=0.245\textwidth]{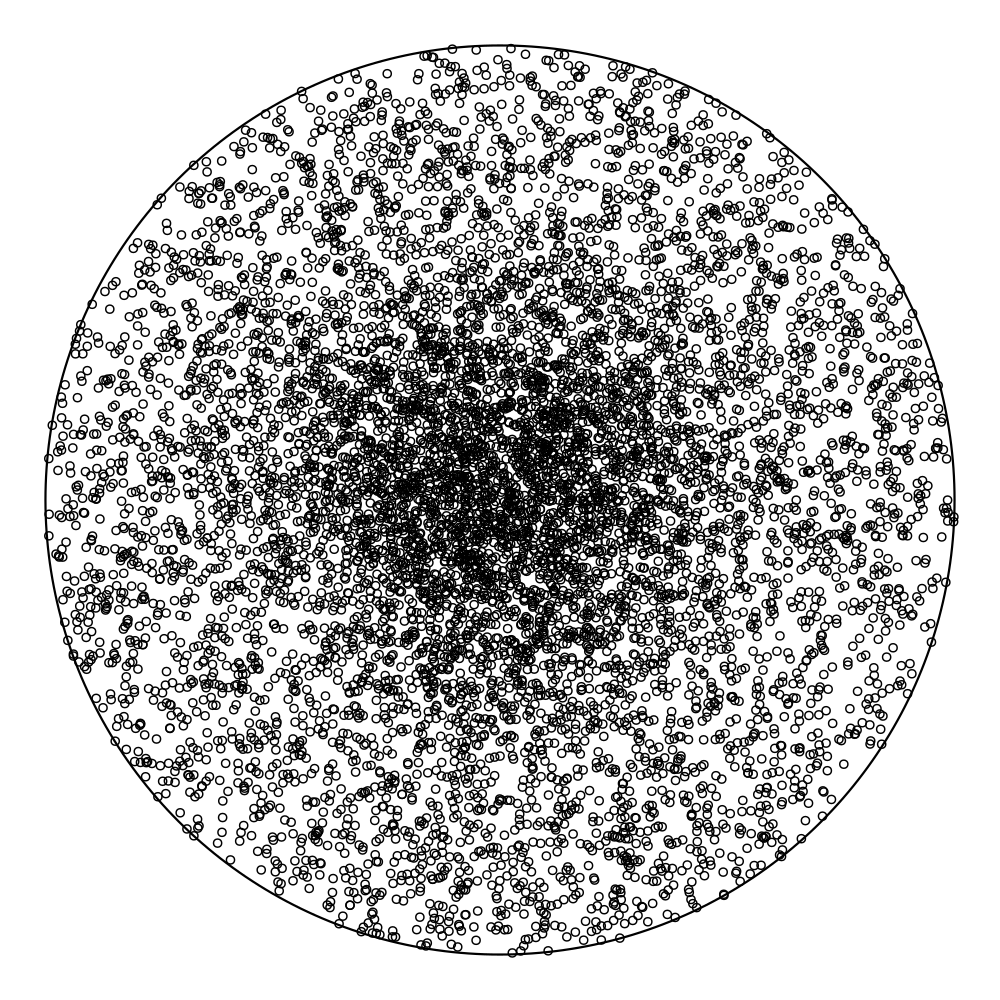}}
	\caption{Examples of Fry plots generated for the Monte Carlo test. (a) and (e) display the observed Fry plots from an isotropic cluster process and one with the geometric anisotropy mechanism, respectively. (b) - (d) and (f) - (h) show simulations obtained by random group-wise rotation of the observed Fry points.}
	\label{fig:fry-group-rot-cluster}
\end{figure}

\begin{figure}[th]
    \captionsetup[subfloat]{aboveskip=0.25pt}
	\subfloat[Observed Fry plot]{\includegraphics[width=0.245\textwidth]{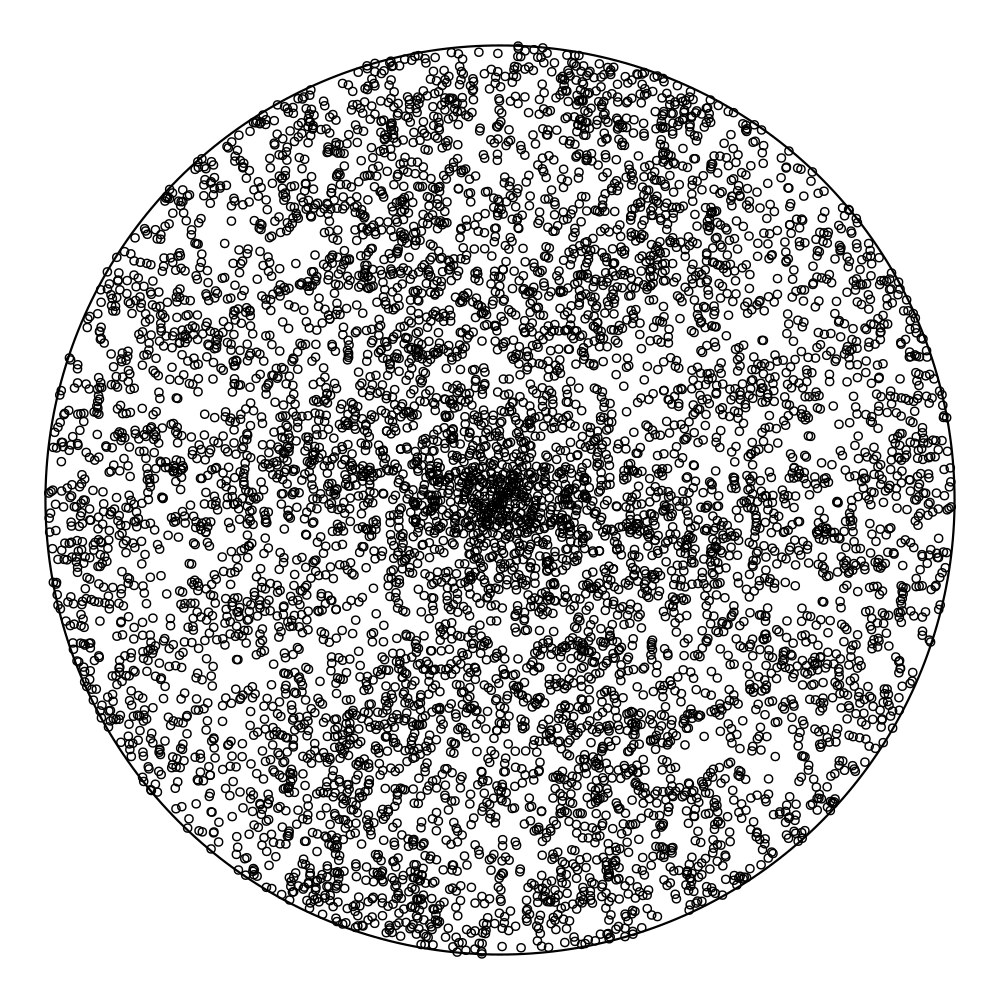}} \hfill
	\subfloat[]{\includegraphics[width=0.245\textwidth]{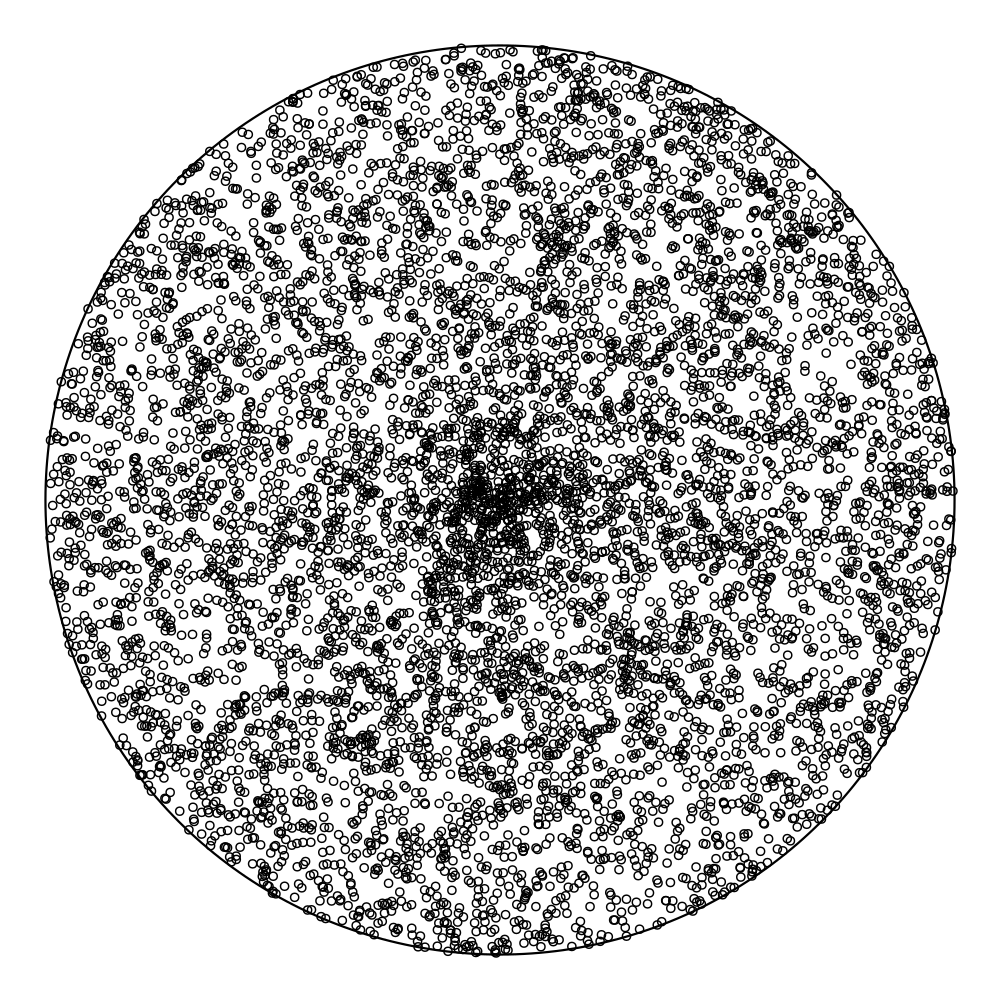}}\hfill 
	\subfloat[]{\includegraphics[width=0.245\textwidth]{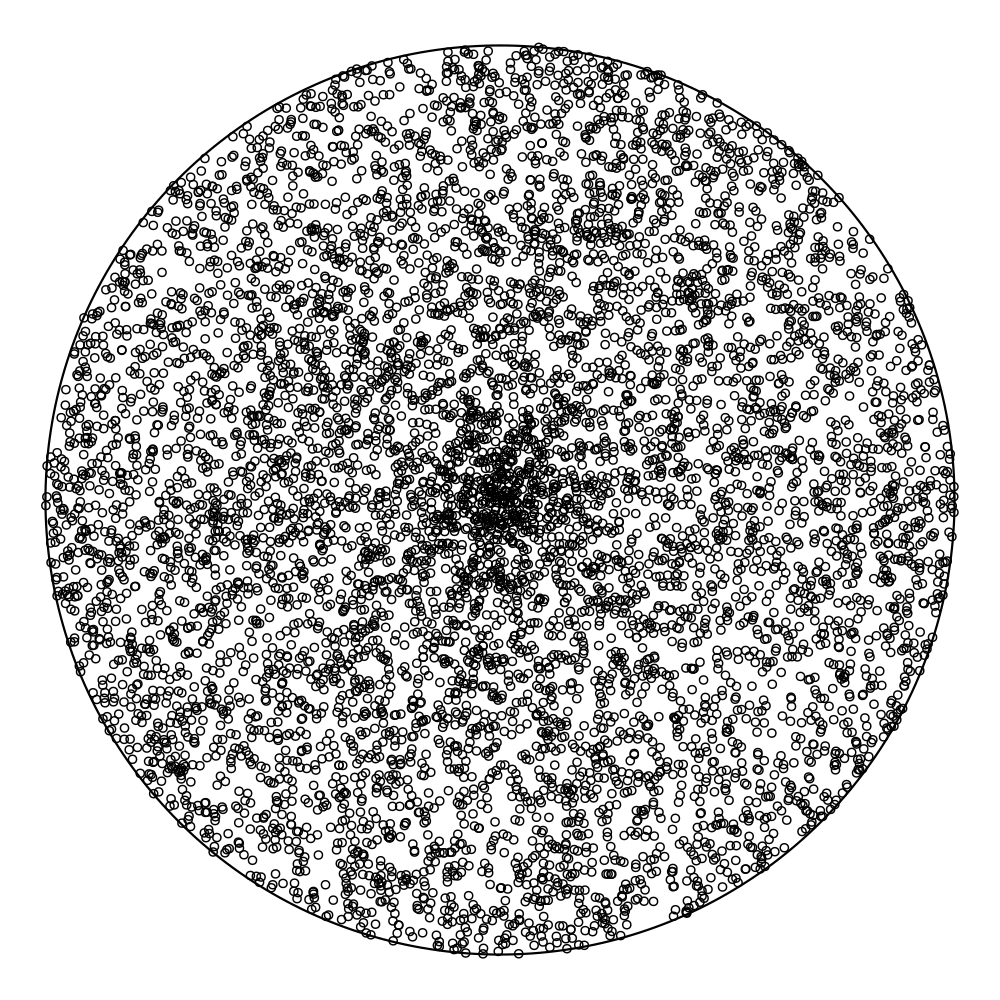}}\hfill
	\subfloat[]{\includegraphics[width=0.245\textwidth]{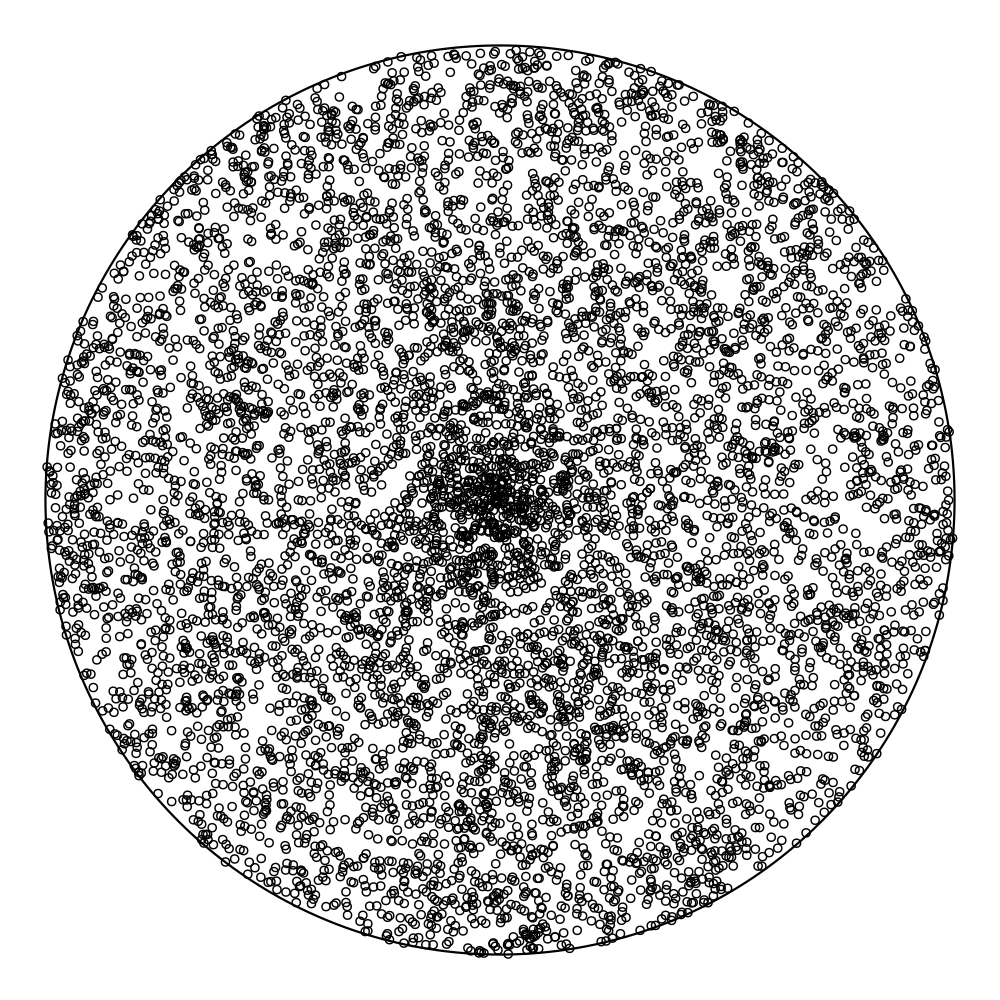}}\\
 
    \subfloat[Observed Fry plot]{\includegraphics[width=0.245\textwidth]{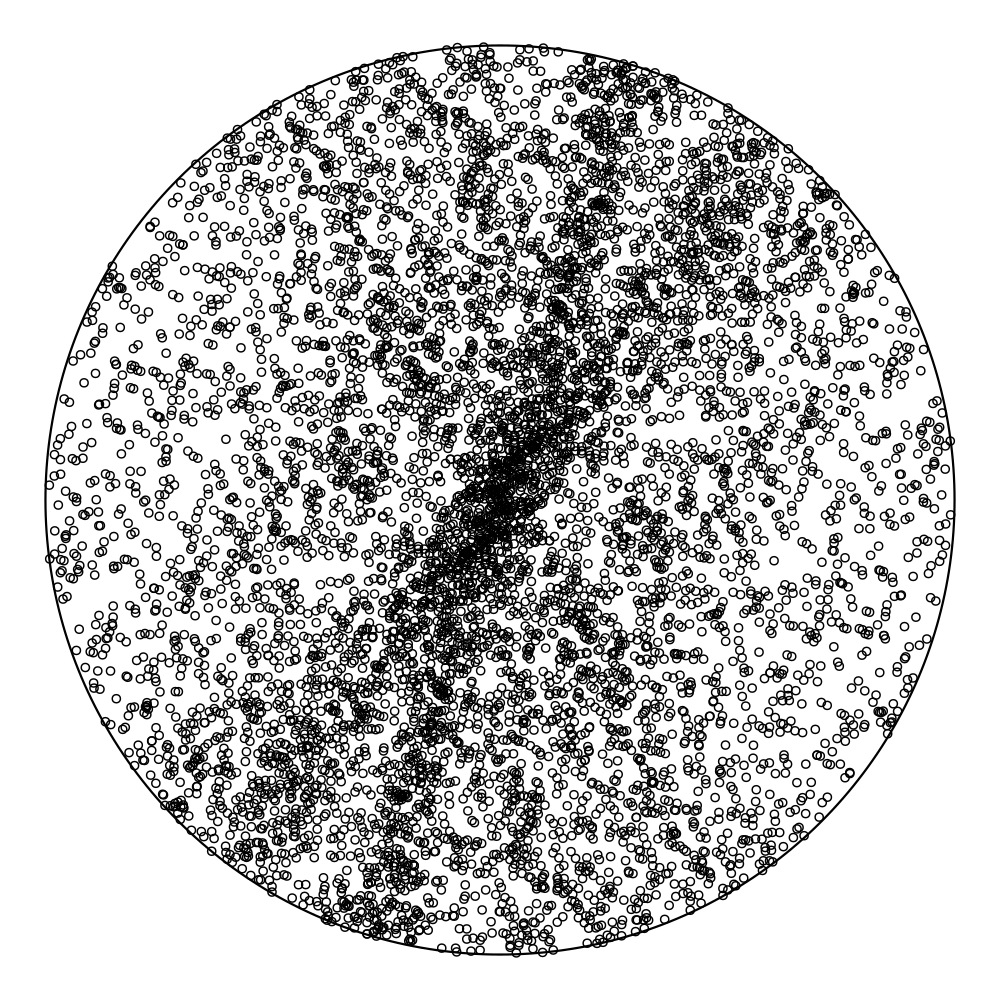}} \hfill
	\subfloat[]{\includegraphics[width=0.245\textwidth]{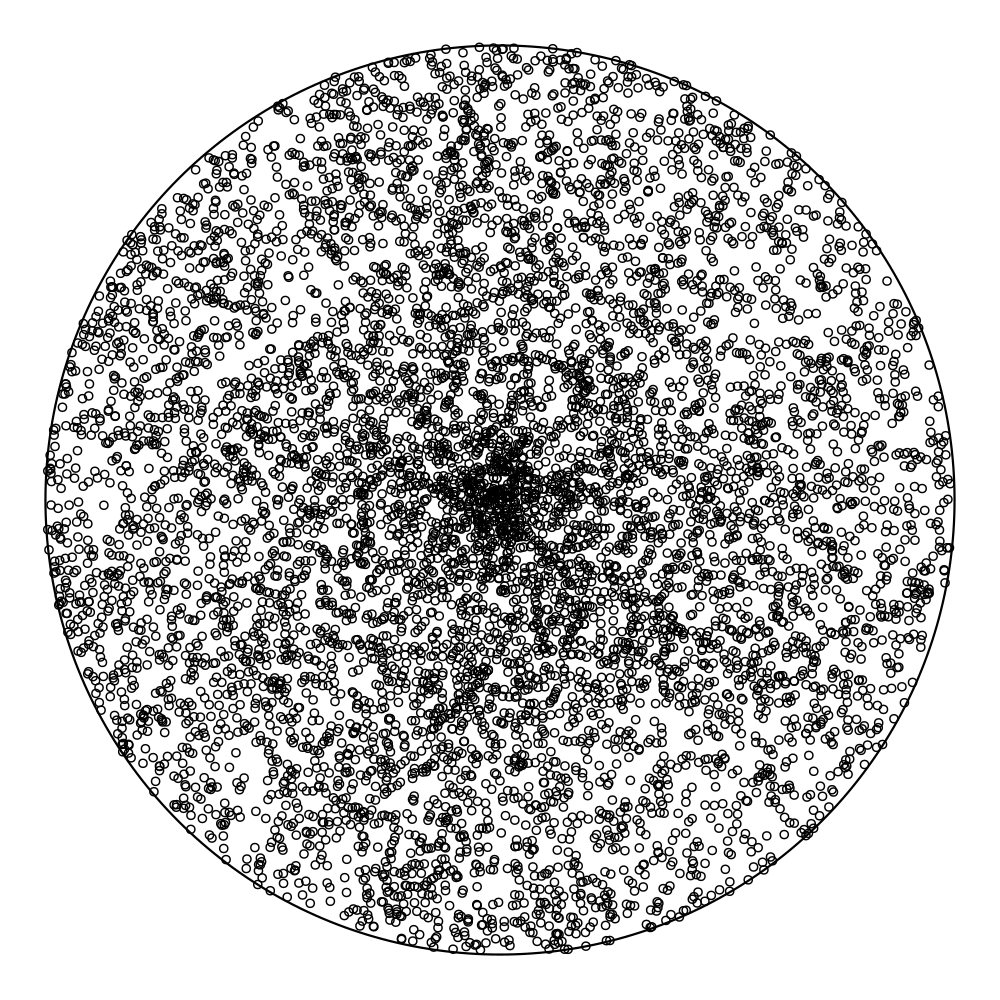}}\hfill 
	\subfloat[]{\includegraphics[width=0.245\textwidth]{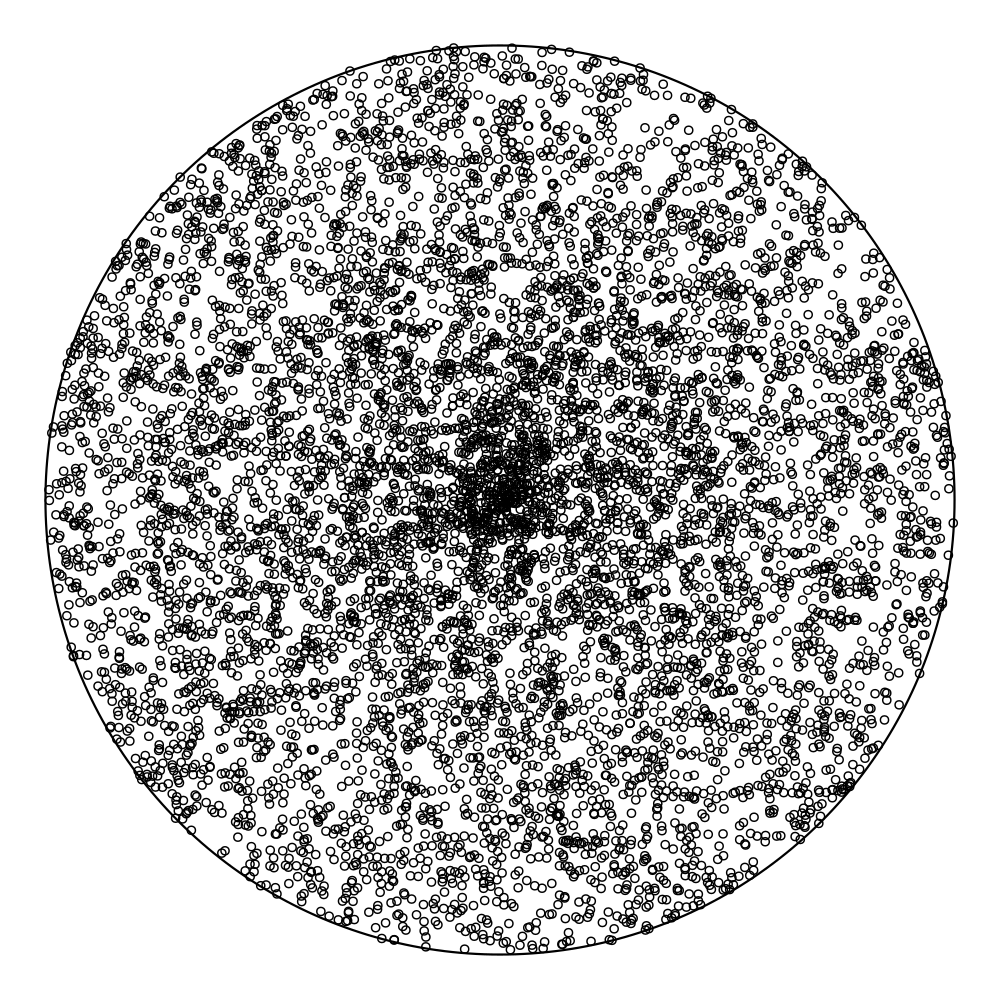}}\hfill
	\subfloat[]{\includegraphics[width=0.245\textwidth]{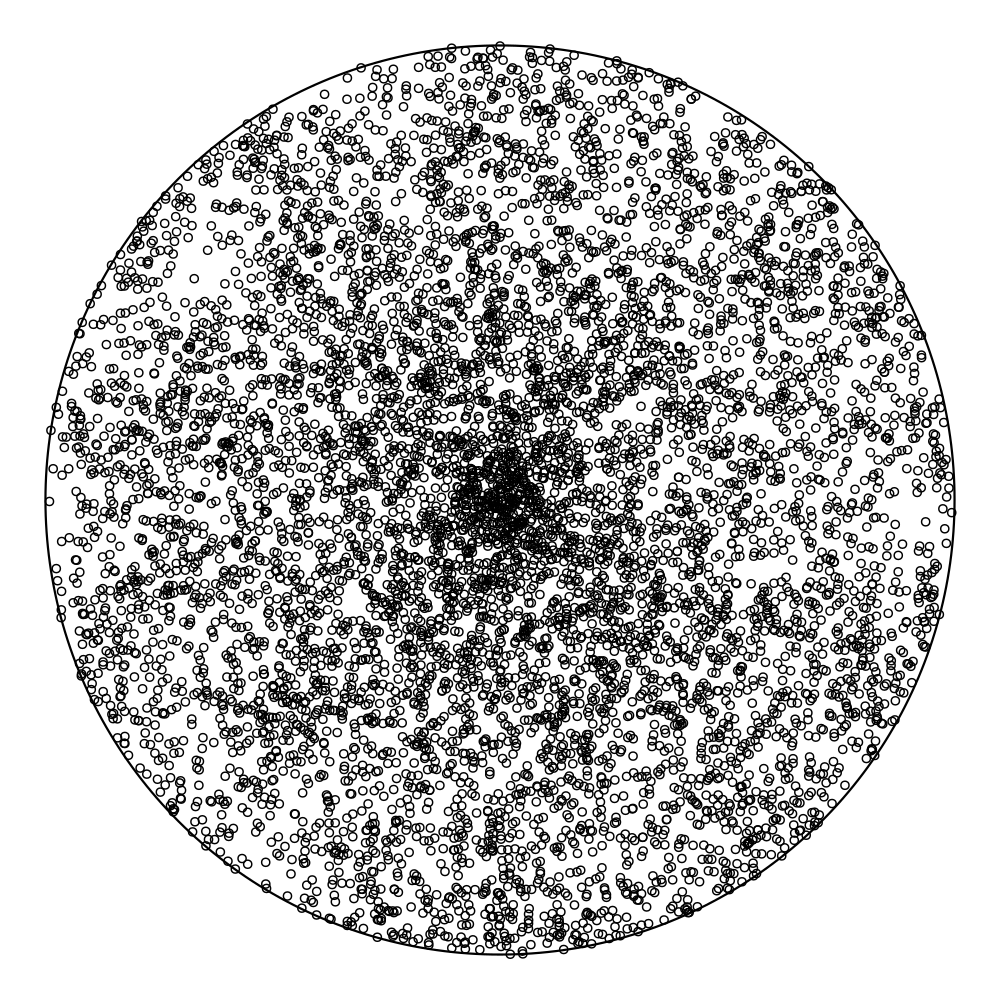}}
	\caption{Examples of Fry plots generated for the Monte Carlo test. (a) and (e) display the observed Fry plots from an isotropic and an anisotropic line cluster process, respectively. (b) - (d) and (f) - (h) show simulations obtained by random group-wise rotation of the observed Fry points.}
	\label{fig:fry-group-rot-line}
\end{figure}

\end{document}